\theoremstyle{plain}
\newtheorem{lem}{Lemma}
\newtheorem{thm}{Theorem}
\newtheorem{hyp}{Assumption}
\newtheorem{rem}{Remark}
\newtheorem{ex}{Example}
\theoremstyle{definition}
\DeclareMathOperator*{\argmin}{argmin}
\DeclareMathOperator*{\supp}{supp}
\begin{document}

\title{Asymptotic indifference pricing in exponential L\'{e}vy
  models\footnote{The second author would like to thank Jan Kallsen
    for insightful discussions and comments.}}
\author{Cl\'ement M\'enass\'e \\ Laboratoire de Probabilit\'es et Mod\`eles Al\'eatoires,\\ Universit\'e Paris
    Diderot, Paris, France  \\and\\Centre d'Expertise en \'Etudes et
    Mod\'elisations \'Economiques,\\  GDF SUEZ \and Peter
    Tankov\footnote{The research of the second author is partially supported  by the grant of the Government of Russian Federation 14.12.31.0007.} \\ Laboratoire de Probabilit\'es et Mod\`eles Al\'eatoires,\\ Universit\'e Paris
    Diderot, Paris, France\\
    and\\ International Laboratory of Quantitative Finance,\\
  National Research University Higher School of Economics, Moscow,
  Russia.\\ Email: \texttt{tankov@math.univ-paris-diderot.fr}  }
\date{}
\maketitle

\begin{abstract}

Financial markets based on L\'evy processes are typically incomplete
and option prices depend on risk attitudes of individual agents. In
this context, the notion of utility indifference price has gained popularity in the academic circles. Although theoretically
very appealing, this pricing method remains difficult to apply in
practice, due to the high computational cost of solving the nonlinear
partial integro-differential equation associated to the indifference
price. 
In this work, we develop closed form approximations to exponential
utility indifference prices in exponential L\'evy
models. To this end, we first establish a new non-asymptotic
approximation of the indifference price which extends earlier results on small
risk aversion asymptotics of this quantity. Next, we use this formula to derive a
closed-form approximation of the indifference price by treating the L\'evy model as a perturbation of the Black-Scholes model.
This extends the methodology introduced in a recent paper for smooth linear
functionals of L\'evy processes \cite{cdk} to nonlinear and non-smooth
functionals. Our closed formula represents the indifference price as
the linear combination of the Black-Scholes price and correction terms
which depend on the variance, skewness and kurtosis of the underlying
L\'evy process, and the
  derivatives of the
Black-Scholes price. 
As a by-product, we obtain a simple explicit formula for the spread
between the buyer's and the seller's indifference price. This formula
allows to quantify, in a model-independent fashion, how sensitive a given product is to jump risk
in the limit of small jump size.

\end{abstract}
\medskip

\textbf{Keywords:} Lévy process, Utility indifference price, Mean-variance hedging, Asymptotics


\section{Introduction}
The celebrated Black-Scholes model, which uses the geometric Brownian
motion to describe the dynamics of the assets, is a cornerstone of the
modern mathematical finance. However, it fails to reproduce
significant features of empirically observed stock returns and option
prices, such as fat-tailed distribution {and} implied volatility smile. For this reason, various extensions of the Black-Scholes framework have been developed in the literature. One popular approach is to replace the geometric Brownian motion with the exponential of a L\'evy process. 

L\'evy processes allow to quantify
market risk much more precisely, but the option pricing problem for
such processes becomes more involved. Exponential L\'evy models
typically correspond to incomplete financial markets, meaning that the
agents will not necessarily agree on a unique price for a derivative
product. Instead, the price at which a market agent will accept to buy
or sell a given derivative will depend on his / her risk aversion and
preferences. A commonly used pricing paradigm in this context is the
indifference pricing approach \cite{hodges1989optimal}, which states
that a fair price $p$ of a contingent claim $H$ for a market agent
with utility function $U$ and initial wealth $V_0$ is
the one at which the agent is indifferent between entering and not
entering the transaction:
\begin{align}
\max_\vartheta\mathbb E\left[U\left(V_0 + \int_0^T \vartheta_t
    dS_t\right)\right] = \max_\vartheta \mathbb
E\left[U\left(V_0 + {p} + \int_0^T \vartheta_t dS_t {- H}\right)\right],\label{priceimp}
\end{align}
where $S$ denotes the stock price and the maximum is taken over a
suitable set of admissible trading strategies $\vartheta$. 

In this
paper, we focus more specifically on the exponential (constant
absolute risk aversion) utility function $U(x) = -e^{-\alpha x}$,
where $\alpha>0$ is the risk aversion parameter. 
This leads to a more explicit form for the indifference price:
$$
p^\alpha=\frac{1}{\alpha}\log \frac{\min_{\vartheta } \mathbb
E\left[\exp\left(  -\alpha\int_0^T \vartheta_t dS_t
    {+ \alpha H}\right)\right]}{\min_{\vartheta }\mathbb E\left[\exp\left(-\alpha \int_0^T \vartheta_t
    dS_t\right)\right] }.
$$
Additionally, using the so-called \emph{minimal entropy martingale
  measure} (MEMM) denoted by $\mathbb Q^*$ (see equation \eqref{memm}), the exponential
utility indifference price can be expressed {through a single} optimization
problem:
\begin{equation}
p^\alpha = \frac{1}{\alpha} \log \min_\vartheta \mathbb
E^{*}\left[\exp\left(  -\alpha\int_0^T \vartheta_t dS_t
    {+ \alpha H}\right)\right],\label{ipintro}
\end{equation}
where $\mathbb E^{*}$ stands for the expectation under the measure
$\mathbb Q^*$. 

Nevertheless, computing the utility indifference price \eqref{ipintro} of even a simple European option in
an exponential L\'evy model boils down to solving a non-linear
integro-differential equation (see e.g.,
\cite{jaimungal2005pricing,wu2009pricing}), which is a tough numerical
problem. This makes this approach unsuitable in a production
environment of a bank, where prices must usually be evaluated in real
time. For this reason, asymptotic approximations for
the indifference price in incomplete markets {are of great importance}. 

One approach is to
study the asymptotics when the number of contingent claims or,
equivalently, the risk aversion $\alpha$, is small
\cite{exputility,Kramkov2007,Kramkov2006,mania2005,sixauthor,becherer2006bounded}. In
particular, for the  exponential
utility function, it is known
\cite{mania2005,sixauthor,becherer2006bounded} that {under mild assumptions}, as $\alpha$ tends
to $0$, the indifference price $p^\alpha$ converges to $\mathbb
E^{*}[H]$, the expectation of the
option's pay-off computed under the MEMM, and that the optimal strategy converges to
the quadratic hedging strategy under the MEMM. 

However, approximating the indifference price by the expectation under
the MEMM fails to take
into account the nonlinear features of the price. For this reason, in
\cite{exputility}, the authors compute the first-order correction to
the exponential utility indifference price, and show that it is
proportional to the residual risk of the quadratic hedging strategy
under the MEMM:
\begin{align}
p^\alpha = \mathbb E^{*}[H] + \frac{\alpha}{2} \min_\vartheta
\mathbb E^{*}\left[\left(\int_0^T \vartheta_t dS_t - H +
    \mathbb E^{*}[H]\right)^2\right] + o(\alpha),\quad
\alpha \to 0. \label{prass}
\end{align}
These results are obtained under assumptions which is not straightforward to check for stock
price models with jumps (in particular, Assumption 2 in \cite{exputility}). Similar results for general path-dependent
claims on a Brownian filtration are obtained in
\cite{monoyios2013malliavin} using Malliavin calculus techniques.  Our aim in this paper is therefore to obtain
precise approximations for the indifference prices of options in  exponential
L\'evy models. 

First, in Theorem \ref{thm1}, we establish an approximation for
the indifference price of the following form: 
\begin{align}
p = \mathbb E^{*}[H] + \frac{\alpha}{2} \min_\vartheta
\mathbb E^{*}\left[\left(\int_0^T \vartheta_t dS_t - H +
    \mathbb E^{*}[H]\right)^2\right] + \text{Error}(\alpha). \label{prnoass}
\end{align}
Unlike previous
studies, our formula is non-asymptotic, in the sense that we provide
an explicit bound on the error which is valid
for all values of $\alpha$ smaller than a certain positive constant
rather than asymptotically as $\alpha \to 0$. In addition, this
formula is proven under assumptions which are relatively easy to check
in exponential L\'evy models.  The proof of \eqref{prnoass} is
based on an interplay between the primal and the dual formulation of the
indifference pricing problem to obtain an upper and a lower bound on
the price, and is inspired by similar approaches in
\cite{kallsen2013portfolio} and \cite{henderson2002valuation}. 

Next, we use formula \eqref{prnoass} to develop {a closed form approximation to the exponential utility indifference price} in exponential L\'evy models by treating the
L\'evy model as a perturbation of the Black-Scholes model (see Theorem
\ref{thm2}). In view of our
non-asymptotic representation for the indifference price, this boils
down to approximating the expectation of the pay-off under the
MEMM, approximating the residual risk of the quadratic hedging
strategy, and controlling the error term in \eqref{prnoass}. To this end, we use the approach
suggested in a recent paper \cite{cdk}, which consists in introducing
a one-parameter family of L\'evy processes $(X^\lambda_t)_{t\geq 0}$,
where $\lambda = 1$ corresponds to the model of interest, and $\lambda
= 0$ corresponds to the Brownian motion. One can then expand the
quantities of interest in $\lambda$ around $\lambda = 0$. In \cite{cdk} such
expansions were developed for the expectation and the residual risk
of the quadratic hedging strategy in the case of European options with smooth $C^\infty$
pay-offs. In our paper we employ a different technique to prove these expansions for a class of
non-smooth pay-offs including for example the European put option. It
is important to note that in problems of this type, regularity of the
pay-off is an essential property which can strongly influence the
convergence rate (compare for example with \cite{gobet.temam.01}). 

As a result, we obtain an approximate formula for the
indifference price of a European option in an exponential L\'evy model
as a linear combination of the Black-Scholes price and correction
terms which depend on {variance,} skewness and kurtosis {of the underlying L\'evy process} as well as high-order
derivatives and a simple integral functional of the Black-Scholes
option price. Our method is based on an interpolation between a Lévy
process and a Brownian motion and works well when the L\'evy process in question is ``not too
 far'' from the Brownian motion. In a numerical study we compare our
 approximate formula to the exact value obtained by solving the
 integro-differential equation in the Merton jump-diffusion model and
 show that the precision of the approximate formula is {good} for
 realistic parameter values. 

Our approximate formula can be seen as an extension to the nonlinear
utility indifference price of the valuation
methodology based on the expansion around a proxy model, which was developed {by}
E. Gobet and collaborators in the diffusion setting (see e.g., \cite{benhamou2009smart}) and in
\cite{cdk} in the setting of L\'evy processes. It is also related to the ``expansion in the model
space'' technique for utility optimization recently discussed in
\cite{larsen2014expansion}. 

An important by-product of our study is a simple explicit approximate formula for
the spread between the buyer's and the seller's indifference price:
$$
p_{s} - p_b \approx \underbrace{\alpha}_\text{Risk aversion} \times \underbrace{\frac{1}{4}\left(m_4 - \frac{m_3^2}{\bar
    \sigma^2}\right)}_\text{Lévy model}\times  \underbrace{\mathbb E^{BS}\left[\int_0^T\left(S_t^2 \frac{\partial^2
    P_{BS}(t,S_t)}{\partial S^2}\right)^2 dt\right]}_\text{Jump risk
sensitivity of the option},
$$
where $\bar \sigma^2$ is the variance of the L\'evy process at time $1$,
$m_3$ and $m_4$ are the third and fourth moments of the L\'evy
measure and $P_{BS}$ and
$\mathbb E^{BS}$ denote {the} option price {and} {the} expectation computed in
the Black-Scholes
model with volatility $\bar \sigma$.  In
our asymptotic regime, therefore, the spread is decomposed into a product of
three factors: the risk-aversion which characterizes the economic
agent, a factor depending only on the properties of the L\'evy model
and a factor depending only on the variance $\bar \sigma^2$ of the
price process
and on the
properties of the contingent claim whose price is being computed. The
last factor therefore provides a model independent measure of the
sensitivity of a given product to jump risk in the limit of {small jumps occurring at high frequency}. 

The remainder of the paper is structured as follows. In Section
\ref{math.sec} we present the precise mathematical setup for
exponential utility maximization and indifference pricing in
exponential L\'evy models. In Section \ref{approx.sec} we {derive} a
non-asymptotic approximation for the utility indifference price, and
in Section \ref{bs.sec} this formula is used to develop an expansion
for the price ``in the neighborhood of the Black-Scholes
model''. Section \ref{num.sec} presents a numerical study of the
performance of our expansion, and Section \ref{bidask.sec} analyzes
specifically the formula for the bid-ask spread. Finally, proofs of
technical lemmas are relegated to the appendices.

\section{Mathematical framework}
\label{math.sec}


\subsection*{Exponential Lévy models}
Let $X$ be a L\'evy process on a probability space $(\Omega, \mathcal
F, \mathbb P)$, and let $(\mathcal F_t)_{t\geq 0}$ be the completed
natural filtration of $X$. We fix a time horizon $T<\infty$ and
consider a price process $S$ defined for $t\in [0,T]$
by $S_t =S_0\mathcal{E}(X)_t$ where $S_0>0$ is a constant and
$\mathcal E$ denotes the Doléans-Dade exponential defined by
$$
\mathcal E(X)_t = e^{X_t - \frac{1}{2}[X]^c_t} \prod_{0\leq s \leq t:
  \Delta X_s\neq 0} (1+\Delta X_s)e^{-\Delta X_s}. 
$$
The
interest rate is taken to be zero throughout the paper. We make the following standing assumption.

\begin{hyp}
\label{hyp1}
The process $X$ is not a.s.~monotone and there exists $\delta < 1$
such that $X$ satisfies $|\Delta X_t| \leq \delta $ a.s. for all $t \in [0,T]$. 
\end{hyp}

\begin{rem}
The lower bound on the jumps of $X$ ensures in particular that $S_t>0$
{a.s. for all $t \in [0,T]$.} The non-monotonicity ensures that there
exists a probability measure $\mathbb Q$ equivalent to $\mathbb P$
under which $S$ is a martingale, which guarantees absence of arbitrage in
the model (see e.g., \cite[section 9.5]{levybook}). The upper bound on the jumps is a technical assumption
needed in the subsequent developments. It is possible to assume that
$\Delta X_t \leq K$ for $K>1$, however for notational convenience we
impose the same bound on the negative and the positive jumps. 
\end{rem}


Denote by $(\sigma^2, \nu, \gamma)$ the
characteristic triplet of $X$ associated with the truncation function
$x \mapsto \mathbf 1_{|x|\leq 1}$. 
By the L\'evy-Khintchine formula, this means that the for all $t\in
[0,T]$, the characteristic function of $X_t$ is
given by
\begin{align*}
\mathbb E[e^{iu X_t}] = e^{t\psi(u)},\quad \psi(u) = iu\gamma - \frac{\sigma^2 u^2}{2} + \int_{\mathbb R}(e^{iux}-1-iux\mathbf 1_{|x|\leq 1}) \nu(dx).
\end{align*}

 
\subsection*{Utility indifference pricing}
Consider a bounded contingent claim $H = h(S)\in \mathcal F_T$, where
$h:\mathbb D([0,T])\to \mathbb R$ is a mapping defined on the
space of c\`adl\`ag trajectories. When the financial market is not complete, this claim cannot be perfectly
replicated, and therefore {the price at which an individual agent will
accept to buy / sell the claim will depend on the agent's attitude
towards risk, which may be quantified by a utility function. } 



In this paper, we focus on the exponential utility function defined by
$U(x) = - e^{-\alpha x}$ where $\alpha \in (0,\infty)$ is the risk
aversion parameter. Define the set of admissible trading strategies
\begin{equation*}
\Theta = \{\vartheta \in L(S) \ | \ \exists L^* \text{ with } \mathbb{E}[e^{-\alpha L^*}] <\infty 
\text{ s.t. } (\vartheta  \cdot  S)_t \geq L^*\, \forall t\in[0,T]\ \text{a.s.}\}
\end{equation*}
where $L(X)$ is the set of $\mathcal{F}$-predictable $X$-integrable $\mathbb{R}$-valued processes, 
and $(\vartheta  \cdot  X)_t := \int_0^t \vartheta_s dX_s$ denotes the
stochastic integral with respect to $X$. Other definitions of the set
of admissible strategies have been suggested in the literature
\cite{sixauthor}, but the above one appears sufficient in the context
of exponential L\'evy models and it is somewhat more elementary and
easier to check than the ones in \cite{sixauthor}. 

As mentioned in the introduction, the seller's utility indifference price of the
claim $H$ is defined by the implicit relation \eqref{priceimp}, which,
in the case of the exponential utility function, yields the explicit
formula
\begin{align}
p^H_s=\frac{1}{\alpha}\log \frac{\min_{\vartheta\in \Theta } \mathbb
E\left[\exp\left(  -\alpha(\vartheta \cdot S)_T
    {+ \alpha H_T}\right)\right]}{\min_{\vartheta\in \Theta }\mathbb E\left[\exp\left(-\alpha (\vartheta
    \cdot S)_T\right)\right] }.\label{iprice}
\end{align}
The buyer's indifference price $p^H_b$ is defined in a similar manner
and satisfies $p^H_b = -p^{-H}_s$. In the sequel, we shall focus on the
seller's price and omit the indices $H$ and $s$. 

In conclusion, to compute the utility indifference price, we \textit{a
  priori} need to solve two optimization problems. As shown
below, in the
context of exponential L\'evy models, the denominator of
\eqref{iprice} can be computed explicitly. 


Let 
\begin{equation*}
\ell(u) = \gamma u + \frac{\sigma^2 u^2}{2} + \int_{\mathbb R}(e^{ux}-1-ux)\nu(dx).
\end{equation*}
Under the assumption that $X$ is not a.s. monotone and has bounded
jumps, $\ell(u)$ is well defined for all $u\in \mathbb R$, is 
bounded from below and there exists $\varphi^*\in \mathbb R$ such that
$\ell(-\alpha \varphi^*) = \inf_u \ell(u)$ (see the proof of Theorem 1
in \cite{tankov.pplnmf}). 
Let $\vartheta \in \Theta$ and $\varphi_t = {\vartheta_t}{S_{t-}}$.
By It\={o} formula it is easy to show that 
\begin{equation*}
M_t = e^{-\alpha \int_0^t \varphi_s dX_s - \int_0^t \ell(-\alpha\varphi_s) ds }
\end{equation*}
is a local martingale. In addition, it is positive and bounded from
above by $e^{-\alpha L^*-T\inf_u \ell(u)}$, hence a true martingale. Therefore
\begin{align*}
\mathbb E[e^{-\alpha (\vartheta\cdot
S)_T}] 
&= \mathbb E[e^{-\alpha\int_0^T \varphi_t dX_t - \int_0^T \ell(-\alpha \varphi_t)dt 
+ \int_0^T \ell(-\alpha \varphi_t)dt}]\\  
&\geq \mathbb E[e^{-\alpha\int_0^T \varphi_t dX_t - \int_0^T \ell(-\alpha \varphi_t)dt }] e^{\int_0^T \ell(-\alpha \varphi^*)dt} 
= e^{\int_0^T \ell(-\alpha \varphi^*)dt}.
\end{align*}
On the other hand, the strategy $\vartheta_t^* = \frac{\varphi^*}{S_{t-}} $ is admissible. Indeed, as $X$ has bounded jumps, its exponential moments are finite, and by {Theorem} 25.18 from Sato \cite{Sato}, $\mathbb{E}[e^{-\alpha \varphi^* \inf_{0\leq t \leq T}X_t}] < \infty$. Thus, taking the strategy $\vartheta^*$ we get equality in the above inequality, which shows that this strategy is optimal. Then we define 
\begin{equation}
\frac{d\mathbb{Q}^*}{d\mathbb{P}} = \frac{e^{-\alpha \varphi^* X_T}}{\mathbb{E}[e^{-\alpha \varphi^* X_T}]}.\label{memm}
\end{equation}
Note that by definition of $\varphi^*$, the measure $\mathbb Q^*$ does not depend
on $\alpha$. 
Using {Theorem} 33.1 from \cite{Sato}, it can be shown that under
$\mathbb{Q^*}$, $X$ is a martingale Lévy process with diffusion
component volatility $\sigma$ and L\'evy measure $\nu^*$, where $\nu^*(dx) = e^{-\alpha \varphi^* x} \nu(dx)$.
In particular, it implies that $S$ is a $\mathbb{Q}^*$
martingale. The measure $\mathbb{Q}^*$ is the MEMM for $S$ (see \cite{FujiwaraMiyahara}). Using this  measure, the utility indifference price writes:
\begin{equation}
p= \frac{1}{\alpha} \log \inf_{\vartheta \in \Theta} \mathbb{E}^* e^{-\alpha ((\vartheta\cdot
S)_T- H)}.\label{indprice.eq}
\end{equation}

\subsection*{Quadratic hedging}
Quadratic ({also called} mean-variance) hedging consists in finding an initial capital and a hedging strategy which minimize the expected squared P\&L (Profit and Loss), that is:
\begin{equation*}
\min_{c\in \mathbb R,\vartheta \in \Theta'} \mathbb{E} [(c+(\vartheta \cdot S)_T - H)^2],
\end{equation*}
where $\Theta'$ is a suitable class of admissible strategies. 
We refer to \cite{pham}, \cite{guided}, \cite{general} for more
details on quadratic hedging and to \cite{kallsen.hubalek.al.06} for
the specific setting of exponential L\'evy models.

We shall
see that the exponential utility indifference price is
closely related to quadratic hedging under the measure $\mathbb
Q^*$. Since $\mathbb Q^*$ is a martingale measure, we can define
$$
\Theta' = \{\vartheta \in L(S): \vartheta \cdot S \  \text{is a square
  integrable martingale}\},
$$ 
and the optimal strategy can be computed as
$$
\bar \vartheta_t = \frac{d\langle S, H\rangle^{\mathbb Q^*}_t}{d\langle S,
  S\rangle^{\mathbb Q^*}_t},\quad \text{with}\quad H_t = \mathbb E^{*}[H|\mathcal F_t]. 
$$

However, it should be noted that the optimal initial capital $c$
may not be interpreted as a price of the claim $H$ since it is equal
to the price of the hedging strategy only and does not take into
account the unhedged residual risk. 


\section{An approximation for the indifference price}
\label{approx.sec}

The goal of this section is to obtain a non-asymptotic approximation
for the exponential utility indifference price in terms of the
quadratic {residual risk (error)} under $\mathbb{Q^*}$ in the Lévy
model under consideration. The quadratic hedging strategy under
$\mathbb Q^*$ will be denoted by $\bar\vartheta$. 

Since $X$ is a L\'evy process, the $\mathbb Q^*$-martingale
$(H_t)_{0\leq t \leq T}$ has the predictable representation property
\cite[paragraph III.4d]{jacodshiryaev} and can be written as 
\begin{align}
H_t = \mathbb E^{*}[H] +\int_0^t \sigma_s dX^c_s + \int_0^t \int_{\mathbb R}
\gamma_s(z) \tilde J_X(ds\times dz),\label{martrep.eq}
\end{align}
where $X^c$ is the continuous martingale part of the process $X$ under
$\mathbb Q^*$, $\tilde J_X$ is the compensated jump measure of the
process $X$ under $\mathbb Q^*$, $\sigma_t$ is a predictable process
and $\gamma_s(\cdot)$ is a predictable random function. 

\begin{thm}
\label{thm1}
Assume that
there exists a constant $L $ with $2\delta L \alpha < 1$ such that 
\begin{align}
&|H-\mathbb E^*[H]|\leq L\quad \text{a.s.}, \label{ass1}\\
&|\sigma_t|\leq L\quad \text{a.s. for all $t\in[0,T]$},\label{ass2}\\
&|\gamma_t(z)| \leq L|z|\quad \text{a.s. for all $t\in[0,T]$ and all
  $z\in \supp \nu$.}\label{ass3}
\end{align}
Then there exists a constant 
$C_{\alpha\delta L}<\infty$ such that for every $\varepsilon\in(0,1]$ the seller's indifference
price of the claim $H$ satisfies
\begin{equation*}
\left| p - \mathbb E^*[H] - \frac{\alpha}{2} \mathbb
  E^*\left[\left(\int_0^T \bar\vartheta_s dS_s - (H-\mathbb E^*[H])\right)^2\right] \right|
\leq \alpha^{1+\varepsilon} C_{\alpha\delta L} \mathbb E^*\left[\sup_{0\leq t\leq T}\left|\int_0^t \bar\vartheta_s
    dS_s -(H_t-\mathbb E^*[H])\right|^{2+\varepsilon}\right].
\end{equation*}
The constant
$C_{\alpha\delta L}$ can be chosen as
$$
C_{\alpha\delta L} = C e^{4\alpha L} \vee (1 - 2\alpha \delta L)^{-2},
$$
where $C<\infty$
is a universal constant. 
\end{thm}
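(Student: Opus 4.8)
The plan is to sandwich the indifference price between an upper and a lower bound, using the primal representation \eqref{indprice.eq} for the upper bound and a dual (entropy-based) argument for the lower bound, then show both bounds agree with the claimed quadratic-risk expression up to the stated error. Throughout, write $G = H - \mathbb E^*[H]$ and let $M_t = \int_0^t \bar\vartheta_s\, dS_s - (H_t - \mathbb E^*[H])$ denote the residual process of the quadratic hedge under $\mathbb Q^*$, so that $M_T = \int_0^T \bar\vartheta_s\, dS_s - G$ is the terminal hedging error and $\mathbb E^*[M_T^2]$ is the quantity appearing in the theorem. The assumptions \eqref{ass1}--\eqref{ass3} are exactly what is needed to control $M_t$: they give a uniform bound on the jumps of $M$ (of order $\delta L$) and on $\sup_t |M_t|$ in $L^{2+\varepsilon}$.

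For the upper bound, I would simply plug the quadratic hedging strategy $\bar\vartheta$ into \eqref{indprice.eq}, giving
\begin{equation*}
\alpha p \leq \log \mathbb E^*\left[e^{\alpha M_T}\right].
\end{equation*}
The heart of the matter is a precise third-order Taylor-type estimate for $\log \mathbb E^*[e^{\alpha M_T}]$. Since $\mathbb E^*[M_T] = 0$ (the residual is centered because $\bar\vartheta \cdot S$ and $H_t$ are both $\mathbb Q^*$-martingales), the second-order term produces $\frac{\alpha^2}{2}\mathbb E^*[M_T^2]$, which after dividing by $\alpha$ is exactly the target term. The remainder is cubic in $\alpha M_T$; using the elementary inequality $|e^x - 1 - x - \tfrac{x^2}{2}| \leq C|x|^{2+\varepsilon}e^{|x|}$ for $x$ in a bounded range, together with the uniform bound $|\alpha M_T| \lesssim \alpha L$ coming from \eqref{ass1}--\eqref{ass3}, I would bound the remainder by $\alpha^{2+\varepsilon} C e^{O(\alpha L)} \mathbb E^*[|M_T|^{2+\varepsilon}]$; dividing by $\alpha$ yields the $\alpha^{1+\varepsilon}$ term with the $Ce^{4\alpha L}$ constant.

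For the lower bound I would exploit duality: by the variational representation of the relative entropy (the Gibbs / Donsker--Varadhan inequality), for any probability measure $\mathbb Q \ll \mathbb Q^*$,
\begin{equation*}
\log \mathbb E^*\left[e^{\alpha M_T}\right] \geq \alpha\, \mathbb E^{\mathbb Q}[M_T] - H(\mathbb Q \mid \mathbb Q^*),
\end{equation*}
but more usefully I would work directly from \eqref{indprice.eq} and lower-bound $\inf_\vartheta \mathbb E^*[e^{\alpha((\vartheta\cdot S)_T - H)}]$ by choosing a cleverly tilted measure $d\mathbb Q / d\mathbb Q^* \propto e^{\alpha M_T}$ (or a measure built from the candidate optimal strategy) and expanding to second order. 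The lower bound should again reproduce $\mathbb E^*[H] + \frac{\alpha}{2}\mathbb E^*[M_T^2]$ with a remainder of the same order; here the factor $(1 - 2\alpha\delta L)^{-2}$ arises because controlling the change of measure requires the exponential $e^{\alpha M_T}$ to have bounded moments under the tilt, which is where the hypothesis $2\delta L\alpha < 1$ becomes essential (it keeps a certain logarithmic-moment generating function finite and its second derivative bounded by $(1-2\alpha\delta L)^{-2}$).

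The main obstacle, and the step requiring the most care, is the lower bound. The upper bound is essentially a one-line substitution followed by a scalar Taylor estimate, but for the lower bound one must verify that the near-optimal strategy (or tilted measure) is admissible in $\Theta$ and that the associated change of measure has the requisite integrability, both of which depend delicately on the bound $2\alpha\delta L < 1$; this is precisely the constraint that forces the constant $(1-2\alpha\delta L)^{-2}$ and caps the admissible range of $\alpha$. A secondary technical point is to pass from the terminal bound $\mathbb E^*[|M_T|^{2+\varepsilon}]$ to the supremum $\mathbb E^*[\sup_{t}|M_t|^{2+\varepsilon}]$ appearing on the right-hand side; since $M$ is a $\mathbb Q^*$-martingale this follows from the Burkholder--Davis--Gundy or Doob maximal inequality, so I would state the bound in the cleaner supremum form from the outset to absorb these maximal-inequality constants into the universal $C$.
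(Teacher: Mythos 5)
Your overall architecture (primal upper bound with the quadratic hedge plugged in, dual lower bound via a tilted measure, assumption $2\alpha\delta L<1$ driving the constant) matches the paper's proof, but both halves as you sketch them contain a step that fails. For the upper bound, your claim that \eqref{ass1}--\eqref{ass3} give a uniform bound $|\alpha M_T|\lesssim \alpha L$ is false: the assumptions bound the claim $H-\mathbb E^*[H]$ and the integrands $\sigma_t$, $\gamma_t(z)$ of its martingale representation (hence $|S_{t-}\bar\vartheta_t|\leq L$), but the stochastic integral $\int_0^T\bar\vartheta_s\,dS_s$ itself is \emph{not} a.s.\ bounded --- it is an integral of a bounded predictable process against a martingale (think of $\int_0^T\varphi_s\,dW_s$ with $|\varphi|\leq L$), so $M_T$ is unbounded and the Taylor estimate $|e^x-1-x-\tfrac{x^2}{2}|\leq C|x|^{2+\varepsilon}e^{|x|}$ cannot be applied at $x=\alpha M_T$ to produce the claimed rate. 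The paper's fix is a stopping-time truncation: it uses the strategy $\bar\vartheta_t\mathbf 1_{t\leq\tau}$ with $\tau=\inf\{t:|M_t|\geq 1\}\wedge T$, so that the stopped residual is bounded by $3L+1$ (using the jump bound $\delta<1$), applies the Taylor inequality there, and controls the event $\{\tau<T\}$ by Markov's inequality against $\mathbb E^*\bigl[\sup_{t}|M_t|^{2+\varepsilon}\bigr]$. This is precisely why the supremum appears in the error term; it is not a cosmetic strengthening obtained afterwards via Doob or BDG, but the quantity that pays for the truncation.

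For the lower bound, the exponential tilt $d\mathbb Q/d\mathbb Q^*\propto e^{\alpha M_T}$ is in general \emph{not} a martingale measure --- there is no reason for $S$ to be a $\mathbb Q$-martingale under it --- so it cannot be inserted into the dual representation $p=\sup_{\mathbb Q\in\mathrm{EMM}(\mathbb Q^*)}\{\mathbb E^{\mathbb Q}[H]-H(\mathbb Q|\mathbb Q^*)\}$ of Bellini--Frittelli, and the Gibbs inequality for $\log\mathbb E^*[e^{\alpha M_T}]$ by itself says nothing about the infimum over all strategies in \eqref{indprice.eq}. The paper instead uses the \emph{linear} stopped tilt $D=1+M_{\tau_\kappa}$, $\tau_\kappa=\inf\{t:|M_t|\geq\kappa\}\wedge T$ with $\kappa=\tfrac12-L\delta$: positivity follows from $|D-1|\leq\kappa+2L\delta<1$ (jump bound plus $2\alpha\delta L<1$), and the martingale-measure property of $D\mathbb Q^*$ follows from the orthogonality of the quadratic-hedging residual to all stochastic integrals $\mathbb E^*\bigl[M_{\tau_\kappa}\,(\vartheta\cdot S)_T\bigr]=0$ --- this is the exact point where the fact that $\bar\vartheta$ is the variance-optimal hedge is used. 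The constant $(1-2\alpha\delta L)^{-2}$ then arises from the Taylor bound for $x\log x$ on the interval $[1-(\tfrac12+L\delta),\,1+(\tfrac12+L\delta)]$, i.e.\ from how close $D$ can come to zero, not from moment bounds of $e^{\alpha M_T}$ under an exponential tilt as you suggest. Without verifying positivity and the martingale-measure property of your tilted measure, the lower half of the sandwich does not close.
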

\begin{rem}
The formula of the above theorem is a non-asymptotic approximation
formula for the indifference price, which can be used to recover a
variety of asymptotic results. For example, observing that
$C_{\alpha\delta L} $ is bounded as $\alpha\to0$, we
recover the asymptotics for small risk aversion:
$$
p = \mathbb E^*[H] + \frac{\alpha}{2} \mathbb
  E^*\left[\left(\int_0^T \bar\vartheta_s dS_s - (H-\mathbb
      E^*[H])\right)^2\right] + o(\alpha),\quad \alpha \to 0.
$$
\end{rem}


\begin{proof}
Under the assumptions of the Theorem,
$$
\bar \vartheta_t = \frac{1}{S_{t-}}\frac{\sigma \sigma_t +\int_{\mathbb R}
  z\gamma_t(z)\nu(dz)}{ \sigma^2 + \int_{\mathbb R}z^2\nu(dz) }
$$
and therefore $|S_{t-} \bar\vartheta_t|\leq L$ a.s. for all $t\in [0,T]$. 
 We assume without loss of generality that $\mathbb
  E^*[H] =0$. 

Applying first Lemma \ref{ub.lm} below with pay-off $H' =
  \alpha H$ and bound $L' = \alpha L$, taking the logarithm using the
  inequality $\log(1+x)\leq x$ and dividing by $\alpha$, we get
$$
p\leq \frac{\alpha}{2} \mathbb E^*\left[\left(\int_0^T \bar\vartheta_s
    dS_s -H\right)^2\right] + {
C \alpha^{1+\varepsilon}e^{4\alpha L}}
\mathbb E^*\left[\sup_{0\leq t\leq T}\left|\int_0^t \bar\vartheta_s dS_s -H_t\right|^{2+\varepsilon}\right]{.}
$$
Similarly, applying Lemma \ref{lb.lm} below yields
$$
p\geq \frac{\alpha}{2} \mathbb E^*\left[\left(\int_0^T \bar\vartheta_s
    dS_s -H\right)^2\right] 
    - {
C \alpha^{1+\varepsilon}
(1 - 2\alpha \delta L)^{-2}}
\mathbb E^*\left[\sup_{0\leq t\leq T}\left|\int_0^t \bar\vartheta_s dS_s -H_t\right|^{2+\varepsilon}\right].
$$
\end{proof}

\begin{lem}[Upper Bound]\label{ub.lm}
Let $L$ be a constant such
that assumptions \eqref{ass1}--\eqref{ass3} are satisfied.  Then,
there exists a universal constant $C$ such that for every $\varepsilon\in(0,1]$,
\begin{align*}
\inf_{\vartheta\in \Theta}\mathbb E^*\left[e^{-\int_0^T \vartheta_s dS_s +H}\right] 
\leq 1+\frac{1}{2} \mathbb E^*\left[\left(\int_0^T \bar\vartheta_s dS_s -H\right)^2\right] + C e^{4L}\mathbb E^*\left[\sup_{0\leq t\leq T}\left|\int_0^t \bar\vartheta_s dS_s -H_t\right|^{2+\varepsilon}\right].
\end{align*}
\end{lem}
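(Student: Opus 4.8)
The plan is to bound the infimum from above by evaluating the objective at a single, carefully chosen strategy and then Taylor-expanding the exponential. Since we take an infimum, any admissible choice gives an upper bound, so I would simply plug in the quadratic hedging strategy $\bar\vartheta$ under $\mathbb Q^*$. Writing $R_T = \int_0^T \bar\vartheta_s\,dS_s - H$ (recall we may assume $\mathbb E^*[H]=0$, so $H=H_T-\mathbb E^*[H]$), the target becomes a bound on $\mathbb E^*[e^{-R_T}]$. The key structural fact, coming from the martingale representation \eqref{martrep.eq} and the formula for $\bar\vartheta$ in the proof of Theorem \ref{thm1}, is that the running residual process $R_t := \int_0^t \bar\vartheta_s\,dS_s - (H_t - \mathbb E^*[H])$ is a square-integrable $\mathbb Q^*$-martingale with $R_0=0$; moreover the projection property of quadratic hedging makes $R$ orthogonal (in the $\mathbb Q^*$ sense) to the hedgeable part, which is precisely why the first-order term in the expansion will vanish.

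The core of the argument is a pointwise Taylor expansion of $x\mapsto e^{-x}$ with a third-order (controlled by $\varepsilon$) remainder. I would write $e^{-R_T} = 1 - R_T + \tfrac12 R_T^2 + \rho(R_T)$, where $\rho$ is a remainder term. Taking $\mathbb E^*$, the linear term $\mathbb E^*[R_T]$ vanishes because $R$ is a martingale started at $0$. The quadratic term yields exactly $\tfrac12 \mathbb E^*[R_T^2] = \tfrac12 \mathbb E^*[(\int_0^T\bar\vartheta_s\,dS_s - H)^2]$, which is the desired leading correction. Everything then reduces to controlling $\mathbb E^*[\rho(R_T)]$ by $C e^{4L}\,\mathbb E^*[\sup_{0\le t\le T}|R_t|^{2+\varepsilon}]$. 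The natural device is an interpolated Taylor remainder: since the pay-off and the integrands are bounded by multiples of $L$ (giving $|R_T|\le$ const$\cdot L$ via \eqref{ass1}--\eqref{ass3} and $|S_{t-}\bar\vartheta_t|\le L$), one has pointwise $|\rho(x)| \le C|x|^{2+\varepsilon}$ on the relevant range, with the constant picking up the factor $e^{4L}$ from the sup-norm of $e^{-x}$ over $|x|\lesssim L$. Bounding $|R_T|^{2+\varepsilon}$ by $\sup_t|R_t|^{2+\varepsilon}$ then closes the estimate.

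The step I expect to be the main obstacle is the remainder control, and specifically getting the exponent $2+\varepsilon$ rather than the crude cubic bound $|x|^3$. A plain third-order Taylor remainder gives $|\rho(x)|\le C\,\sup_{|\xi|\le|x|}e^{\xi}\,|x|^3$, which is the $\varepsilon=1$ case; to obtain the full range $\varepsilon\in(0,1]$ I would interpolate between the second-order remainder bound $|\rho(x)|\le C e^{|x|}|x|^2$ and the third-order one $|\rho(x)|\le Ce^{|x|}|x|^3$, for instance via the elementary inequality $|x|^{2+\varepsilon}\le |x|^2 \vee |x|^3$ applied after splitting on $|x|\le 1$ versus $|x|>1$, or by a direct estimate $|e^{-x}-1+x-\tfrac12 x^2|\le C e^{|x|}|x|^{2+\varepsilon}$ valid for all such $\varepsilon$. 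A second delicate point is justifying that the linear term genuinely vanishes: this requires $R$ to be a true (not merely local) $\mathbb Q^*$-martingale, which follows from square-integrability of $\int\bar\vartheta\,dS$ under $\Theta'$ together with boundedness of the integrands, so that $\mathbb E^*[R_T]=R_0=0$. Once these are in place, collecting terms and absorbing constants into the universal $C$ and the factor $e^{4L}$ yields the stated inequality.
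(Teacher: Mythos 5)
Your overall skeleton (evaluate the objective at a candidate strategy, Taylor-expand $e^{-x}$ to second order with a $(2+\varepsilon)$-remainder, kill the linear term by the martingale property) is the same as the paper's, but there is a genuine gap at the step you yourself flagged as the main obstacle, and it is not fixable by the interpolation trick you propose. The claim that $|R_T|\le \mathrm{const}\cdot L$ ``via \eqref{ass1}--\eqref{ass3} and $|S_{t-}\bar\vartheta_t|\le L$'' is false: those assumptions bound the \emph{integrand} $\varphi_s=S_{s-}\bar\vartheta_s$ and the payoff, but the stochastic integral $\int_0^T \bar\vartheta_s\,dS_s=\int_0^T \varphi_s\,dX_s$ of a bounded integrand is an unbounded random variable (already its Brownian component has Gaussian-type tails). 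So there is no ``relevant range'' $[-m,m]$ on which to apply the Taylor bound, and the factor $e^{4L}$ cannot be extracted. Your fallback, the global estimate $|\rho(x)|\le Ce^{|x|}|x|^{2+\varepsilon}$, is true but does not close the argument: it leaves you with $\mathbb E^*\bigl[e^{|R_T|}|R_T|^{2+\varepsilon}\bigr]$, and $e^{|R_T|}$ is not bounded by $e^{4L}$; separating it by H\"older would introduce exponential moments of $R_T$ that do not appear in the statement of the lemma. (A secondary, lesser issue: you also need $\bar\vartheta\in\Theta$ to legitimately plug it into the infimum, which you assert but do not verify.)

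The missing idea is the paper's localization. One does not evaluate the objective at $\bar\vartheta$ but at the \emph{stopped} strategy $\vartheta_t=\bar\vartheta_t\,\mathbf 1_{t\le\tau}$, where $\tau=\inf\{t:|\int_0^t\bar\vartheta_s\,dS_s-H_t|\ge 1\}\wedge T$. Since the jump of the stochastic integral at $\tau$ is at most $L\delta\le L$ and the jumps of $H_t$ are at most $2L$, the stopped residual is bounded by $3L+1$; this bounded range is exactly what produces the constant of order $e^{4L}$ in the Taylor remainder, and it also makes admissibility of the candidate strategy trivial. The price of stopping is that the expansion yields moments of $\int_0^\tau$ rather than $\int_0^T$; the difference is supported on $\{\tau<T\}$, where all quantities are bounded by powers of $3L+1$, and $\mathbb Q^*[\tau<T]$ is in turn bounded via Markov's inequality by $\mathbb E^*\bigl[\sup_{0\le t\le T}|\int_0^t\bar\vartheta_s\,dS_s-H_t|^{2+\varepsilon}\bigr]$ --- which is precisely why the supremum over $t$ (and not just the terminal value $|R_T|^{2+\varepsilon}$) appears in the statement. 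Without the stopping time, your argument cannot produce either the $e^{4L}$ factor or this sup-moment, so the proof as proposed does not go through.
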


\begin{proof} 
Introduce the stopping time
\begin{equation*}
\tau = \inf\{t\geq 0: \left|\int_0^t \bar\vartheta_s dS_s -
  H_t\right|\geq 1\}\wedge T.
\end{equation*}
Since by assumptions, taking into account that $\delta<1$, 
\begin{equation*}
\left|\int_0^\tau \bar\vartheta_s dS_s - H\right| \leq  3L+1,
\end{equation*}
the strategy $\vartheta_t = \bar\vartheta_t  1_{t\leq \tau}$ belongs
to $\Theta$ and we get
\begin{equation*}
\inf_{\vartheta\in\Theta}\mathbb E^*\left[e^{-\int_0^T \vartheta_s dS_s + H}\right]
\leq  \mathbb E^*\left[e^{-\int_0^\tau \bar\vartheta_s dS_s + H}\right].
\end{equation*}
We shall use a Taylor formula of the following form: for every
$\varepsilon\in (0,1)$ and $m <\infty$, 
\begin{equation*}
e^{x} \leq 1+x + \frac{x^2}{2} + C |x|^{2+\varepsilon},\quad \forall x \in[-m,m]
\end{equation*}
with 
$$
C = \frac{m^{1-\varepsilon}e^{m}}{6} .
$$
Thus, for $C_{L\varepsilon} =\frac{(3L+1)^{1-\varepsilon}e^{3L+1}}{6} $ 
\begin{align*}
\inf_{\vartheta\in \Theta}\mathbb E^*\left[e^{-\int_0^T {\vartheta_s} dS_s +H\}}\right] 
&\leq 1 + \frac{1}{2} \mathbb E^*\left[\left(\int_0^\tau \bar\vartheta_s dS_s -H\right)^2\right] 
+ C_{L\varepsilon} \mathbb E^*\left[\left|\int_0^\tau \bar\vartheta_s dS_s - H\right|^{2+\varepsilon}\right]\\
& \leq 1 + \frac{1}{2} \mathbb E^*\left[\left(\int_0^T \bar\vartheta_s dS_s -H\right)^2\right] 
+ C_{L\varepsilon}\mathbb E^*\left[\left|\int_0^T \bar\vartheta_s dS_s -H\right|^{2+\varepsilon}\right] \\ 
&\qquad+ (3L+1)^2 \left( \frac{1}{2} +C_{L\varepsilon} (3L+1)^{\varepsilon}\right)  { \mathbb Q^*}[\tau < T]. 
\end{align*}
Then, by Markov inequality
\begin{equation*}
 { \mathbb Q^*}[\tau < T] =  { \mathbb Q^*}\left[\sup_{0\leq t \leq T}\left|\int_0^t \bar\vartheta_s dS_s - H_t\right|>1\right] 
\leq \mathbb E^*\left[\sup_{0\leq t \leq T}\left|\int_0^t \bar\vartheta_s dS_s - H_t\right|^{2+\varepsilon}\right]
\end{equation*}
so that 
\begin{multline*}
\inf_{\vartheta\in\Theta}\mathbb E^*\left[e^{-\int_0^T {\vartheta_s} dS_s
    +H\}}\right] \leq 1 + \frac{1}{2} \mathbb E^*\left[\left(\int_0^T
    \bar\vartheta_s dS_s -H\right)^2\right] \\
 + (C_{L\varepsilon} + \frac{(3L+1)^2}{2} + C_{L\varepsilon}  (3L+1)^{2+\varepsilon}))\mathbb E^*\left[\sup_{0\leq t \leq T}\left|\int_0^t \bar\vartheta_s dS_s - H_t\right|^{2+\varepsilon}\right].
\end{multline*}
Now, it is clear that one can choose a universal constant $C$ such
that the statement of the Lemma holds true. 
\end{proof}

\begin{lem}[Lower Bound]\label{lb.lm}
Let $L$ be a constant with $2L\delta<1$ such
that assumptions \eqref{ass1}--\eqref{ass3} are satisfied.
Then, there exists a universal constant $C$ such that for every $\varepsilon\in(0,1]$,
\begin{equation*}
p \geq \frac{1}{2}  \mathbb E^*\left[\left(\int_0^T \bar\vartheta_s dS_s - H\right)^2\right] 
- \frac{C}{(1-2L\delta)^2} \mathbb E^*\left[\sup_{0\leq t\leq T}\left|\int_0^t \bar\vartheta_s dS_s -H_t\right|^{2+\varepsilon}\right]
{.}
\end{equation*}
\end{lem}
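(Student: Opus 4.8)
The plan is to pass to the dual side of the indifference-pricing problem and to produce the lower bound from a single well-chosen equivalent martingale measure. Throughout assume, as in Theorem~\ref{thm1}, that $\mathbb E^*[H]=0$, and recall the Galtchouk--Kunita--Watanabe decomposition $H_t=(\bar\vartheta\cdot S)_t+N_t$ under $\mathbb Q^*$, where $N$ is a $\mathbb Q^*$-martingale with $\langle N,S\rangle=0$ and $N_T=H-(\bar\vartheta\cdot S)_T$, so that $\mathbb E^*[N_T^2]=\mathbb E^*[((\bar\vartheta\cdot S)_T-H)^2]$ is exactly the residual risk. For any probability $\mathbb Q\ll\mathbb Q^*$ the Gibbs variational inequality $\log\mathbb E^*[e^{f}]\ge\mathbb E^{\mathbb Q}[f]-\mathcal H(\mathbb Q\,|\,\mathbb Q^*)$, applied to $f=-(\vartheta\cdot S)_T+H$, gives a lower bound valid for each $\vartheta$. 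If $\mathbb Q$ is moreover an equivalent martingale measure, every admissible wealth $\vartheta\cdot S$ is a $\mathbb Q$-supermartingale (a local martingale bounded below by an integrable random variable), hence $\mathbb E^{\mathbb Q}[-(\vartheta\cdot S)_T]\ge0$ and the $\vartheta$-term disappears; taking the infimum over $\vartheta\in\Theta$ then yields $p\ge\mathbb E^{\mathbb Q}[H]-\mathcal H(\mathbb Q\,|\,\mathbb Q^*)$ for every such $\mathbb Q$. Only the easy (weak-duality) direction is needed.

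The key step is the choice $d\mathbb Q/d\mathbb Q^*=\mathcal E(N)_T$. The jumps of $N$ satisfy $|\Delta N_s|=|\gamma_s(\Delta X_s)-S_{s-}\bar\vartheta_s\,\Delta X_s|\le 2L|\Delta X_s|\le 2L\delta$ by \eqref{ass3} and $|S_{s-}\bar\vartheta_s|\le L$; since $2L\delta<1$ this gives $1+\Delta N_s\ge 1-2L\delta>0$, so $\mathcal E(N)>0$, and being a positive local martingale with bounded jumps and integrable quadratic variation it is a true martingale with $\mathbb E^*[\mathcal E(N)_T]=1$. This is exactly where $2L\delta<1$ enters. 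Because $\langle N,S\rangle=0$, Girsanov's theorem shows the change of measure adds no drift to $S$, so $\mathbb Q$ is again an equivalent martingale measure; and since $\bar\vartheta\cdot S$ remains a $\mathbb Q$-martingale, $\mathbb E^{\mathbb Q}[H]=\mathbb E^*[H\,\mathcal E(N)_T]=\mathbb E^*[N_T\,\mathcal E(N)_T]$. Using $\mathcal E(cN)$ and optimising the leading-order expression $c\,\mathbb E^*[N_T^2]-\tfrac{c^2}{2}\mathbb E^*[N_T^2]$ over $c$ confirms that $c=1$ is the right choice and produces the factor $\tfrac12$.

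Setting $Z=\mathcal E(N)_T=e^{W}$ with $W=N_T-\tfrac12[N^c]_T+\sum_{s\le T}(\log(1+\Delta N_s)-\Delta N_s)$, the dual bound reads $p\ge\mathbb E^*[Z(N_T-W)]=\mathbb E^*\big[Z\big(\tfrac12[N^c]_T+\sum_{s\le T}g(\Delta N_s)\big)\big]$, where $g(x)=x-\log(1+x)\ge0$. Replacing $Z$ by $1$ and $g(x)$ by $\tfrac12x^2$ gives precisely $\tfrac12\mathbb E^*[[N^c]_T+\sum(\Delta N_s)^2]=\tfrac12\mathbb E^*[[N]_T]=\tfrac12\mathbb E^*[N_T^2]$, the claimed leading term. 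The remainder then splits into a cross term $\mathbb E^*[(Z-1)(\tfrac12[N^c]_T+\sum g(\Delta N_s))]$ and a Taylor term $\mathbb E^*[\sum(g(\Delta N_s)-\tfrac12(\Delta N_s)^2)]$.

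The error analysis is where the work and the constant $(1-2L\delta)^{-2}$ concentrate. For the Taylor term, since only a lower bound is needed, I would split the jumps by sign: negative jumps give $g(\Delta N_s)-\tfrac12(\Delta N_s)^2\ge0$ and are dropped, while positive jumps obey $|g(x)-\tfrac12x^2|\le\tfrac13|x|^3\le\tfrac13|x|^{2+\varepsilon}$ with no bad factor. For the cross term, the estimate $g(x)\le\tfrac{x^2}{2(1+\xi)^2}\le\tfrac{x^2}{2(1-2L\delta)^2}$ gives $\tfrac12[N^c]_T+\sum g(\Delta N_s)\le\tfrac{1}{2(1-2L\delta)^2}[N]_T$, which is the sole source of $(1-2L\delta)^{-2}$; the extra order supplied by $|Z-1|$ then renders this term cubic in $N$. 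In both terms one converts $\sum_s|\Delta N_s|^{2+\varepsilon}\le(\sup_s|\Delta N_s|)^{\varepsilon}[N]_T$ and $[N]_T$ into the target $\mathbb E^*[\sup_{t}|N_t|^{2+\varepsilon}]$ through the Burkholder--Davis--Gundy and Hölder inequalities, after establishing $L^p(\mathbb Q^*)$ bounds on $\mathcal E(N)_T$ (finite, once more because the jumps of $N$ stay above $-(1-2L\delta)$). The main obstacle is precisely this bookkeeping --- extracting a single universal constant $C$ and the sharp power $(1-2L\delta)^{-2}$ uniformly in $\varepsilon\in(0,1]$ --- together with the integrability technicalities making $\mathbb Q$ a genuine martingale measure and admissible wealths true $\mathbb Q$-supermartingales. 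The duality and the choice $\mathcal E(N)$ are the conceptual heart; the estimates are delicate but routine.
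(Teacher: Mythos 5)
Your dual step is sound, and your choice of density is genuinely different from the paper's: like the paper, you only need the easy (weak-duality) direction, which reduces everything to exhibiting one equivalent martingale measure, and your algebra for $Z=\mathcal E(N)_T$ is correct --- $p\ge \mathbb E^*\left[Z\left(\tfrac12[N^c]_T+\sum_{s\le T}g(\Delta N_s)\right)\right]$ with $g(x)=x-\log(1+x)$, the jump bound $|\Delta N_s|\le 2L\delta$ follows from \eqref{ass3} and $|S_{s-}\bar\vartheta_s|\le L$, and your observation that negative jumps may be discarded in the Taylor term is right (for $x\in(-1,0)$ one has $g(x)\ge x^2/2$). The paper, by contrast, does not exponentiate: it stops the residual $M_t=\int_0^t\bar\vartheta_s\,dS_s-H_t$ at $\tau_\kappa$, the first time $|M|$ reaches $\kappa=\tfrac12-L\delta$, and uses the arithmetic density $D=1+M_{\tau_\kappa}$, which satisfies $|D-1|\le\kappa+2L\delta=\tfrac12+L\delta<1$ pointwise.

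The genuine gap is in your treatment of the cross term $\mathbb E^*\left[(Z-1)\left(\tfrac12[N^c]_T+\sum_{s}g(\Delta N_s)\right)\right]$; this is not routine bookkeeping, and as outlined it fails. First, $L^p(\mathbb Q^*)$ bounds on $\mathcal E(N)_T$ come from exponential moments of $\langle N\rangle_T$, which are controlled only through $L^2T\left(\sigma^2+\int z^2\nu^*(dz)\right)$ --- model-dependent quantities --- so any constant obtained this way is not universal, contradicting the statement of the Lemma. Second, the homogeneity is wrong: H\"older gives $\mathbb E^*\left[|Z-1|\,[N]_T\right]\le\|Z-1\|_{L^{(2+\varepsilon)/\varepsilon}}\,\|[N]_T\|_{L^{(2+\varepsilon)/2}}$, and Burkholder--Davis--Gundy then yields $\mathbb E^*[\sup_t|N_t|^{2+\varepsilon}]^{2/(2+\varepsilon)}$ rather than the first power; likewise the step ``cubic in $N$ is dominated by the $(2+\varepsilon)$-moment'' requires $|N|\le 1$ pointwise, which you do not have, since $N$ contains the unbounded gains process $\bar\vartheta\cdot S$. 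This matters: in the application to Theorem \ref{thm1} the quantity $\mathbb E^*[\sup_t|N_t|^{2+\varepsilon}]$ is of order $\alpha^{2+\varepsilon}$, and a bound by its power $2/(2+\varepsilon)$ would ruin the error rate. One can partially repair your argument using positivity of $Z$ to get the pointwise bound $(1-Z)^+\le\min\left(1,\,|N_T|+\tfrac{[N]_T}{2(1-2L\delta)^2}\right)$ and then splitting with H\"older/BDG, but even this delivers a constant of order $(1-2L\delta)^{-(2+\varepsilon)}$, weaker than the claimed $(1-2L\delta)^{-2}$. The missing idea is precisely the paper's localization: stopping at $\tau_\kappa$ keeps the dual density uniformly bounded away from $0$ and $\infty$, so the entropy term is controlled by the elementary inequality $x\log x+1-x\le\tfrac{(x-1)^2}{2}+\tfrac{\Delta^{1-\varepsilon}}{6(1-\Delta)^2}|x-1|^{2+\varepsilon}$ on $[1-\Delta,1+\Delta]$ (the sole source of $(1-2L\delta)^{-2}$), and the cost of stopping is paid once via Markov's inequality $\mathbb Q^*[\tau_\kappa<T]\le\kappa^{-(2+\varepsilon)}\mathbb E^*[\sup_t|M_t|^{2+\varepsilon}]$. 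A lesser point: ``positive local martingale with bounded jumps and integrable quadratic variation'' does not imply true martingale; you need a L\'epingle--M\'emin type criterion, which does apply here since $\langle N\rangle_T$ is bounded by a deterministic constant --- harmless only because that step is qualitative and does not enter the constant.
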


\begin{proof}
From the results of \cite{bellini.frittelli.02}, we have\footnote{This
reference provides a duality result for the class of admissible
strategies which are bounded from below, but it can easily be
extended to our class $\Theta$ using the dominated convergence theorem
and the local boundedness of $S$.}
\begin{equation*}
p = \sup_{\mathbb Q\in \text{EMM}(\mathbb Q^*)}\left\{\mathbb
E^{\mathbb Q}[H] - H(\mathbb Q| \mathbb Q^*)\right\},
\end{equation*}
where EMM $(\mathbb Q^*)$ denotes the set of martingale measures,
equivalent to $\mathbb Q^*$ and $H(\mathbb Q| \mathbb Q^*)$ is defined by
\begin{equation*}
H(\mathbb Q| \mathbb Q^*) 
= \mathbb E^*\left[\frac{d\mathbb Q}{d\mathbb Q^*} \log \frac{d\mathbb Q}{d\mathbb Q^*}\right]
\end{equation*}
whenever this quantity is finite and equals $+\infty$ otherwise. 
Therefore, for any random variable $D>0$ such that $D \mathbb Q^*$ is a martingale measure,
\begin{align}
p\geq \mathbb E^{*}[D H_T] - \mathbb E^*[D \log D]. \label{rhs}
\end{align}
Let $\kappa = \frac{1}{2} - L\delta$,
introduce the stopping time 
\begin{equation*}
\tau_\kappa = \inf\{t\geq 0: \left|\int_0^t \bar\vartheta_s dS_s - H_t\right|\geq \kappa \}\wedge T
\end{equation*}
and define
\begin{equation*}
D = 1 + \int_0^{\tau_\kappa} \bar\vartheta_t dS_t - H_{\tau_\kappa}. 
\end{equation*}
By construction, $\mathbb E[D] =1$ and 
$$
|D-1|\leq \kappa + |\bar \vartheta_{\tau_\kappa-}
S_{\tau_{\kappa}-}\Delta X_{\tau_{\kappa}}|+  |\Delta
H_{\tau_\kappa}|\leq \kappa + 2L \delta \leq \frac{1}{2} + L\delta < 1.
$$
Moreover, for a bounded strategy $\vartheta$,
\begin{align*}
\mathbb E^*\left[D \int_0^T \vartheta_t dS_t\right] 
= \mathbb E^*\left[D \int_0^{\tau_\kappa} \vartheta_t dS_t\right] 
= \mathbb E^*\left[\left(\int_0^T \bar\vartheta_t dS_t - H_T\right) \int_0^{\tau_\kappa} \vartheta_t dS_t\right]
= 0
\end{align*}
because $\bar\vartheta$ is the optimal quadratic {hedging} strategy. Therefore, $D\mathbb Q^*$ is a martingale measure. 
It remains to compute the right-hand side of \eqref{rhs}. For the first term, using the Cauchy-Schwarz inequality and an estimate 
for $\tau_\kappa$, we get

\begin{align*}
&\mathbb E^{*}[D H_T] = \mathbb E^{*}[D H_{\tau_\kappa}] 
= \mathbb E^*\left[\left(H_{\tau_\kappa} - \int_0^{\tau_\kappa} \bar\vartheta_t dS_t\right)^2\right]\\
&\geq  \mathbb E^*\left[\left(H_T - \int_0^T \bar\vartheta_t dS_t\right)^2\right] 
- \mathbb E^*\left[\left(H_T - \int_0^T \bar\vartheta_t dS_t\right)^2 \mathbf 1_{{\tau_\kappa} < T}\right] \\
&\geq  \mathbb E^*\left[\left(H_T - \int_0^T \bar\vartheta_t dS_t\right)^2\right] 
-  \mathbb E^*\left[\left|H_T - \int_0^T \bar\vartheta_t dS_t\right|^{2+\varepsilon} \right]^{\frac{2}{2+\varepsilon}} 
\mathbb P[{\tau_\kappa} < T]^{\frac{\varepsilon}{2+\varepsilon}}\\
&\geq  \mathbb E^* \left[\left(H_T - \int_0^T \bar\vartheta_t dS_t\right)^2\right] - 
\frac{1}{\kappa^\varepsilon}\mathbb E^*\left[\sup_{0\leq t \leq T}\left|\int_0^t \bar\vartheta_s dS_s - H_t\right|^{2+\varepsilon}\right].
\end{align*}
The second term in \eqref{rhs} can be estimated using the following
Taylor formula: for every $\varepsilon \in (0,1)$ and $\Delta \in
(0,1)$, 
\begin{equation*}
x\log x + 1- x \leq \frac{(x-1)^2}{2} + C |x-1|^{2+\varepsilon},\quad
\forall x \in [1-\Delta,1+\Delta]\quad \text{with}\quad C = \frac{\Delta^{1-\varepsilon}}{6(1-\Delta)^2}.
\end{equation*}
Then, for $ { C_{L\delta\varepsilon} }= \frac{(1/2+L\delta)^{1-\varepsilon}}{6(1/2-L\delta)^2}$,
\begin{align*}
&\mathbb E^*[D\log D] = \mathbb E^*[D\log D + 1-D] \leq \frac{1}{2} \mathbb E^*[(D-1)^2] 
+ C_{L\delta\varepsilon} \mathbb E^*[|D-1|^{2+\varepsilon}]\\
& = \frac{1}{2} \mathbb E^*\left[\left(H_{\tau_\kappa} - \int_0^{\tau_\kappa} \bar\vartheta_t dS_t\right)^2\right] 
+ C_{L\delta\varepsilon}  \mathbb E^* \left[\left|H_{\tau_\kappa} - \int_0^{\tau_\kappa} \bar\vartheta_t dS_t \right| ^{2+\varepsilon}\right] \\
&\leq \frac{1}{2} \mathbb E^*\left[\left(H_T - \int_0^T \bar\vartheta_t dS_t\right)^2\right] 
+ (C_{L\delta\varepsilon}+\frac{1}{\kappa^\varepsilon})\mathbb E^*\left[\sup_{0\leq t \leq T}\left|\int_0^t \bar\vartheta_s dS_s -H_t\right|^{2+\varepsilon}\right].
\end{align*}
Adding up the estimates for the first and the second term of
\eqref{rhs}, and choosing the universal constant $C$ appropriately, the proof of the Lemma is complete. 
\end{proof}

\begin{ex}\label{ex1}
Let us check the assumptions of Theorem \ref{thm1} for the European put
option with pay-off $H = (K-S_T)^{{+}}$.
The process $(H_t)_{0\leq t\leq T}$ is given by{:}
\begin{equation*}
H_t := \mathbb{E}^* [H | \mathcal{F}_t] = P(t,S_t) 
\end{equation*}
where $P(t,S) = \mathbb{E}^* [(K-S\mathcal{E}(X)_{T-t})^{{+}}]$, and under
suitable regularity assumptions on the process $X$ (see e.g.,
\cite[Proposition 2]{options}), we have the martingale representation
$$
H_t = \mathbb E^*[H] + \int_0^t \sigma_t dX^c_t + \int_0^t \int_{\mathbb
  R}\gamma_s(z)\tilde J(ds\times dz)
$$
with
$$
\sigma_t = \frac{\partial P(t,S_t)}{\partial S}S_{t}\quad
\text{and}\quad \gamma_t(z) = P(t,S_{t-}(1+z)) - P(t,S_{t-}).
$$
By dominated convergence{:}
\begin{equation*}
\left| S\frac{\partial P(t,S)}{\partial S} \right|
= \mathbb{E}^* [S\mathcal{E}(X)_{T-t} 1_{S\mathcal{E}(X)_{T-t} \leq K}] \leq K.
\end{equation*}
On the other hand, for $z\in \supp \nu$, 
\begin{equation*}
|P(t,S(1+z))-P(t,S)| \leq \mathbb
E^*[|zS\mathcal{E}(X)_{T-t}|1_{S\mathcal{E}(X)_{T-t}(1+z)\wedge 1 \leq
  K}]\leq \frac{K|z|}{1-\delta}.   
\end{equation*}

\end{ex}
\section{Indifference price asymptotics in the neighborhood of the
  Black-Scholes model}
\label{bs.sec}
Since, as we have seen, the computation of the indifference price can
be carried out under the MEMM, in this section, to simplify notation
we omit the star in $\mathbb E^*$. In other words, we simply assume
that all the expectations are taken under the MEMM unless specified
otherwise, and that $X$ is a martingale L\'evy process with diffusion
component volatility $\sigma$ and L\'evy measure $\nu$. 

In liquid financial markets, jumps are typically small and in most
cases the
Black-Scholes model provides a correct ``order of magnitude''
approximation to option prices. Thus it seems reasonable, in these
markets, to treat more complex stochastic models as perturbations of
the Black-Scholes price and to compute correction terms to this
reference value. 
Our goal in this section is therefore to find an explicit approximation to the
indifference price \eqref{indprice.eq} in the situation when the
L\'evy process $X$ is ``close'' to the Brownian motion. 

To quantify what it means to be close to the Brownian motion, and following a recent paper by \v{C}ern\'{y}, Denkl and Kallsen
\cite{cdk}, we artificially introduce a small parameter
$\lambda \in (0,1)$ into the model, by considering the family of
stochastic processes 
\begin{equation*}
X^{\lambda}_t := \lambda X_{t/\lambda^2},\quad 0\leq t\leq T.
\end{equation*}
Note that our parameterization is slightly different from the one introduced in
\cite{cdk} because that paper considers L\'evy models built using
ordinary exponential, whereas we use the Dol\'eans-Dade
exponential. As a result, our formulas are somewhat simpler than the
ones of \cite{cdk}.

With this parameterization, 
$X^{1} = X$ and, as $\lambda\downarrow 0$,
$X^{\lambda}$ converges weakly in Skorokhod topology to the process
\begin{equation*}
(\overline X_{{t}})_{t\geq 0} = (\bar\sigma
W_{{t}})_{t\geq 0},
\end{equation*}
where $W$ is a standard Brownian motion and $\bar\sigma^2 = \sigma^2 +\int_{\mathbb R}x^2 \nu(dx)$. 
We then define:
\begin{equation*}
S^\lambda = S_0\mathcal{E}(X^\lambda).
\end{equation*}
{Similarly, it is easy to show that the process $S^\lambda$ convergences to $S_0\mathcal{E}( \bar\sigma W)$ as $\lambda$ tends to $0$.}

Let $H^\lambda=h(S^\lambda)$ and consider the corresponding
indifference price
\begin{equation}
p_\lambda= \frac{1}{\alpha} \log \inf_{\vartheta \in \Theta} \mathbb{E}\, e^{-\alpha ((\vartheta\cdot
S^\lambda)_T- H^\lambda)}\label{lambdaprice.eq}
\end{equation}
The following theorem provides an approximation of $p_\lambda$ when
$\lambda\to 0$ for European pay-offs, that is, we assume that $H =
h(S_T)$ and $H^\lambda = h(S^\lambda_T)$. In this theorem and below,
we let $P_{BS}(t,S)$ denote the Black-Scholes price of the
corresponding option computed with volatility $\bar\sigma$ defined by $\bar \sigma^2 =
\sigma^2 + \int x^2 \nu(dx)$, that is,
$$
 { P_{BS}(t,S)} = \mathbb E\left[h\left(Se^{-\frac{\bar\sigma^2}{2}(T-t) +
    \bar \sigma W_{T-t}}\right)\right],
$$
where $W$ is a standard Brownian motion. When $t=0$ we also write 
{$P_{BS}(0,S) = P_{BS}(S)$} to shorten notation.

\begin{thm}\label{thm2}
Assume that 
\begin{itemize}
\item The pay-off function $h$ is a bounded, almost everywhere
  differentiable, the derivative $h'$ has finite variation on
  $[0,\infty)$ and there exists $L<\infty$ such that $|xh'(x)|\leq L$
  almost everywhere.
\item Either $\sigma >0$ or there exists $\beta \in (0,2)$ such that $\liminf\limits_{r\downarrow 0} \frac{\int_{[-r,r]} x^2 \nu(dx)}{r^{2-\beta}} > 0$. 
\end{itemize}
Then, as $\lambda \to 0$, 
\begin{align*}
p_\lambda = &P_{BS}(S_0) 
+ \frac{\lambda m_3 T}{6} S_0^3 P_{BS}^{(3)}(S_0) + \frac{\lambda^2 m_4T}{24} S_0^4P_{BS}^{(4)}(S_0)\\
&+ \frac{\lambda^2m_3^2T^2}{72}
\left\lbrace  6S_0^3P_{BS}^{(3)}(S_0)+18S_0^4P_{BS}^{(4)}(S_0)
+9S_0^5P_{BS}^{(5)}(S_0)+S_0^6P_{BS}^{(6)}(S_0) \right\rbrace\\& + \frac{\alpha\lambda^2}{8}\left(m_4-\frac{m_3^2}{\bar{\sigma}^2} \right) \mathbb{E}^{BS} \left[ \int_0^T \left(S_t^2 \frac{\partial^2P_{BS}(t,S_t)}{\partial S^2} \right)^2 dt\right]+o(\lambda^2)
\end{align*}
where $m_3 = \int_\mathbb{R}x^3 \nu(dx)$, $m_4 = \int_\mathbb{R}x^4
\nu(dx)$ and $\mathbb E^{BS}$ denotes the expectation computed in the
Black-Scholes model with volatility $\bar\sigma$.
\end{thm}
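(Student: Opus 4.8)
The plan is to combine the non-asymptotic representation of Theorem~\ref{thm1} with a perturbative expansion of its two constituents in the scaling parameter $\lambda$. Note first that, since $X^\lambda_t=\lambda X_{t/\lambda^2}$ is a deterministic rescaling, a measure turns $X$ into a martingale if and only if it does so for $X^\lambda$; hence $\mathrm{EMM}(S)=\mathrm{EMM}(S^\lambda)$ and the MEMM is common to all $\lambda$, so all expectations below are under the single measure $\mathbb Q^*$ of Section~\ref{math.sec}. Applying Theorem~\ref{thm1} to the claim $H^\lambda$ yields the exact decomposition
$$
p_\lambda = \mathbb E[H^\lambda] + \frac{\alpha}{2}\,R_\lambda + r_\lambda,
\qquad
R_\lambda := \mathbb E\big[\big((\bar\vartheta^\lambda\cdot S^\lambda)_T - (H^\lambda - \mathbb E[H^\lambda])\big)^2\big],
$$
where $\bar\vartheta^\lambda$ is the quadratic hedging strategy for $H^\lambda$ and the remainder obeys $|r_\lambda|\le \alpha^{1+\varepsilon}C_{\alpha\delta L}\,\mathbb E[\sup_{t\le T}|(\bar\vartheta^\lambda\cdot S^\lambda)_t - (H^\lambda_t-\mathbb E[H^\lambda])|^{2+\varepsilon}]$. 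First I would check that \eqref{ass1}--\eqref{ass3} hold with a constant $L$ independent of $\lambda$: since $|\Delta X^\lambda|\le\lambda\delta<\delta$ and $\bar\sigma^2=\sigma^2+\int x^2\nu(dx)$ is itself $\lambda$-independent, the bound $|xh'(x)|\le L$ together with the estimates of Example~\ref{ex1}, extended from the put to the present payoff class, controls $|\sigma_t|$, $|\gamma_t(z)|$ and $|H^\lambda-\mathbb E[H^\lambda]|$ uniformly, so $C_{\alpha\delta L}$ stays bounded as $\lambda\to 0$. The problem then splits into three tasks: bounding $r_\lambda$, expanding $\mathbb E[H^\lambda]$, and expanding $R_\lambda$.

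For the remainder, the key observation is that the hedging error $(\bar\vartheta^\lambda\cdot S^\lambda)_t - (H^\lambda_t-\mathbb E[H^\lambda])$ is a purely discontinuous martingale driven by the compensated jumps of $X^\lambda$, whose sizes are of order $\lambda$. I would show it is $O(\lambda)$ in $L^{2+\varepsilon}$ uniformly on $[0,T]$, using the Burkholder--Davis--Gundy inequality together with the scaling $\int y^k\,\nu^\lambda(dy)=\lambda^{k-2}\int x^k\,\nu(dx)$ of the L\'evy measure $\nu^\lambda$ of $X^\lambda$. For any $\varepsilon\in(0,1]$ this gives $\mathbb E[\sup_t|\cdots|^{2+\varepsilon}]=O(\lambda^{2+\varepsilon})$ and hence $r_\lambda=o(\lambda^2)$, so the remainder does not contribute.

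Next I would expand $\mathbb E[H^\lambda]$, the martingale-measure price of $h(S^\lambda_T)$, which converges to $P_{BS}(S_0)$ as $X^\lambda\Rightarrow\bar\sigma W$. The corrections come from expanding the law of $\log(S^\lambda_T/S_0)$ around the Gaussian: the cumulant of order $k\ge3$ is of order $\lambda^{k-2}$, the third being $\lambda m_3 T$ and the fourth $\lambda^2 m_4 T$, while the square of the third cumulant also enters at order $\lambda^2$. Equivalently, writing the generator $\mathcal L^\lambda$ of $X^\lambda$ as a perturbation of the Black--Scholes generator and using a Duhamel (Dyson) expansion of the associated semigroups, the first-order term produces $\tfrac{\lambda m_3 T}{6}S_0^3 P_{BS}^{(3)}(S_0)$, the second-order (squared) term produces the $m_3^2$ block with derivatives up to order six, and the fourth cumulant produces $\tfrac{\lambda^2 m_4 T}{24}S_0^4 P_{BS}^{(4)}(S_0)$. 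The hard part, and the main departure from \cite{cdk}, is that $h$ is only almost everywhere differentiable, so the derivatives $P_{BS}^{(3)},\dots,P_{BS}^{(6)}$ must be generated by transferring derivatives onto the Gaussian Black--Scholes kernel rather than onto $h$. I would exploit the smoothing of the Black--Scholes semigroup together with the finite-variation of $h'$ (integrating by parts against $dh'$) and the non-degeneracy assumption (either $\sigma>0$ or the $\liminf$ condition on $\nu$), which guarantees that $S^\lambda_T$ has a regular enough density for these derivatives to be finite and for the remainder to be genuinely $o(\lambda^2)$.

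Finally I would expand $R_\lambda$. Using the representation \eqref{martrep.eq} and the optimal strategy from the proof of Theorem~\ref{thm1}, the residual variance rate equals $\sigma_t^2+\int\gamma_t(z)^2\nu^\lambda(dz)-\bar\sigma^{-2}\big(\sigma\sigma_t+\int z\gamma_t(z)\nu^\lambda(dz)\big)^2$, i.e. the squared $L^2(\nu^\lambda)$-norm of the jump exposure $\gamma_t(\cdot)$ after projecting out the delta-hedgeable direction $z$. To leading order $P^\lambda\approx P_{BS}$, so $\gamma_t(y)=S_t y\,\partial_S P_{BS}+\tfrac12 S_t^2 y^2\,\partial_S^2 P_{BS}+O(y^3)$: the linear term is annihilated by the projection and the quadratic (Gamma) term survives. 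Inserting $\int y^3\nu^\lambda=\lambda m_3$ and $\int y^4\nu^\lambda=\lambda^2 m_4$, with the projection producing exactly the $m_3^2/\bar\sigma^2$ subtraction, gives
$$
R_\lambda = \frac{\lambda^2}{4}\Big(m_4 - \frac{m_3^2}{\bar\sigma^2}\Big)\,\mathbb E^{BS}\Big[\int_0^T\big(S_t^2\,\partial_S^2 P_{BS}(t,S_t)\big)^2\,dt\Big] + o(\lambda^2).
$$
Multiplying by $\alpha/2$ produces the last term of the theorem; adding the expansion of $\mathbb E[H^\lambda]$ and the negligible $r_\lambda$ completes the proof. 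I expect the expansion of $\mathbb E[H^\lambda]$ for non-smooth $h$ to be the principal obstacle, with the justification that the replacement of $P^\lambda$ by $P_{BS}$ inside $R_\lambda$ costs only $o(\lambda^2)$ a close second.
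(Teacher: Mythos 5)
Your overall architecture is exactly the paper's: apply Theorem~\ref{thm1} to $H^\lambda$, then expand the linear part, the quadratic residual risk, and the error term separately; your identification of where each contribution comes from, including the $m_3^2/\bar\sigma^2$ subtraction produced by projecting the jump exposure onto the delta-hedgeable direction, matches Lemmas~\ref{linear.lm} and~\ref{nonlinear.lm}. However, there is a genuine gap in your treatment of the remainder term, and it is precisely where the non-smoothness of $h$ bites. The jumps of the hedging-error martingale are of the form $z\bar\vartheta^\lambda_t S^\lambda_{t-} - \bigl(P_\lambda(t,S^\lambda_{t-}(1+z)) - P_\lambda(t,S^\lambda_{t-})\bigr)$, and the only way to extract a factor $z^2$ (hence, after integrating against $\nu_\lambda$, the factor $\lambda^2$ you need) is a second-order Taylor bound involving the option's gamma. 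For payoffs in the admissible class (e.g.\ the put), $\max_S S|\Gamma_\lambda(t,S)|$ blows up like $(T-t)^{-1/2}$ when $\sigma>0$, and has an additional $\lambda$-weighted $(T-t)^{-1/\beta}$ singularity when $\sigma=0$ (Lemma~\ref{gamma.lm}); consequently $\int_0^T \bigl(\max_S S|\Gamma_\lambda(t,S)|\bigr)^2 dt$ diverges and the ``BDG plus moment scaling'' argument you sketch does not close. The alternative bound through the delta, $|\Xi^\lambda_t(z)|\leq C|z|$, is integrable in time but only yields $O(1)$, not $O(\lambda^2)$. The paper's Lemma~\ref{residual.lm} resolves this by splitting time at $T-\eta$, using the gamma bound on $[0,T-\eta]$ and the delta bound on $[T-\eta,T]$, then taking $\eta=\lambda^2$; the price is a logarithmic loss, $\mathbb E[(\bar M_T^\lambda)^q]=O\bigl(\lambda^q(\log\frac{1}{\lambda})^{q/2}\bigr)$, rather than the clean $O(\lambda^{2+\varepsilon})$ you assert. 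The conclusion $r_\lambda=o(\lambda^2)$ survives because the logarithm is harmless, but your claimed rate is not reachable by your method and the splitting idea is the missing ingredient. (Also, when $\sigma>0$ the error martingale is not purely discontinuous: $\bar\vartheta^\lambda$ differs from $\partial_S P_\lambda$ by the jump correction, so a Brownian component remains.)

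Two further points. First, your opening claim that the MEMM is common to all $\lambda$ is incorrect: the Esscher parameter minimizing $\ell_\lambda$ depends on $\lambda$, and the time change $t\mapsto t/\lambda^2$ makes the martingale-measure sets incomparable; it is also unnecessary, since the paper simply \emph{defines} $p_\lambda$ by \eqref{lambdaprice.eq} under the fixed martingale measure adopted at the start of Section~\ref{bs.sec}. Second, for the two expansions your techniques differ from the paper's: you propose a cumulant/Duhamel expansion for $\mathbb E[H^\lambda]$ and a direct Taylor expansion of the jump exposure for the residual risk, whereas the paper uses It\={o}'s formula with iterated Taylor--Lagrange expansions and an $\varepsilon$-mollification (Lemma~\ref{linear.lm}, Lemma~\ref{lemcv}), and a Fourier-analytic computation with domination arguments over four regions of the $(u,v)$-plane (Lemma~\ref{nonlinear.lm}). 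These alternatives are plausible in outline, but the two steps you flag as ``obstacles'' --- controlling $P_{BS}^{(3)},\dots,P_{BS}^{(6)}$ for merely a.e.\ differentiable $h$, and justifying the replacement of $P_\lambda$ by $P_{BS}$ inside the residual risk at cost $o(\lambda^2)$ --- are exactly where the paper expends its Fourier machinery (the bound \eqref{boundlambda.eq}), so naming them is not the same as closing them.
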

\begin{rem}
It is easy to check that the pay-off function of the European put
option $h(x) = (K-x)^+$ satisfies the first assumption of the
Theorem. Moreover, this assumption implies that assumptions of Theorem
\ref{thm1} (by an argument similar to the one given in Example \ref{ex1}). As for the second assumption, it is satisfied by most
parametric L\'evy models used in practice, such as CGMY
\cite{finestructure} (with $Y>0$) and normal
inverse Gaussian \cite{Barndorff-Nielsen}. It is not satisfied by the
variance gamma model \cite{madan98}. 
\end{rem}
\begin{proof}
The proof is based on the non-asymptotic approximation formula of Theorem
\ref{thm1}, applied to the process $S^\lambda$, which takes the
form
\begin{equation*}
\left| p_\lambda - \mathbb E[H^\lambda] - \frac{\alpha}{2} \mathbb
  E\left[\left(\int_0^T \bar\vartheta_s^\lambda dS^\lambda_s -
      (H^\lambda-\mathbb E[H^\lambda])\right)^2\right] \right|
\leq  C \mathbb E\left[\sup_{0\leq t\leq T}\left|\int_0^t
    \bar\vartheta_s^\lambda dS_s -(H^\lambda_t-\mathbb E[H^\lambda])\right|^{2+\varepsilon}\right],
\end{equation*}
where $\bar{\vartheta}^\lambda = \argmin_\vartheta  { \mathbb{E}}
\left[\left(\int_0^T \vartheta_t dS_t^\lambda
    -(H^\lambda-\mathbb E[H^\lambda])\right)^2\right]$. The following lemmas provide
estimates of the linear part of the price $\mathbb E[H^\lambda]$,
the nonlinear part of the price $\mathbb E\left[\left(\int_0^T
    \bar\vartheta_s^\lambda dS^\lambda_s - (H^\lambda-\mathbb E[H^\lambda])\right)^2\right]$
and the residual term in the right-hand side. In these lemmas we
suppose that the standing assumptions of the paper hold true. Note
that the expansion of the linear part of the price does not require
the pay-off function $h$ to be regular, but in the other two lemmas,
regularity is an essential assumption without which the convergence
rates as $\lambda \downarrow 0$ may be different. 

\begin{lem}[Estimation of the residual term]\label{residual.lm}
Let the assumptions of Theorem \ref{thm2} hold true, 
let $M_t^\lambda = \int_0^t \bar{\vartheta}^{\lambda}_s dS_s^\lambda -
(H_t^\lambda-\mathbb E[h(S^\lambda_T)])$ and define $\bar{M}_T^{\lambda} = \sup_{0\leq t\leq T}
|M_t^\lambda|$. Then $\forall q>2$, as $\lambda \to 0$
\begin{equation*}
\mathbb E\left[(\bar{M}_T^\lambda)^{q}\right] = O\left(\lambda^q \left(\log \frac{1}{\lambda}\right)^{\frac{q}{2}}\right). 
\end{equation*}
\end{lem}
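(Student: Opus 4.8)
The plan is to use that $M^\lambda$ is a martingale under the MEMM — it is the difference of the hedging gains $\int_0^\cdot \bar\vartheta^\lambda_s\,dS^\lambda_s$ and the martingale $H^\lambda_\cdot-\mathbb E[H^\lambda]$ — and to trade the running supremum for the quadratic variation through the Burkholder--Davis--Gundy inequality: for every $q>2$ one has $\mathbb E[(\bar M_T^\lambda)^q]\le C_q\,\mathbb E\big[[M^\lambda]_T^{q/2}\big]$, so it suffices to establish $\mathbb E[[M^\lambda]_T^{q/2}]=O\big(\lambda^q(\log\tfrac1\lambda)^{q/2}\big)$. I would decompose $[M^\lambda]_T=\langle M^{\lambda,c}\rangle_T+\sum_{s\le T}(\Delta M_s^\lambda)^2$ into its continuous and purely discontinuous parts, and express both through the price function $P(t,S)=\mathbb E[h(S\mathcal E(X^\lambda)_{T-t})]$ and its $S$-derivatives, using the martingale representation $H_t^\lambda=\mathbb E[H^\lambda]+\int_0^t a_s\,dX_s^c+\int_0^t\!\int b_s(z)\,\tilde J^\lambda(ds,dz)$ as in \eqref{martrep.eq}, with $a_s=\sigma S^\lambda_{s-}\partial_S P$ and $b_s(z)=P(s,S^\lambda_{s-}(1+z))-P(s,S^\lambda_{s-})$.

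The heart of the matter is a pointwise estimate of the two integrands. Writing $\xi_s=\bar\vartheta^\lambda_s S^\lambda_{s-}$, the jump of $M^\lambda$ caused by a jump $z$ of $X^\lambda$ is $g_s(z)=\xi_s z-b_s(z)$ and the continuous integrand is $\xi_s\sigma-a_s$. Since $\xi_s$ is the projection coefficient $\bar\sigma^{-2}\big(\sigma a_s+\int z\,b_s(z)\,\nu^\lambda(dz)\big)$, and the jumps of $X^\lambda$ are bounded by $\lambda\delta$ while $\int z^2\nu^\lambda(dz)=O(1)$, $\int z^3\nu^\lambda(dz)=\lambda m_3$, $\int z^4\nu^\lambda(dz)=\lambda^2 m_4$, a second-order Taylor expansion of $P$ in $S$ yields the cancellation $|\xi_s-S^\lambda_{s-}\partial_S P|\le C\lambda\,|S^2\partial_S^2 P|$ and hence $|g_s(z)|\le C\,|S^2\partial_S^2 P|\,(z^2+\lambda|z|)$. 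Consequently both angle-bracket densities are bounded by $C\lambda^2(S^2\partial_S^2 P)^2$, which gives the $L^2$ estimate $\mathbb E[[M^\lambda]_T]=O(\lambda^2)$ and recovers the nonlinear term of Theorem \ref{thm2}. The ingredient this requires is a uniform-in-$\lambda$ bound on the Lévy Greek, namely $|S^2\partial_S^2 P(t,S)|\le C(T-t)^{-1/2}$, which is exactly what the second assumption of Theorem \ref{thm2} (positivity of $\sigma$ or a positive lower activity index) guarantees through the smoothing of the Lévy semigroup.

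For $q>2$ the difficulty is that $(T-t)^{-1/2}$ blows up at maturity, so the squared integrand carries the non-integrable weight $(T-t)^{-1}$. I would treat this by splitting $[0,T]$ at the scale $T-s\sim\lambda^2$, the level at which the Taylor bound $|g_s(z)|\le C(T-s)^{-1/2}(z^2+\lambda|z|)$ crosses over the crude pathwise bound $|g_s(z)|\le C|z|$. On $\{T-s>\lambda^2\}$ the Taylor estimate gives a compensator density $\le C\lambda^2(T-s)^{-1}$, and integrating down only to the cutoff produces $\int_{\lambda^2}^{T}u^{-1}\,du=O(\log\tfrac1\lambda)$; on $\{T-s\le\lambda^2\}$ the crude density $\le C\int z^2\nu^\lambda(dz)=O(1)$ integrates to a negligible $O(\lambda^2)$. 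This yields the deterministic bound $\langle M^\lambda\rangle_T\le C\lambda^2\log\tfrac1\lambda$ together with the pathwise jump bound $\sup_s|\Delta M^\lambda_s|\le C\lambda$, after which the standard inequality $\mathbb E[[M^\lambda]_T^{q/2}]\le C_q\big(\mathbb E[\langle M^\lambda\rangle_T^{q/2}]+\mathbb E[(\sup_s|\Delta M^\lambda_s|)^{q}]\big)$ closes the estimate, any residual powers of $\sup_{t\le T}S^\lambda_t$ being controlled uniformly in $\lambda$ by the exponential-moment bound of Theorem 25.18 in \cite{Sato}. The hard part will be establishing the uniform-in-$\lambda$ Greek bound $|S^2\partial_S^2 P|\le C(T-t)^{-1/2}$ with the correct blow-up rate for the non-smooth payoff $h$ — this is precisely where the non-degeneracy/activity assumption is indispensable — and making the switch between the Taylor and crude regimes rigorous at the level of the predictable compensator rather than heuristically.
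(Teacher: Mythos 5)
Your plan is, in its architecture, the same proof as the paper's: write $M^\lambda$ explicitly through the price function $P_\lambda$ and the quadratic-hedging weight, reduce the sup via a BDG/Rosenthal-type inequality to predictable quantities, bound the bracket density by the Taylor (gamma) estimate away from maturity and by the crude delta (Lipschitz-in-$z$) estimate on a terminal window of length $\eta=\lambda^2$, so that the logarithm comes from $\int_{\lambda^2}^T u^{-1}du$, and absorb residual powers of $\sup_t S_t^\lambda$ by moment bounds that are uniform in $\lambda$. Two points of comparison are worth recording. First, your handling of the jump contribution is simpler than the paper's: you use the pathwise bound $\sup_s|\Delta M_s^\lambda|\le C\lambda$ (valid because jumps of $X^\lambda$ are bounded by $\lambda\delta$ while $\xi_s$ and $b_s(\cdot)$ are Lipschitz with constant of order $L$) together with the standard Lenglart--L\'epingle--Pratelli inequality, whereas the paper estimates the compensated $q$-th power jump measure $|x|^q\star\nu_T^{M^\lambda}$ by rerunning the same Fourier/gamma machinery with a second splitting; both routes close the estimate. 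Second, a caution: your stated Greek bound $|S^2\partial_S^2 P_\lambda(t,S)|\le C(T-t)^{-1/2}$, uniform in $\lambda$ and $t$, is \emph{false} in the pure-jump case $\sigma=0$; what actually holds (Lemma \ref{gamma.lm} of the paper) is the two-term bound
\begin{equation*}
\max_S\, S\,|\Gamma_\lambda(t,S)|\le C\left\{\frac{1}{\sqrt{T-t}}+\frac{\lambda^{\frac{2}{\beta}-1}}{(T-t)^{\frac{1}{\beta}}}\right\},
\end{equation*}
whose second term dominates exactly when $T-t\lesssim\lambda^2$. This does not damage your argument, since you only invoke the bound on $\{T-s>\lambda^2\}$, where the extra term is itself $O((T-s)^{-1/2})$; but the lemma you would need to prove is this two-term version, and its proof (decomposing $h$ into calls via the finite variation of $h'$, the Fourier representation of the call gamma, and the lower bound on the characteristic exponent furnished by the non-degeneracy assumption) is precisely the technical core that your plan defers, as you yourself anticipate.
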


\begin{lem}[Estimation of the nonlinear part of the
  price]\label{nonlinear.lm}
Let the assumptions of Theorem \ref{thm2} hold true.
Then, as $\lambda \to 0${,}
\begin{equation*}
\mathbb{E} \left[\left(\int_0^T \bar\vartheta^\lambda_t dS^\lambda_t -h(S^\lambda_T)\right)^2\right]
= \frac{\lambda^2}{4}\left(m_4-\frac{m_3^2}{\bar{\sigma}^2} \right)
\mathbb{E}^{BS} \left[ \int_0^T \left(S_t^2 \frac{\partial^2P_{BS}(t,S_t)}{\partial S^2} \right)^2 dt\right] +o(\lambda^2).
\end{equation*}

In addition, for the European put option with pay-off function $h(S_T)
= (K-S_T)^+$, 
{
\begin{equation*}
\mathbb{E}^{BS} \left[ \int_0^T \left(S_t^2 \frac{\partial^2P_{BS}(t,S_t)}{\partial S^2} \right)^2 dt\right]
= \frac{K^2}{2\pi\bar{\sigma}^2}  \int_0^1 \frac{e^{\frac{-d^2}{1+u}}du}{\sqrt{1-u^2}}
\end{equation*}
where $d = \frac{\log{\frac{S_0}{K}}-\frac{\bar{\sigma}^2}{2}T}{\bar{\sigma}\sqrt{T}}$.}

\end{lem}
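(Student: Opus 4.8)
The plan is to evaluate the minimal quadratic hedging error directly, exploiting that under the MEMM the process $S^\lambda$ is a martingale, so the optimal strategy is the Galtchouk--Kunita--Watanabe (GKW) projection and the residual risk equals the expected sharp bracket of the orthogonal part $L^\lambda$. Writing the value process as $H^\lambda_t = P^\lambda(t,S^\lambda_t)$ with $P^\lambda(t,S) = \mathbb E[h(S\mathcal E(X^\lambda)_{T-t})]$ and using the representation \eqref{martrep.eq} exactly as in Example \ref{ex1} (now for a general $h$), I have $\sigma^\lambda_t = S^\lambda_{t-}\partial_S P^\lambda(t,S^\lambda_{t-})$ and $\gamma^\lambda_t(z)=P^\lambda(t,S^\lambda_{t-}(1+z))-P^\lambda(t,S^\lambda_{t-})$, and the quadratic hedging error becomes $\mathbb E[\langle L^\lambda\rangle_T]=\mathbb E[\int_0^T R^\lambda_t\,dt]$ with instantaneous rate
\begin{equation*}
R^\lambda_t=\sigma^2(\sigma^\lambda_t)^2+\int\gamma^\lambda_t(z)^2\,\nu^\lambda(dz)-\frac1{\bar\sigma^2}\Big(\sigma^2\sigma^\lambda_t+\int z\,\gamma^\lambda_t(z)\,\nu^\lambda(dz)\Big)^2 .
\end{equation*}
This is the object I expand in $\lambda$.

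The scaling $X^\lambda_t=\lambda X_{t/\lambda^2}$ leaves the diffusion volatility $\sigma$ and the second moment $\int z^2\nu^\lambda(dz)=\int x^2\nu(dx)=\bar\sigma^2-\sigma^2$ unchanged, while $\int z^3\nu^\lambda(dz)=\lambda m_3$ and $\int z^4\nu^\lambda(dz)=\lambda^2m_4$; in particular $\nu^\lambda$ concentrates on small jumps $z=\lambda x$. I then Taylor-expand $\gamma^\lambda_t(z)=\sum_{k\ge1}\frac1{k!}\partial_S^kP^\lambda(t,S^\lambda_{t-})\,(S^\lambda_{t-}z)^k$ and substitute into $R^\lambda_t$. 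Collecting powers of $\lambda$, the order-$\lambda^0$ contribution cancels because $\sigma^2+(\bar\sigma^2-\sigma^2)=\bar\sigma^2$, the order-$\lambda^1$ term (proportional to $m_3\,\partial_SP^\lambda\,\partial_S^2P^\lambda$) cancels between the two integrals, and in the order-$\lambda^2$ term the mixed $\partial_SP^\lambda\,\partial_S^3P^\lambda$ contributions cancel as well, leaving
\begin{equation*}
R^\lambda_t=\frac{\lambda^2}4\Big(m_4-\frac{m_3^2}{\bar\sigma^2}\Big)\Big((S^\lambda_{t-})^2\,\partial_S^2P^\lambda(t,S^\lambda_{t-})\Big)^2+o(\lambda^2).
\end{equation*}
Integrating, taking expectations, and replacing $P^\lambda$ by $P_{BS}$ and the law of $S^\lambda$ by its Black--Scholes limit yields the first claim.

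The hard part is making the last two steps rigorous for a non-smooth $h$. The derivatives $\partial_S^kP^\lambda(t,\cdot)$ blow up as $t\uparrow T$ (for the put, $\partial_S^2P_{BS}\sim(T-t)^{-1/2}$ near the money, and higher derivatives are worse), so the Taylor remainder carries time-singular factors, and one must show that after integrating against the law of $S^\lambda_{t-}$ and over $t\in[0,T]$ it is genuinely $o(\lambda^2)$ and that the limit $\lambda\downarrow0$ may be taken inside. This is precisely where the second assumption of Theorem \ref{thm2} enters: either $\sigma>0$, or the lower bound on $\int_{[-r,r]}x^2\nu(dx)$ furnishes parabolic smoothing for the semigroup of $X^\lambda$, yielding bounds on $\mathbb E[|(S^\lambda_{t-})^k\partial_S^kP^\lambda(t,S^\lambda_{t-})|^p]$ with an integrable-in-$t$ singularity, uniform in $\lambda$, together with a density bound for $S^\lambda_{t-}$. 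I would isolate these into a technical estimate (bounding the derivatives of $P^\lambda$ and the remainder of the jump Taylor expansion) and then conclude by dominated convergence, using that $P^\lambda\to P_{BS}$ together with its derivatives and that $S^\lambda\Rightarrow S_0\mathcal E(\bar\sigma W)$.

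For the explicit put formula I would compute the Black--Scholes Gamma, $S^2\partial_S^2P_{BS}(t,S)=S\,\phi(d_1)/(\bar\sigma\sqrt{T-t})$ with $d_1=(\log(S/K)+\tfrac12\bar\sigma^2(T-t))/(\bar\sigma\sqrt{T-t})$ and $\phi$ the standard normal density, so that $(S_t^2\partial_S^2P_{BS})^2=S_t^2\phi(d_1)^2/(\bar\sigma^2(T-t))$. Since under $\mathbb E^{BS}$ the variable $\log S_t$ is Gaussian with mean $-\tfrac12\bar\sigma^2t$ and variance $\bar\sigma^2t$, the inner expectation $\mathbb E^{BS}[S_t^2\phi(d_1(t,S_t))^2]$ is a one-dimensional Gaussian integral; completing the square (the two variances add to $\bar\sigma^2T$, which produces the clean cancellations) gives $\tfrac{S_0^2}{2\pi}\sqrt{(T-t)/(T+t)}\exp\{-2\log(S_0/K)-d^2/(1+t/T)\}$ with $d=(\log(S_0/K)-\tfrac12\bar\sigma^2T)/(\bar\sigma\sqrt T)$. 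Dividing by $\bar\sigma^2(T-t)$ turns the prefactor into $\tfrac{K^2}{2\pi\bar\sigma^2}(T^2-t^2)^{-1/2}$, and the substitution $u=t/T$, together with $\sqrt{T^2-t^2}=T\sqrt{1-u^2}$, yields exactly $\tfrac{K^2}{2\pi\bar\sigma^2}\int_0^1 e^{-d^2/(1+u)}/\sqrt{1-u^2}\,du$, as claimed.
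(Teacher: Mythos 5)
Your starting point coincides with the paper's: since $S^\lambda$ is a martingale under the pricing measure, the quadratic hedging error equals $\mathbb E[\langle M^\lambda\rangle_T]$, and your rate $R^\lambda_t$ is algebraically identical to the integrand in \eqref{bracket.eq}, namely $\int(\Xi^\lambda_t(z))^2\nu_\lambda(dz)-\frac{1}{\bar\sigma^2}\bigl(\int z\,\Xi^\lambda_t(z)\nu_\lambda(dz)\bigr)^2$ with $\Xi^\lambda_t(z)=P_\lambda(t,S^\lambda_{t-}(1+z))-P_\lambda(t,S^\lambda_{t-})-zS^\lambda_{t-}\partial_S P_\lambda$. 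Your formal expansion also yields the correct coefficient $\frac{\lambda^2}{4}\bigl(m_4-\frac{m_3^2}{\bar\sigma^2}\bigr)$, and your derivation of the explicit put formula is complete and correct --- essentially the paper's own computation (the paper first invokes $S\phi(d_1)=K\phi(d_2)$, you complete the square directly; both give $\frac{K^2}{2\pi\bar\sigma^2}\int_0^1 e^{-d^2/(1+u)}(1-u^2)^{-1/2}du$).

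The genuine gap is the analytic core, which you defer to ``a technical estimate'' that, as formulated, is false. You propose bounds $\mathbb E\bigl[|(S^\lambda_t)^k\partial_S^k P_\lambda(t,S^\lambda_t)|^p\bigr]\leq C\,g(t)$ with $g$ integrable in $t$ uniformly in $\lambda$, and then termwise dominated convergence in the $z$-Taylor expansion. Such bounds fail beyond $k=2$ for non-smooth $h$: already in the Black--Scholes limit, $S^3\partial_S^3 P_{BS}=-\frac{S\phi(d_1)}{\bar\sigma\sqrt{T-t}}\bigl(1+\frac{d_1}{\bar\sigma\sqrt{T-t}}\bigr)$ for the put, and taking the expectation only gains a factor of order $\sqrt{T-t}$ from the concentration of $\phi(d_1)^2$, so $\mathbb E^{BS}\bigl[(S_t^3\partial_S^3P_{BS}(t,S_t))^2\bigr]\asymp (T-t)^{-3/2}$ as $t\uparrow T$, which is not integrable. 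Consequently the term $\frac{\lambda^4 m_6}{36}\int_0^T\mathbb E\bigl[(S_t^3\partial_S^3P_\lambda)^2\bigr]dt$ arising from the cube of the Taylor expansion is divergent term by term (a uniform-in-$\lambda$ integrable bound would contradict the Black--Scholes computation as $\lambda\to0$), so ``collecting powers of $\lambda$'' and dominating each power separately cannot work: the expansion is not integrable in time near maturity, even though the full bracket is finite. This is precisely why the paper abandons the state-space Taylor expansion: it decomposes $h$ into calls using the finite variation of $h'$, writes $\Xi^\lambda_t(z)$ exactly as a Fourier integral against the characteristic function $\Phi^\lambda_{T-t}$, reduces $\mathbb E[\langle M^\lambda\rangle_T]$ to integrals of $\Phi^\lambda_{T-t}(-u-i)\Phi^\lambda_{T-t}(-v-i)\Phi^\lambda_t(-u-v-2i)$ against explicit kernels $a_\lambda,b_\lambda$, and justifies the passage to the limit via the decay bound \eqref{boundlambda.eq} on $\Re\psi_\lambda$ together with a four-region splitting of the $(u,v)$-plane --- this is where the second assumption of Theorem \ref{thm2} is actually used and where most of the work lies. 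To repair your argument you would need either this Fourier machinery or a $\lambda$-dependent splitting of the time interval near maturity (as in the proof of Lemma \ref{residual.lm}, with $\eta=\lambda^2$ and a delta-based rather than gamma-based bound on $[T-\eta,T]$), combined with an argument exploiting cancellations in the cross terms; none of this is supplied in your proposal.
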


\begin{lem}[Estimation of the linear part of the
  price]\label{linear.lm}
Assume that
\begin{itemize}
\item The function $h$ is measurable with polynomial growth.
\item Either $\sigma >0$  or there exists $\beta \in (0,2)$ such that $\liminf\limits_{r\downarrow 0} \frac{\int_{[-r,r]} x^2 \nu(dx)}{r^{2-\beta}} > 0$.
\end{itemize}
Then, as $\lambda \to 0${,}
\begin{align*}
\mathbb{E}[h(S_T^\lambda)] &= P_{BS}(S_0) 
+ \frac{\lambda m_3 T}{6} S_0^3 P_{BS}^{(3)}(S_0) + \frac{\lambda^2 m_4T}{24} S_0^4P_{BS}^{(4)}(S_0)\\
&+ \frac{\lambda^2m_3^2T^2}{72}
\left\lbrace  6S_0^3P_{BS}^{(3)}(S_0)+18S_0^4P_{BS}^{(4)}(S_0)
+9S_0^5P_{BS}^{(5)}(S_0)+S_0^6P_{BS}^{(6)}(S_0) \right\rbrace+o(\lambda^2).
\end{align*}
\end{lem}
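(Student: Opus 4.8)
The plan is to reduce the whole lemma to a single, exactly computable object — the complex moment of the Dol\'eans--Dade exponential, $\mathbb{E}[\mathcal{E}(X^\lambda)_T^{z}]$ — and then to recover $\mathbb{E}[h(S^\lambda_T)]$ for a general pay-off by Mellin inversion. For a power pay-off $h(S)=S^z$ one has $\mathbb{E}[h(S^\lambda_T)]=S_0^z\,\mathbb{E}[\mathcal{E}(X^\lambda)_T^{z}]=S_0^z e^{T\Lambda_\lambda(z)}$, and a direct computation (using that under the MEMM $X^\lambda$ has Gaussian part $\sigma^2$ and L\'evy measure $\lambda^{-2}\nu(\cdot/\lambda)$, and that $X$ is a martingale) gives $\Lambda_\lambda(z)=\tfrac{\sigma^2}{2}z(z-1)+\lambda^{-2}\int\big((1+\lambda x)^{z}-1-z\lambda x\big)\nu(dx)$. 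Expanding $(1+\lambda x)^{z}-1-z\lambda x=\sum_{k\ge2}\binom{z}{k}(\lambda x)^k$ and integrating term by term (legitimate since the jumps are bounded) yields the clean expansion $\Lambda_\lambda(z)=\tfrac{\bar\sigma^2}{2}z(z-1)+\lambda\tfrac{m_3}{6}(z)_3+\lambda^2\tfrac{m_4}{24}(z)_4+O(\lambda^3)$, where $(z)_k=z(z-1)\cdots(z-k+1)$ and the remainder is controlled by the higher moments of $\nu$.

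First I would exponentiate and expand in $\lambda$, obtaining $\mathbb{E}[h(S^\lambda_T)]=S_0^z e^{T\bar\sigma^2 z(z-1)/2}\big[1+\lambda\tfrac{m_3 T}{6}(z)_3+\lambda^2\big(\tfrac{m_4 T}{24}(z)_4+\tfrac{m_3^2T^2}{72}(z)_3^2\big)+o(\lambda^2)\big]$. The decisive observation is that $S_0^z e^{T\bar\sigma^2 z(z-1)/2}=P_{BS}(S_0)$ for the pay-off $S^z$, and that the Euler operator acts on powers by $S_0^k\partial_{S_0}^k S_0^z=(z)_k S_0^z$; hence each factor $(z)_k$ in front of the Gaussian moment is exactly $S_0^k P_{BS}^{(k)}(S_0)$. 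For powers the asserted formula therefore holds verbatim once $(z)_3^2$ is rewritten in the falling-factorial basis. A short Stirling-number computation (checkable at $z=3,4,5$) gives $(z)_3^2=(z)_6+9(z)_5+18(z)_4+6(z)_3$, i.e. $(S^3\partial_S^3)^2=S^6\partial_S^6+9S^5\partial_S^5+18S^4\partial_S^4+6S^3\partial_S^3$, which reproduces exactly the four-term bracket multiplying $m_3^2$.

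To pass from powers to a general $h$ of polynomial growth I would write $h(S)=\frac{1}{2\pi i}\int_{c-i\infty}^{c+i\infty}\widehat h(z)S^z\,dz$ along a vertical contour inside the analyticity strip of its Mellin transform, and integrate the power-function expansion against $\widehat h(z)\,dz$. Since $S_0^k P_{BS}^{(k)}(S_0)=\frac{1}{2\pi i}\int (z)_k\widehat h(z)S_0^z e^{T\bar\sigma^2 z(z-1)/2}\,dz$, the contour integral turns the $(z)_k$-expansion for powers into the stated expansion in the derivatives of the genuine Black--Scholes price $P_{BS}=e^{T\mathcal{L}_0}h$. Equivalently, one may organise the same computation through the generator of $S^\lambda$, namely $\mathcal{L}_\lambda=\mathcal{L}_0+\lambda\mathcal{L}_1+\lambda^2\mathcal{L}_2+O(\lambda^3)$ with $\mathcal{L}_0=\tfrac{\bar\sigma^2}{2}S^2\partial_S^2$, $\mathcal{L}_1=\tfrac{m_3}{6}S^3\partial_S^3$, $\mathcal{L}_2=\tfrac{m_4}{24}S^4\partial_S^4$: in the log-variable these are all constant-coefficient operators, hence they commute, so the Duhamel expansion of $e^{T\mathcal{L}_\lambda}h$ collapses the iterated time integrals to the factors $T$ and $T^2/2$ and produces $P_{BS}+\lambda T\mathcal{L}_1 P_{BS}+\lambda^2(T\mathcal{L}_2+\tfrac{T^2}{2}\mathcal{L}_1^2)P_{BS}+o(\lambda^2)$, the same answer.

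The main obstacle is rigour rather than algebra: $h$ is only measurable with polynomial growth, so none of the operators $S^k\partial_S^k$ may be applied to $h$ itself, and the whole expansion must live on the smoothed object $P_{BS}$. Two bounds must be established uniformly in small $\lambda$. First, the Mellin (equivalently Fourier) inversion must be valid for the \emph{true} price $\mathbb{E}[h(S^\lambda_T)]$, which requires decay of $z\mapsto e^{T\Lambda_\lambda(z)}$ along the contour; this is precisely where the second assumption enters. If $\sigma>0$ the Brownian part already forces Gaussian decay in $\mathrm{Im}\,z$, whereas if $\sigma=0$ the lower bound $\liminf_{r\downarrow0}\int_{[-r,r]}x^2\nu(dx)/r^{2-\beta}>0$ guarantees that the small jumps carry enough activity for the characteristic function of $\log S^\lambda_T$ to decay (a smooth density), uniformly as $\lambda\downarrow0$. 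Second, the remainder $e^{T\Lambda_\lambda(z)}$ minus its second-order part must be shown to be genuinely $o(\lambda^2)$ after integration against $\widehat h(z)\,dz$; this requires a Taylor remainder estimate for $(1+\lambda x)^z$ that is integrable in $\nu$ and, along the contour, dominated by the above decay times a polynomial in $z$. Securing these two uniform estimates is the crux; the moment identity, the Stirling reduction and the identification with derivatives of $P_{BS}$ are then routine.
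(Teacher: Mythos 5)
Your proposal is sound in its algebra and takes a genuinely different route from the paper. The paper never transforms the pay-off: it smooths $h$ into $h^\varepsilon$ by composing with a Black--Scholes kernel, applies It\^o's formula to $P^\varepsilon_{BS}(t,S^\lambda_t)$, cancels the diffusion part with the Black--Scholes PDE, and Taylor-expands the remaining jump integral
$\int_0^T\!\int_{\mathbb R}\{f^\varepsilon_\lambda(t,1+z)-f^\varepsilon_\lambda(t,1)-z\,\partial_x f^\varepsilon_\lambda(t,1)-\tfrac{z^2}{2}\partial^2_x f^\varepsilon_\lambda(t,1)\}\nu_\lambda(dz)\,dt$;
the $m_3^2$ bracket comes from repeating the same argument for $\tilde f^\varepsilon_\lambda=x^3\partial_x^3f^\varepsilon_\lambda$ (the identity $\partial_x^3(x^3\partial_x^3 f)=6\partial_x^3f+18x\,\partial_x^4f+9x^2\partial_x^5f+x^3\partial_x^6f$ is the paper's counterpart of your Stirling identity, and the iterated time integral $\int_0^T\!\int_0^t ds\,dt=T^2/2$ is exactly your Duhamel factor), while all smoothness and limit interchanges are supplied by Lemma \ref{lemcv}, itself proved by Fourier inversion of the density of $\log S^\lambda_t$ using the decay bound \eqref{boundlambda.eq}. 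Your route --- exact cumulant $\Lambda_\lambda(z)$ for power pay-offs, the (correct) identity $(z)_3^2=(z)_6+9(z)_5+18(z)_4+6(z)_3$, then contour integration --- packages the paper's two Taylor steps into a single expansion of $e^{T\Lambda_\lambda(z)}$ and makes the provenance of the coefficients transparent; the paper's route keeps all analysis on the value function, so the only transform-side work concerns the density, never $h$ itself.

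Two caveats before your plan closes. First, a general measurable $h$ of polynomial growth has no Mellin strip: integrability of $h(S)S^{-\Re z-1}$ at $S=0$ forces $\Re z<0$ while growth at $S=\infty$ forces $\Re z$ large, so ``the analyticity strip of its Mellin transform'' may be empty (take $h\equiv 1$). You must split $h=h\mathbf 1_{(0,1)}+h\mathbf 1_{[1,\infty)}$ and run two separate contours, or move the transform onto the density side via Plancherel --- which is in effect what the paper's Lemma \ref{lemcv} does, thereby never inverting $h$ at all. Second, the two estimates you defer as ``the crux'' are indeed the bulk of the proof, and they are precisely the paper's bound \eqref{boundlambda.eq} combined with the region splitting $\{|u\lambda|\le\ell\}$ (where the Gaussian factor $e^{-cTu^2}$ dominates every polynomial-in-$|z|$ Taylor remainder, including the growth of $\binom{z}{k}$ in the tail of the binomial series) and $\{|u\lambda|>\ell\}$ (whose contribution is $O(\lambda^N)$ for every $N$); this machinery is carried out in the paper's proofs of Lemma \ref{nonlinear.lm} and Lemma \ref{lemcv} and transfers to your setting essentially verbatim. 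With those two repairs, your argument is a valid alternative proof of the lemma.
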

\end{proof}

\section{Numerical results}\label{num.sec}
In this section, we illustrate numerically the performance of the
asymptotic formula of Theorem \ref{thm2}, assuming that the asset
price is described by Merton's
jump-diffusion model \cite{merton} under $\mathbb Q^*$. Strictly speaking, this model does
not satisfy the standing assumptions of the paper because the
(log-normal) jumps are not bounded from above. However, in the
numerical implementation discussed below, the L\'evy measure is
truncated to a bounded domain (which can be chosen sufficiently large so
that further increase of the domain does not modify the price). 
\paragraph{Merton's jump-diffusion model}

In this model the stock price is defined by $ S_t = S_0 \mathcal{E}(X)_t$ where
\begin{equation*}
X_t =  \mu t + \sigma W_t + \sum_{i=1}^{N_t} (e^{Y_i}-1)
\end{equation*}

where $W$ denotes standard Brownian motion, jump sizes  $(Y_i) \sim
\mathcal{N}(\gamma,\delta^2)$ are i.i.d. random variables and
$(N_t)_{t\geq 0}$ is an independent Poisson process with intensity
$\lambda^M$ accounting for the number of jumps up to time $t$. 
The L\'evy measure of $X$ therefore has a density given by
$$
\nu(x) = \frac{\lambda^M \mathbf 1_{x>-1}}{\delta (x+1) \sqrt{2\pi}} e^{-\frac{(\log(x+1)-\gamma)^2}{2\delta^2}}.
$$



\paragraph{Implementation of the asymptotic formula}
In the numerical examples, we let $\lambda=1$ and approximate the
indifference price by 
\begin{align}
p = &P_{BS}(S_0) 
+ \frac{m_3 T}{6} S_0^3 P_{BS}^{(3)}(S_0) + \frac{ m_4T}{24} S_0^4P_{BS}^{(4)}(S_0)\notag\\
&+ \frac{m_3^2T^2}{72}
\left\lbrace  6S_0^3P_{BS}^{(3)}(S_0)+18S_0^4P_{BS}^{(4)}(S_0)
+9S_0^5P_{BS}^{(5)}(S_0)+S_0^6P_{BS}^{(6)}(S_0) \right\rbrace\notag\\& + \frac{\alpha\lambda^2}{8}\left(m_4-\frac{m_3^2}{\bar{\sigma}^2} \right) \mathbb{E}^{BS} \left[ \int_0^T \left(S_t^2 \frac{\partial^2P_{BS}(t,S_t)}{\partial S^2} \right)^2 dt\right]\label{approxnum}
\end{align}
Using the formula which has been justified asymptotically as
$\lambda\to 0$ for a finite nonzero value of $\lambda$ amounts to use
a second-order Taylor expansion of a function at zero
to approximate the value of this function at a point $x\neq 0$. The
quality of the approximation does not depend on the specific value of
$x$, but rather on the smoothness of the function between $0$ and
$x$. The numerical examples of this section show that the indifference
price is indeed smooth as function of $\lambda$ and that using the
formula of Theorem \ref{thm2} with $\lambda=1$ leads to a very precise
approximation. 

To evaluate the approximate indifference price, one needs to perform
three computations.
\begin{itemize}
\item Evaluate $\bar \sigma^2$ and the moments of the L\'evy measure
  $m_3$ and $m_4$. In Merton's model these quantities are easily
  computed from the explicit form of the L\'evy measure and are given
  by
\begin{align*}
\bar \sigma^2 &= \sigma^2 + \lambda^M\{e^{2\gamma + 2\delta^2} -
2e^{\gamma + \frac{\delta^2}{2}} +1\}\\
m_3 & = \lambda^M\{e^{3\gamma + \frac{9}{2}\delta^2} -
3 e^{2\gamma + 2\delta^2}  + 3 e^{\gamma + \frac{\delta^2}{2}} -1\}\\
m_4 &= \lambda^M\{e^{4\gamma + 8\delta^2} - 4 e^{3\gamma + \frac{9}{2}\delta^2} +
6 e^{2\gamma + 2\delta^2}  - 4 e^{\gamma + \frac{\delta^2}{2}} +1\}
\end{align*}
Remark that although the original model has four parameters (since
$\mu$ is fixed by the martingale condition), the asymptotic formula
only depends on three `group' parameters $\bar \sigma^2$, $m_3$ and
$m_4$. 
\item Evaluate the integral in the last line of \eqref{approxnum}. In our example we
  consider the put option and evaluate the more explicit form of the
  integral given in Lemma \ref{nonlinear.lm} using an
  numerical integration algorithm.
\item Evaluate the derivatives of the Black-Scholes option price with
  respect to the underlying up to
  order $6$. The exact explicit formulas for these derivatives are
  given in Appendix \ref{greeks.app}. 
\end{itemize}

\paragraph*{Partial integro-differential equation and the finite
  difference scheme}
In this paragraph we briefly describe the HJB equation for the
indifference price (see e.g., \cite{jaimungal2005pricing,wu2009pricing}) as well as the numerical scheme used to
solve it. This scheme is inspired by well-studied schemes for linear
integro-differential equations \cite{cont2005finite} and is provided here only for the purpose of illustrating
the asymptotic method; its full derivation and the study of its
accuracy is out of scope of the present paper. 

Let $H_T = (K-S_T)^+$ and assume that $S$ has the dynamics
\[\frac{dS_t}{S_{t-}} = dX_t,
\]
where $X$ is a martingale Lévy process with Lévy measure $\nu$ and
diffusion coefficient $\sigma$. Then, the indifference price $p(t,S)$
satisfies (omitting the arguments where possible to save space)
\begin{align*}
&0=\frac{\partial p}{\partial t} + \frac{S^2 \sigma^2}{2}
\frac{\partial^2 p}{\partial S^2}
+\int_{\mathbb R}\left(p(t,S(1+z))-p-Sz\frac{\partial p}{\partial S}\right)\nu(dz)\\&
+\min_\vartheta \Bigg\{\frac{\alpha S^2 \sigma^2}{2}\left(\vartheta - \frac{\partial p}{\partial S}\right)^2 
+\frac{1}{\alpha}\int_{\mathbb R}
\left( e^{\alpha(p(t,S(1+z))-p-Sz\vartheta) }-1-\alpha(p(t,S(1+z))-p-Sz \vartheta)\right)\nu(dz)\Bigg\} 
\end{align*}
with terminal condition $p(T,S) = (K-S)^+$. 
In log-variable $x = \log S$, introducing $P(t,x) = p(t,S)$, 
\begin{align*}
&0=\frac{\partial P}{\partial t} + \frac{\sigma^2}{2}
\left(\frac{\partial^2 P}{\partial x^2} - \frac{\partial P}{\partial x}\right) 
+\int_{\mathbb R}\left(P(t,x+z) - P - (e^z-1) \frac{\partial P}{\partial x}\right)\bar\nu(dz)\\&
+\min_\vartheta \Bigg\{\frac{\alpha \sigma^2}{2}\left(\vartheta - \frac{\partial P}{\partial x}\right)^2 
+\frac{1}{\alpha}\int_{\mathbb R} 
\left( e^{\alpha(P(t,x+z)-P-(e^z-1) \vartheta) }-1-\alpha(P(t,x+z)-P-(e^z-1) \vartheta)\right)\bar\nu(dz)\Bigg\}, 
\end{align*}
where $\bar \nu$ is the logarithmic transformation of $\nu$. 
\medskip

To discretize this equation we introduce a time grid $t_i = ih$,
$i=0,\dots,N$ with $h = \frac{T}{N}$, a space grid $x_j = x_0 + j d$, $j=0,\dots,2M$, and represent the Lévy measure $\bar \nu$ as 
\[\bar \nu(dx) = \sum_{k=-K}^K \bar \nu_k \delta_{kd}(dx),\]
where $K$ is an integer and $\delta$ is the Dirac delta function. Let $P_{i,j}$ denote the approximation of $P(t_i,x_j)$. We use the following implicit-explicit scheme:
\begin{align*}
0&=\frac{P_{i+1,j}-P_{i,j}}{h} + \frac{\sigma^2}{2}
\left(\frac{P_{i,j-1} + P_{i,j+1}-2P_{i,j}}{d^2} - \frac{P_{i,j+1}-P_{i,j-1}}{2d}\right) \\
&+ \sum_{k=-K}^K \left(P_{i+1,j+k}- P_{i+1,j} - (e^{kd}-1) \frac{P_{i+1,j+1}-P_{i+1,j-1}}{2d}\right)\bar\nu_k\\
&+ \min_\vartheta \Bigg\{\frac{\alpha \sigma^2}{2}\left(\vartheta - \frac{P_{i+1,j+1}-P_{i+1,j-1}}{2d}\right)^2 \\ 
&+\frac{1}{\alpha}\sum_{k=-K}^K 
\left( e^{\alpha(P(i+1,j+k)-P(i+1,j) - (e^{kd}-1) \vartheta) }-1 - \alpha(P(i+1,j+k)-P(i+1,j) - (e^{kd}-1) \vartheta) \right)\bar\nu_k\Bigg\}.
\end{align*}
In other words, introducing the notation
\[B_j(P_{i+1}) = 
\sum_{k=-K}^K \left(P_{i+1,j+k}- P_{i+1,j} - (e^{kd}-1) \frac{P_{i+1,j+1}-P_{i+1,j-1}}{2d}\right)\bar\nu_k\]
and 
\begin{align*}
&H_j(P_{i+1},\vartheta)=\frac{\alpha \sigma^2}{2}\left(\vartheta-\frac{P_{i+1,j+1}-P_{i+1,j-1}}{2d}\right)^2 \\
&+\frac{1}{\alpha}\sum_{k=-K}^K 
\left( e^{\alpha(P(i+1,j+k)-P(i+1,j) - (e^{kd}-1) \vartheta) }-1 - \alpha(P(i+1,j+k)-P(i+1,j) -(e^{kd}-1) \vartheta) \right)\bar\nu_k,
\end{align*}
we have for $ j=1,\dots,2M-1$
\begin{multline*}
P_{i,j}\left(1+\frac{\sigma^2h}{d^2}\right) - P_{i,j-1}\left(\frac{\sigma^2h}{2d^2} + \frac{\sigma^2h}{4d}\right)
-P_{i,j+1}\left(\frac{\sigma^2h}{2d^2} - \frac{\sigma^2h}{4d}\right)  \\=P_{i+1,j}+ hB_j(P_{i+1})+ h\min_\vartheta H_j(P_{i+1},\vartheta)
\end{multline*}
with boundary conditions
\[P_{i,2M} = (K-e^{x_{2M}})^+\quad \text{and}\quad P_{i,0} = (K-e^{x_0})^+.\] 

\subsubsection*{Numerical comparison}
In this paragraph we compare numerically the asymptotic formula for
the indifference price of a European put option with
the value obtained by solving the PIDE using a finite difference
scheme. The computational time required to
solve the PIDE with adequate precision is 107 seconds on an iMac
with $2.8$ GHz Intel Core i7 processor (the implementation was done
with Python programming languate, using a single processor core). The parameters of the scheme were
$N = 40$ (number of time steps), $2K = 100$ (number of points to
discretize the L\'evy measure) and $2M = 200$ (number of space steps).

Figure \ref{price.fig}, left graph, plots the price computed with the
two methods
as function of the initial price of the underlying with risk aversion
parameter value $\alpha = 10$.  For comparison, we
also plot the linear part of the price ($\mathbb E^*[H]$), computed using the explicit
formula available in the Merton model. As can be seen from the graph,
the bid-ask spread (that is, twice the difference between the
indifference price and the linear part of the price) for an at the
money option corresponds to about $6\%$ of the option price. This is a
rather high value for the spread, which means that the risk aversion
parameter value which we use is also rather high. 

The graph clearly shows that for the chosen parameter values, which
correspond to a realistic market scenario, the asymptotic
formula is quite precise. To further explore the domain of validity of
the approximation, in the right graph of Figure \ref{price.fig} we
plot the price of an at the money
put option (that is, we take $S_0=1$) as function
of the risk aversion parameter $\alpha$ with a higher resolution. Remark
that the asymptotic formula for the indifference price is linear in
$\alpha$. We see that in this
example the
asymptotic formula reproduces the linear
component of the price with almost no error (for $\alpha=0$), and the
nonlinear component of the price with high precision,
even for relatively large risk aversion values. 
\begin{figure}
\centerline{\includegraphics[width=0.5\textwidth]{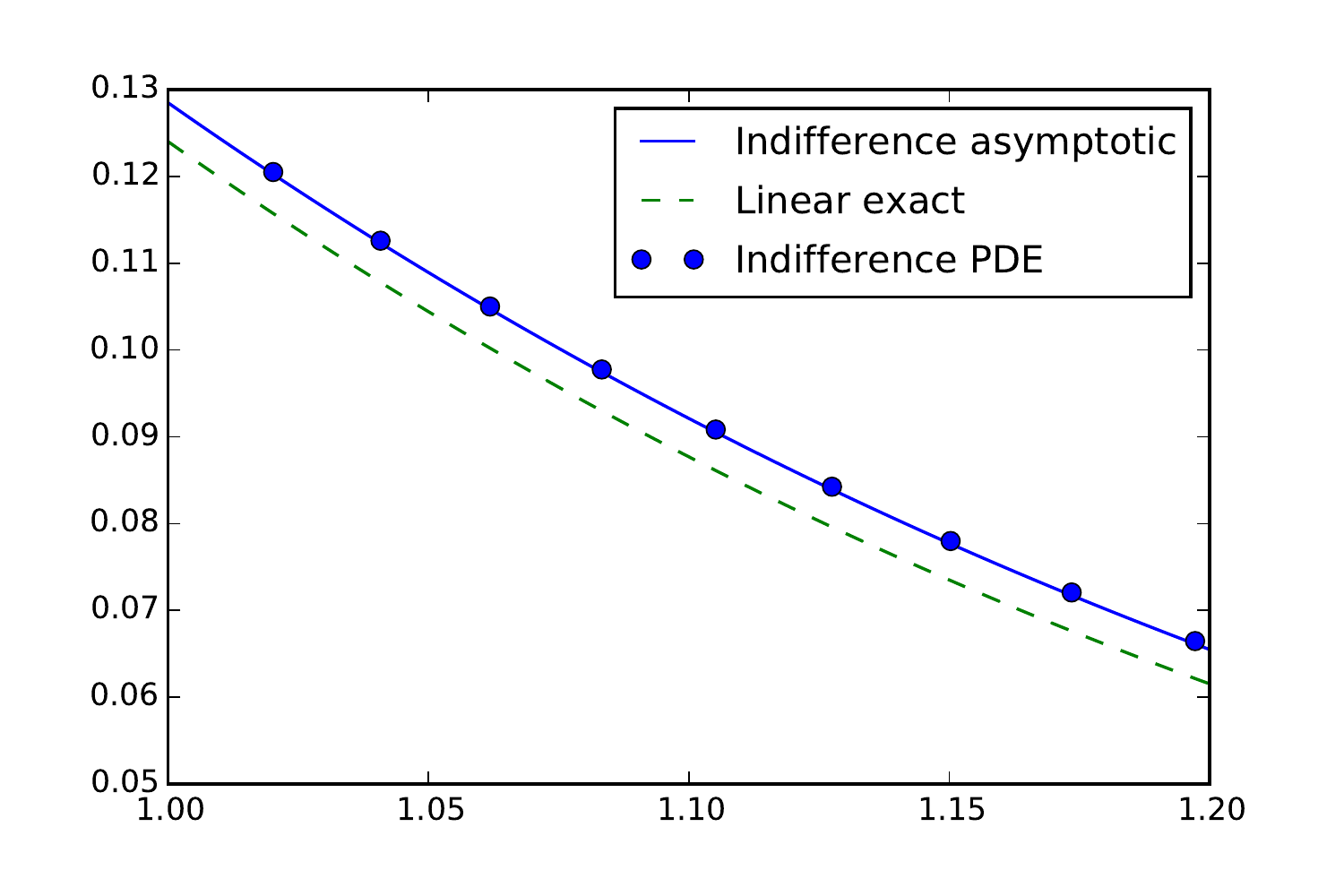}\includegraphics[width=0.5\textwidth]{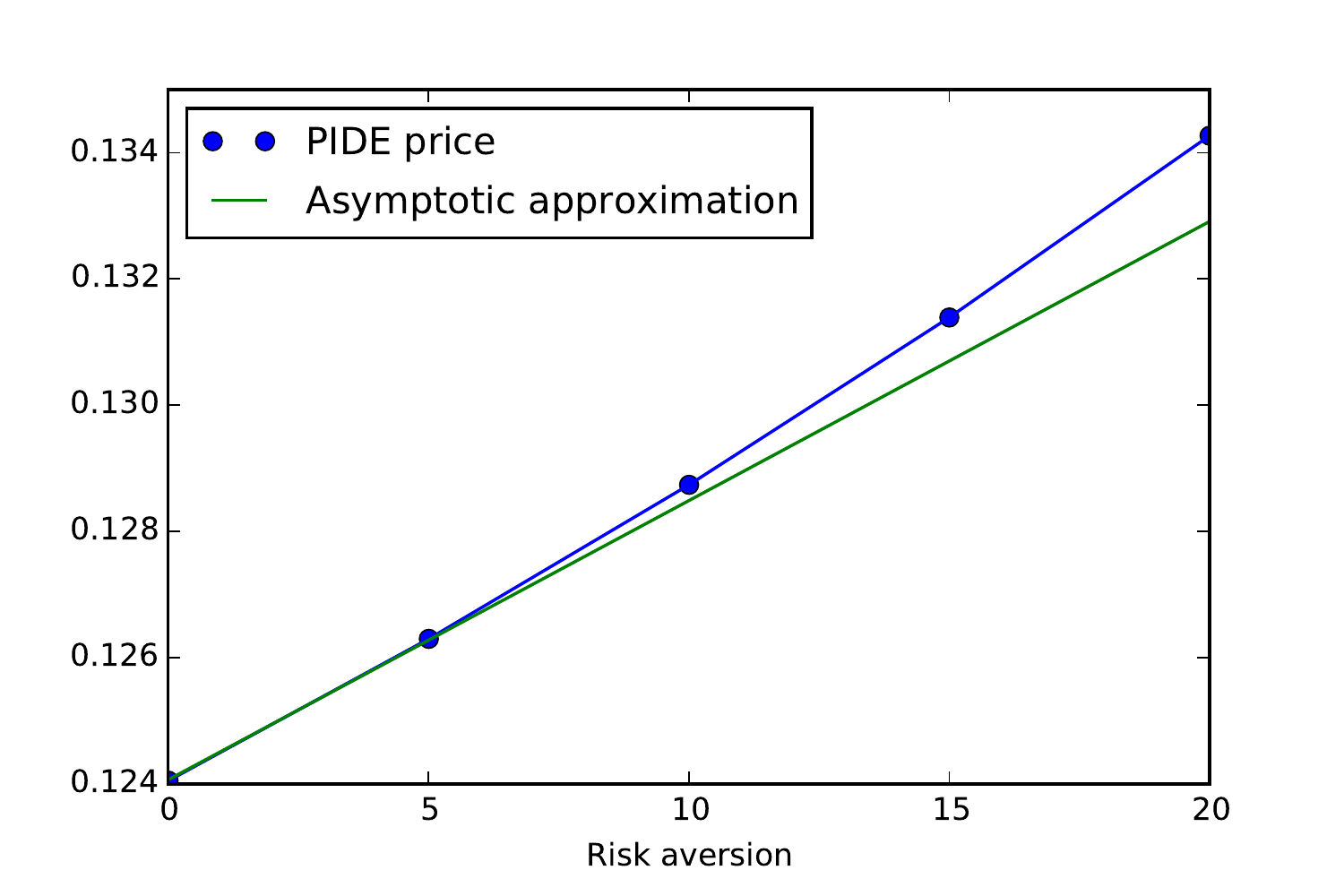}}
\caption{Left: Indifference price computed with PIDE and with the asymptotic formula, in Merton model (log-normal
  jumps), as function of the initial price of the underlying $S_0$ for
  the risk aversion parameter $\alpha=10$. Right: Indifference price
  computed with PIDE and with the asymptotic formula as function of
  the risk aversion parameter $\alpha$ for $S_0=1$. Other model
  parameters: strike $K = 1$, 
  maturity $T = 1$ year, diffusion volatility $\sigma = 0.2$, jump
  intensity $\lambda^M = 5$, average log jump size $-5\%$, log jump size
  standard deviation $10\%$. }
\label{price.fig}
\end{figure}

\section{Bid-ask spread and sensitivity of options to jump risk}
\label{bidask.sec}
As a by-product of the asymptotic formula of Theorem \ref{thm2} we
obtain a simple explicit approximation for the difference between the
{seller's and the buyer's} indifference price of a European option, that is, for the
bid-ask spread:
\begin{equation*}
p_s - p_b \approx \frac{\alpha}{4}\left(m_4 - \frac{m_3^2}{\bar{\sigma}^2}\right) 
\mathbb E^{BS}\left[\int_0^T \left( S_t^2\frac{\partial^2 P_{BS}}{\partial S^2} (t,S_t)\right)^2 dt\right].
\end{equation*}
This spread can be seen as a measure of the effect of market
incompleteness due to jump risk on the price of a specific option,
from the point of view of a specific market agent. It decomposes into
a product of three factors, each representing a specific feature of
our market model:
\begin{itemize}
\item The parameter $\alpha$, which characterizes the
  risk aversion of the economic agent; 
\item The factor $m_4 - \frac{m_3^2}{\bar{\sigma}^2}$ which
    characterizes the specific L\'evy model through its variance, skewness and kurtosis;
\item The expectation of the integral, which characterizes the
  specific option, and only depends on the variance of the price process. 
\end{itemize}
The factor 
\begin{align}
\mathbb E^{BS}\left[\int_0^T \left( S_t^2\frac{\partial^2 P_{BS}}{\partial S^2} (t,S_t)\right)^2 dt\right]\label{exp.eq}
\end{align}
can therefore be seen as a \emph{model-independent} measure of the
sensitivity of a specific European option to jump risk, in the limit
of small jumps. It is therefore interesting to study the dependence of
this measure of jump risk sensitivity on strike and time to maturity. 

\begin{figure}
 \centerline{\includegraphics[width=0.55\textwidth]{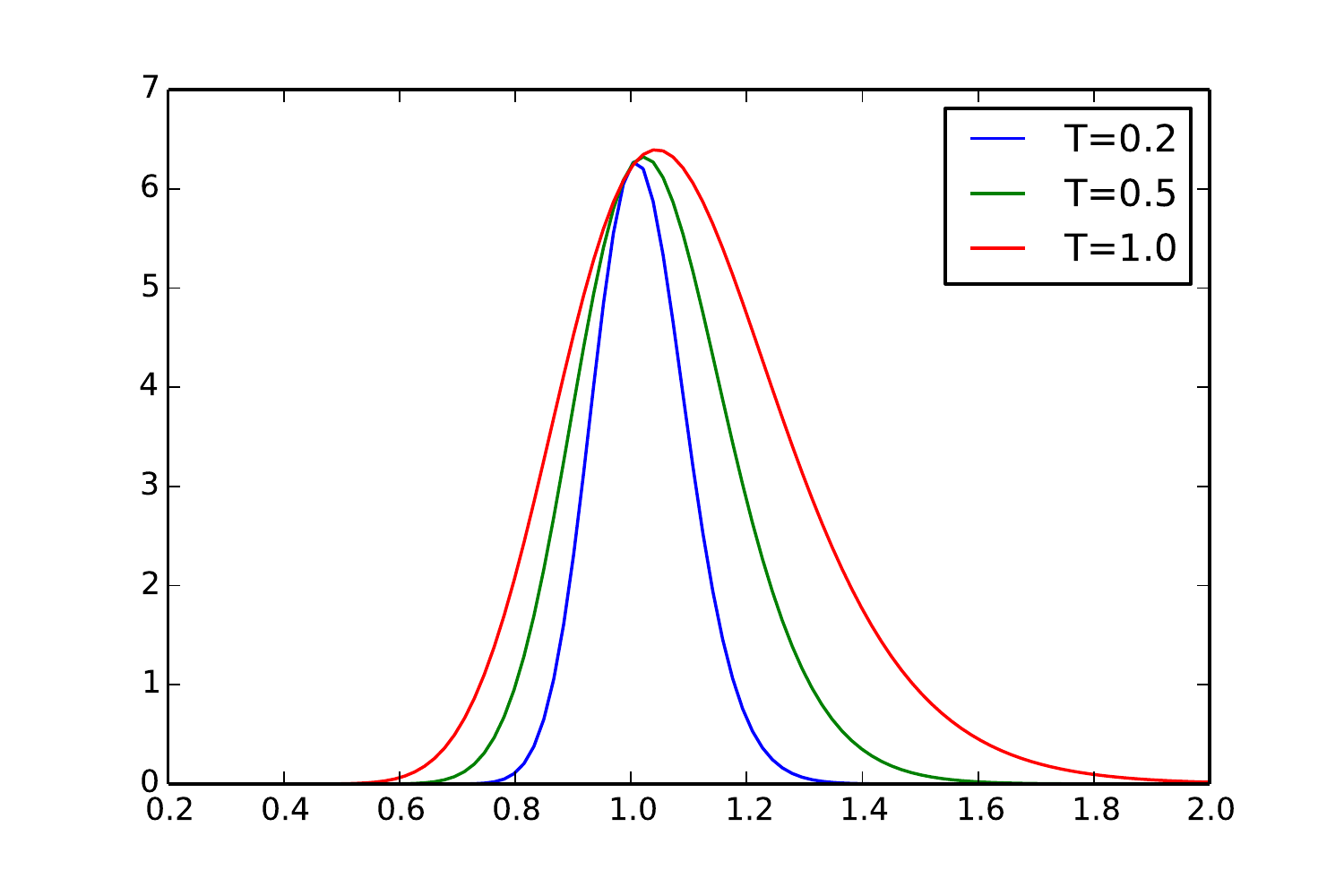}\includegraphics[width=0.55\textwidth]{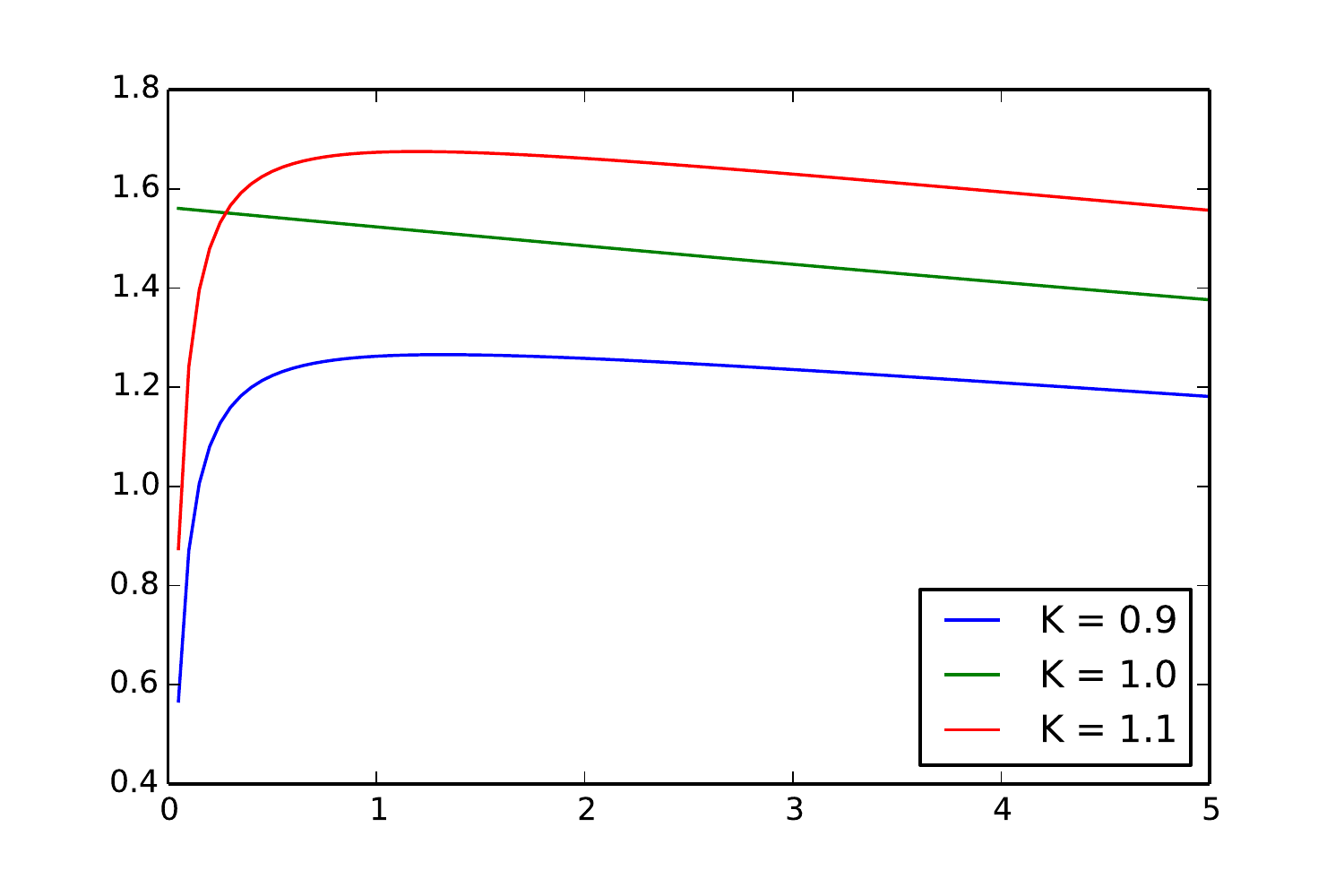}}
\caption{Left: jump risk sensitivity as function of $K$, $S_0=1$, $\bar\sigma =
0.2$. Right: jump risk sensitivity as function of $T$, $S_0=1$, $\bar
\sigma = 0.4$. }
\label{sens.fig}
\end{figure}

Figure \ref{sens.fig} plots the expectation \eqref{exp.eq} as function
of strike (on the left graph) and as function of time to maturity (on
the right graph) {for a European put option}. We see that the sensitivity to jump risk is maximal
for options close to the money, since for far from the money options the exercise
probability and therefore also the price and the spread are very small
(remember that we are interested in sensitivity to small jumps). Note
that actual bid-ask spreads in option markets exhibit similar patterns
with a maximum close to the money (see Figure \ref{market.fig}), although of course actual bid-ask
spreads are influenced by a multitude of factors other than jump
risk. 

In
terms of time to maturity for options which are not at the money, the
sensitivity first grows (because the exercise probability increases)
and then decays for large maturities due to a `central limit theorem'
effect which smoothes out the effect of jumps.  

\begin{figure}
\centerline{\includegraphics[width=0.6\textwidth]{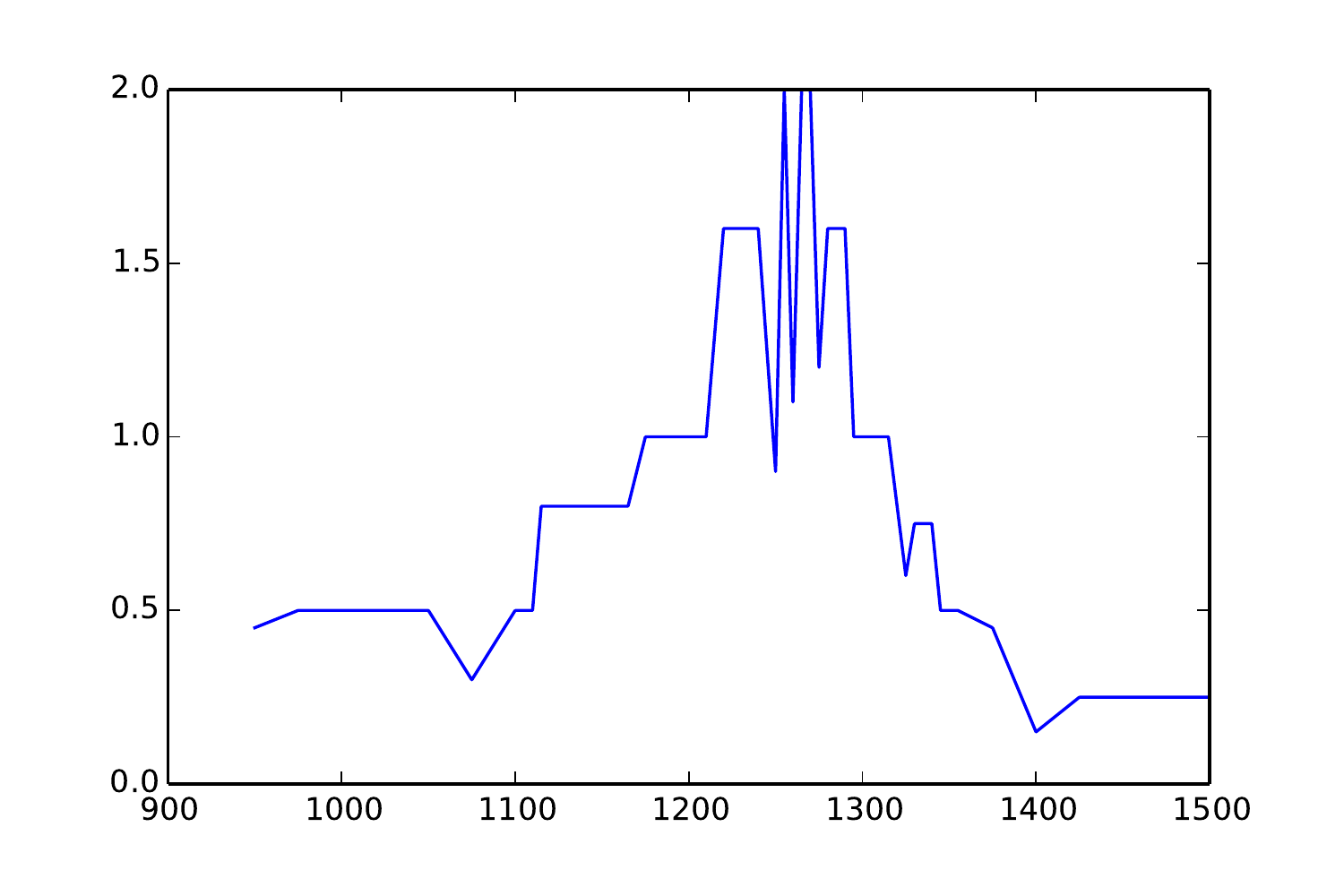}}
\caption{Bid-ask spreads of options on S\&P 500 index, observed on
  Jan 21st, 2006, as function of strike, for time to maturity
  $T=42$ days. The underlying index value was $S_0 = 1261.49$.}
\label{market.fig}
\end{figure}

\appendix

\section{Proof of Lemma \ref{residual.lm}}
Let $P_\lambda(t,S)=
\mathbb{E^*}[h(S\mathcal{E}(X^\lambda)_{T-t})]$. By Proposition 2 in \cite{options}, under the
assumptions of Theorem \ref{thm2}, $P_\lambda(t,S)$ is infinitely
differentiable in $t$ and in $S$, and the assumptions of the Lemma
imply, in particular, that $\left|S\frac{\partial P_\lambda}{\partial S}\right| \leq L$
a.s. for all $t \in [0,T]$.
Using the martingale representation of the option price given in
\cite{options}, we obtain that
\begin{equation*}
M_T^\lambda = \int_0^T \left\lbrace \bar{\vartheta}_t^\lambda  - \frac{\partial P_\lambda}{\partial S} \right\rbrace \sigma S_t^\lambda dW_t
+ \int_0^T \int_{\mathbb{R}} \lbrace z \bar{\vartheta}_t^\lambda  S_{t-}^\lambda   - P_\lambda(t,S_{t-}^\lambda (1+z)) + P_\lambda(t,S_{t-}^\lambda ) \rbrace \tilde{J}_{X^\lambda}(dt\ dz)
\end{equation*}
and that the quadratic hedging strategy is given by
\begin{equation*}
\bar{\vartheta}_t^\lambda = \frac{\sigma^2 \frac{\partial
    P_\lambda}{\partial S}+ \frac{1}{S_t^\lambda}\int_{\mathbb
    R}z(P_\lambda(t,S_t^\lambda(1+z)) -
  P_\lambda(t,S_t^\lambda))\nu_\lambda(dz)}{\sigma^2+\int_{\mathbb
    R}z^2 \nu_\lambda(dz)} = \frac{\partial P_\lambda}{\partial S} +
\frac{1}{\bar \sigma^2 S^\lambda_t } \int_{\mathbb R} z \Xi^\lambda_t(z) \nu_\lambda(dz),
\end{equation*} 
where we denote
$$
\Xi_t^\lambda(z) =  P_\lambda(t,S_t^\lambda(1+z)) - P_\lambda(t,S_t^\lambda) - zS_t^\lambda  \frac{\partial P_\lambda}{\partial S}.
$$

By the Burkholder-Davis-Gundy inequality expressed in predictable
terms \cite{BDG}, for $q\geq 2$, there exist $c_q, C_q >0$ such that:
\begin{equation}
\mathbb{E} [(\sup_{0\leq t\leq T} |M_t|)^q]
\leq C_q \mathbb{E} [\langle M\rangle _T^{\frac{q}{2}} +  |x|^q \star
\nu_T^{M}]\label{bdg.eq}
\end{equation}
where $\nu_T^{M}$ is the compensator of the jump measure of
the process $M$. The quantities appearing in the right-hand side are
explicitly given by
\begin{align*}
&\langle M^\lambda \rangle_T  = \int_0^T \left\lbrace \bar{\vartheta}_t^\lambda  - \frac{\partial P_\lambda}{\partial S} \right\rbrace^2 \sigma^2  (S_t^\lambda)^2 dt 
+ \int_0^T \int_{\mathbb{R}} \lbrace z \bar{\vartheta}_t^\lambda  S_{t}^\lambda   - P_\lambda(t,S_{t-}^\lambda (1+z)) + P_\lambda(t,S_{t-}^\lambda ) \rbrace^2 \nu_\lambda(dz)dt,\\
&|x|^q \star \nu_T^{M^\lambda} 
= \int_0^T \int_{\mathbb{R}} \left| z \bar{\vartheta}_t^\lambda  S_{t}^\lambda   - P_\lambda(t,S_{t-}^\lambda (1+z)) + P_\lambda(t,S_{t-}^\lambda ) \right|^q \nu_\lambda(dz)dt.
\end{align*}
Substituting the expression for $\bar\vartheta^\lambda$ into the first
of the above
equalities, we further obtain:
\begin{align}
\langle M^\lambda \rangle_T &= \frac{\sigma^2}{\bar \sigma^4} \int_0^T
dt\left(\int_{\mathbb R} z \Xi^\lambda_t(z)
  \nu_\lambda(dz)\right)^2 + \int_0^T \int_\mathbb R \left(
  \frac{z}{\bar \sigma^2} \int_{\mathbb R} z \Xi^\lambda_t(z) \nu_\lambda(dz) - \Xi^\lambda_t(z)\right)^2
\nu_\lambda(dz) dt\notag\\
& = - \frac{1}{\bar \sigma^2} \int_0^T
dt\left(\int_{\mathbb R} z \Xi^\lambda_t(z)
  \nu_\lambda(dz)\right)^2 + \int_0^T \int_\mathbb R \left( \Xi^\lambda_t(z)\right)^2
\nu_\lambda(dz) dt \leq \int_0^T \int_\mathbb R \left( \Xi^\lambda_t(z)\right)^2
\nu_\lambda(dz) dt. \label{bracket.eq}
\end{align}

Our first goal is to estimate $\langle M^\lambda\rangle_T$. To this
end, we fix $\eta\in(0,T)$ and estimate separately $\langle
M^\lambda\rangle_{T-\eta}$ and $\langle M^\lambda\rangle_T - \langle
M^\lambda\rangle_{T-\eta}$. 

We proceed with the estimation of $\langle
M^\lambda\rangle_{T-\eta}$. By Taylor--Lagrange expansion
\begin{equation*}
\Xi^\lambda_t(z) = z^2(S_t^\lambda)^2 \int_0^1 \frac{\partial^2 P_\lambda}{\partial S^2}(t, S_t^\lambda(1+\theta z))(1-\theta) d\theta.
\end{equation*}
Note that $\forall z>-1$, $\int_0^1 \frac{1-\theta}{1+\theta z} d\theta = \frac{(1+z)\log(1+z)-z}{z^2} \leq 1$. Thus denoting $\Gamma_\lambda(t,S) = \frac{\partial^2 P_\lambda}{\partial S^2}(t,S)$
\begin{equation*}
| \Xi_t^\lambda(z)|  \leq {z^2}S_t^\lambda \max_{S}  S |\Gamma_\lambda(t,S)|.
\end{equation*}
We conclude that
\begin{equation}
\langle M^\lambda \rangle_{T-\eta}
\leq {\lambda^2} \int_\mathbb{R} z^4 \nu(dz)
\int_0^{T-\eta} (S_t^\lambda)^2\left(\max_{S}  S |\Gamma_\lambda(t,S)|\right)^2dt.\label{gammaest}
\end{equation}

Furthermore, the term $\langle M^\lambda \rangle_{T} - \langle
M^\lambda \rangle_{T-\eta}$ is controlled by the {
option's delta.}
Indeed, as 
$\left|S\frac{\partial P_\lambda}{\partial S}\right| \leq L$, we
deduce that 
\begin{equation*}
|P_\lambda(t,S_t^\lambda(1+z))-P_\lambda(t,S_t^\lambda)| = \left|\int_0^z S_t^\lambda\frac{\partial P_\lambda}{\partial S} (t,S_t^\lambda(1+u))du \right| \leq \frac{L|z|}{1-\delta},
\end{equation*}
so that finally
\begin{align}
&\langle M^\lambda \rangle_{T} - \langle
M^\lambda \rangle_{T-\eta}\leq \int_{T-\eta}^T \int_\mathbb R \left( \Xi^\lambda_t(z)\right)^2
\nu_\lambda(dz) dt \notag\\ &\leq 2 \int_{T-\eta}^T \int_\mathbb R
|P_\lambda(t,S_t^\lambda(1+z))-P_\lambda(t,S_t^\lambda)|^2
\nu_\lambda(dz) dt + 2 \int_{T-\eta}^T \int_\mathbb R z^2
(S^\lambda_t)^2 \left|\frac{\partial P_\lambda}{\partial S}\right|^2
\nu_\lambda(dz) dt\notag\\
&\leq 2 \eta L^2 \int_{\mathbb R} z^2\nu(dz)\left(
  \frac{1}{(1-\delta)^2} + 1\right) .\label{deltaest}
\end{align}

Using \eqref{gammaest}, \eqref{deltaest} and the {
gamma} estimate of
Lemma \ref{gamma.lm}, we conclude that there exist two constants $A$ and
$B$ such that 
\begin{equation*}
\langle M^\lambda \rangle_{T} \leq \left\lbrace \begin{array}{cc}
\lambda^2A \sup_{0\leq t\leq T} (S^\lambda_t)^2 \ (\log T - \log \eta) + \eta B  & \sigma>0, \\
\lambda^2A  \sup_{0\leq t\leq T} (S^\lambda_t)^2 \left[(\log T - \log \eta) +\lambda^{2(\frac{2}{\beta}-1)}\left(\frac{1}{\eta^{\frac{2}{\beta}-1}}-\frac{1}{T^{\frac{2}{\beta}-1}} \right)\right] +\eta B  & \sigma=0.
\end{array}\right.
\end{equation*}

Taking $\eta = \lambda^2$ and assuming $\lambda \leq \sqrt{T}\wedge
\frac{1}{2}$, yields, for a constant $C<\infty$, 
$$
\langle M^\lambda\rangle_T \leq \lambda^2 C\left\{1-\sup_{0\leq t\leq
    T} (S^\lambda_t)^2\log \lambda \right\}.
$$
Since $S^\lambda$ is a martingale, by the BDG inequality we have, for
all $q\geq 2$, 
\begin{align*}
\mathbb E[|S^\lambda_T-S^\lambda_0|^q] &\leq \mathbb E[\sup_{0\leq t\leq
  T}|S^\lambda_t - S^\lambda_0|^q]\\ &\leq C_q \mathbb E\left[\left(\sigma^2 +
    \int_{\mathbb R} x^2 \nu(dx)\right)^{q/2}\left(\int_0^T S_t^2
    dt\right)^{q/2 } + \lambda^{q-2}\int_\mathbb R z^q \nu(dz)
  \int_0^T S_t^q  dt\right],
\end{align*}
which means that $\mathbb E[\sup_{0\leq t\leq T}|S^\lambda_t|^q]$ is
bounded uniformly on $\lambda$, and therefore,
$$
\mathbb E [\langle M^\lambda\rangle^{q/2}_T]  = O\left(\lambda^q
  \left(\log \frac{1}{\lambda}\right)^{\frac{q}{2}}\right).
$$

Lastly, it remains to estimate the second term in the BGD inequality \eqref{bdg.eq}.
In a similar manner as for the quadratic variation of $M^\lambda$, we
obtain for every $q>2$ and for some constant $C<\infty$ (which may
change from line to line), 
\begin{align*}
|x|^q &\star \nu_T^{M^\lambda,P} = \int_0^T \int_\mathbb R
\left|\frac{z}{\bar \sigma^2} \int_\mathbb R
  z\Xi^\lambda_t(z)\nu_\lambda(dz)-\Xi^\lambda_t(z)\right|^q\nu_\lambda(dz)
dt \\ &\leq 2^{q-1}  \bar \sigma^{-2q} \lambda^{q-2}\int_\mathbb R
\left|{z}\right|^q\nu(dz)\int_0^T \left|\int_\mathbb R
  z\Xi^\lambda_t(z)\nu_\lambda(dz)\right|^q
dt + 2^{q-1}\int_0^T \int_\mathbb R
\left|\Xi^\lambda_t(z)\right|^q\nu_\lambda(dz)
dt \\
&\leq C \lambda^{2q-2}\int_0^{T-\eta} |S_t^\lambda \max_{S}  S
|\Gamma_\lambda(t,S)||^q dt   + C \eta \lambda^{q-2}\\
&\leq \left\lbrace \begin{array}{cc}
\lambda^{2q-2} C\sup_{0\leq t\leq T} (S^\lambda_t)^q \left(\frac{1}{\eta^{\frac{q}{2}-1}}-\frac{1}{T^{\frac{q}{2}-1}} \right) + \eta \lambda^{q-2} C  & \sigma>0, \\
\lambda^{2q-2} C \sup_{0\leq t\leq T} (S^\lambda_t)^q\left[\left(\frac{1}{\eta^{\frac{q}{2}-1}}-\frac{1}{T^{\frac{q}{2}-1}} \right) +{\lambda^{2(\frac{2}{\beta}-1)}}\left(\frac{1}{\eta^{\frac{2}{\beta}-1}}-\frac{1}{T^{\frac{2}{\beta}-1}} \right)\right] +\eta \lambda^{q-2} C  & \sigma=0.
\end{array}\right.
\end{align*}
Choosing once again $\eta = \lambda^2$ leads to,
\begin{equation*}
|x|^q \star \nu_T^{M^\lambda,P}  \leq C \lambda^q (1+\sup_{0\leq t\leq T} (S^\lambda_t)^q )
\end{equation*}
for some constant $C$, and therefore, 
$$
\mathbb E[|x|^q \star \nu_T^{M^\lambda,P}] = O(\lambda^q)
$$
as $\lambda \to 0$. 
The proof of Lemma \ref{residual.lm} is complete.

\begin{lem}[Estimation of the gamma]${}$\label{gamma.lm}
\begin{itemize}
\item Let $\sigma>0$. Then there exists $C<\infty$ such that 
$$
\max_{S}  S |\Gamma_\lambda(t,S)| \leq \frac{C}{\sqrt{T-t}}.
$$
\item Let $\sigma =0$ and $\liminf\limits_{r\downarrow 0}
  \frac{\int_{[-r,r]} x^2 \nu(dx)}{r^{2-\beta}} > 0$ for some $\beta
  \in (0,2)$. Then there exists $C<\infty$ such that
$$
\max_{S}  S |\Gamma_\lambda(t,S)|\leq C\left\{\frac{1}{\sqrt{T-t}} + \frac{\lambda^{\frac{2}{\beta}-1}}{(T-t)^{\frac{1}{\beta}}}\right\}.
$$
\end{itemize}
\end{lem}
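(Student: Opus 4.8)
The plan is to reduce the estimate of $\max_S S|\Gamma_\lambda(t,S)|$ to an $L^\infty$ bound on a single probability density, and then to control that density by Fourier inversion. Write $u=T-t$ and work in the log-variable $x=\log S$, setting $\tilde P_\lambda(x)=P_\lambda(t,e^x)$ and $g(w)=h(e^w)$. By Proposition 2 in \cite{options} the map $P_\lambda$ is smooth in $S$, and the chain rule gives $S^2\Gamma_\lambda=(\partial_x^2-\partial_x)\tilde P_\lambda$, hence $S\Gamma_\lambda=e^{-x}(\partial_x^2-\partial_x)\tilde P_\lambda$. Denoting by $q_u^\lambda$ the density of $\log\mathcal E(X^\lambda)_u$, we have $\tilde P_\lambda(x)=\int g(w)\,q_u^\lambda(w-x)\,dw$. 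Under the assumption $|xh'(x)|\leq L$ the function $g'(w)=e^wh'(e^w)$ is bounded by $L$, and under the finite-variation assumption on $h'$ it is of finite variation on $\mathbb R$; two integrations by parts, with vanishing boundary terms (using $g'$ bounded and $q_u^\lambda\to0$ at $\pm\infty$), then yield $(\partial_x^2-\partial_x)\tilde P_\lambda(x)=\int q_u^\lambda(w-x)\,d\mu(w)$, where $d\mu=dg'-g'(w)\,dw$ is precisely the image under $x\mapsto\log x$ of the finite signed measure $x\,dh'(x)$.

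First I would tilt. Since $\mathcal E(X^\lambda)$ is a $\mathbb Q^*$-martingale, $\hat q_u^\lambda(v):=e^vq_u^\lambda(v)$ is a probability density. Using $e^{-x}q_u^\lambda(w-x)=e^{-w}\hat q_u^\lambda(w-x)$ together with $\int e^{-w}|d\mu(w)|=\int_0^\infty|dh'(x)|=V(h')<\infty$, I obtain the clean, payoff-independent bound $\max_S S|\Gamma_\lambda(t,S)|\leq V(h')\,\|\hat q_u^\lambda\|_\infty$. The role of the tilt is essential: the naive route through $S^2\Gamma_\lambda$ loses a factor $1/S$ and diverges as $S\downarrow0$, whereas the weight $e^{-w}$ against $d\mu$ turns the integrated measure into exactly the finite total variation $V(h')$. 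It therefore remains to show $\|\hat q_u^\lambda\|_\infty\leq C/\sqrt u$ when $\sigma>0$, and $\|\hat q_u^\lambda\|_\infty\leq C(1/\sqrt u+\lambda^{2/\beta-1}/u^{1/\beta})$ when $\sigma=0$.

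Next I would bound $\|\hat q_u^\lambda\|_\infty$ by Fourier inversion. Let $\kappa_\lambda$ be the cumulant exponent of $\log\mathcal E(X^\lambda)_1$, well defined on a strip because bounded jumps give all exponential moments; then $\int e^{i\xi v}\hat q_u^\lambda(v)\,dv=e^{u\kappa_\lambda(1+i\xi)}$ and $\|\hat q_u^\lambda\|_\infty\leq\frac{1}{2\pi}\int_{\mathbb R}e^{u\,\mathrm{Re}\,\kappa_\lambda(1+i\xi)}\,d\xi$. The Gaussian part of $\kappa_\lambda$ is $\frac{\sigma^2}{2}z^2$, independent of $\lambda$, so $\mathrm{Re}\,\kappa_\lambda(1+i\xi)=\frac{\sigma^2}{2}(1-\xi^2)+\int(\cos(\xi\zeta)-1)e^\zeta\,\bar\nu_\lambda(d\zeta)+c$, where $\bar\nu_\lambda$ is the image of $\nu_\lambda$ under $z\mapsto\log(1+z)$. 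When $\sigma>0$ the jump integral is nonpositive, so $\mathrm{Re}\,\kappa_\lambda(1+i\xi)\leq C_0-\frac{\sigma^2}{2}\xi^2$ and $\int e^{-u\sigma^2\xi^2/2}\,d\xi=O(1/\sqrt u)$; since $u\leq T$ the constants absorb, giving the first estimate. When $\sigma=0$ the decay must come from the jump integral: here I would use that the jumps are bounded, so $e^\zeta$ is bounded above and below by positive constants on the support, and that $\zeta=\log(1+z)\approx z$ for small jumps, to compare $\int(1-\cos(\xi\zeta))e^\zeta\,\bar\nu_\lambda$ with $\int(1-\cos(\xi z))\,\nu_\lambda$ up to constants, and then invoke the $\beta$-condition together with the scaling $\int f\,d\nu_\lambda=\lambda^{-2}\int f(\lambda x)\,\nu(dx)$ to get $-\mathrm{Re}\,\kappa_\lambda(1+i\xi)\geq c\,\min(\xi^2,\lambda^{\beta-2}|\xi|^\beta)$ for some $c>0$. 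The change of variables $\eta=\xi(u\lambda^{\beta-2})^{1/\beta}$ turns $\int e^{-cu\lambda^{\beta-2}|\xi|^\beta}\,d\xi$ into $C\lambda^{2/\beta-1}/u^{1/\beta}$, while the small-$\xi$ region contributes the $1/\sqrt u$ term, which is the second estimate.

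The main obstacle is the $\sigma=0$ case: obtaining a lower bound on $-\mathrm{Re}\,\kappa_\lambda(1+i\xi)$ that is uniform in $\lambda$ and has the exact $\min(\xi^2,\lambda^{\beta-2}|\xi|^\beta)$ shape. The delicate points are controlling the distortion between $\nu_\lambda$ and its logarithmic image $\bar\nu_\lambda$ near the origin, verifying that the exponential tilt by $e^\zeta$ does not spoil the oscillatory cancellation (it does not, precisely because the jumps are bounded), and splitting the $\xi$-integral at the crossover $|\xi|\sim\lambda^{-1}$ so that the two regimes produce the two stated terms. By contrast, the $\sigma>0$ case and the integration-by-parts reduction are routine once the tilting identity $\int e^{-w}|d\mu|=V(h')$ is established.
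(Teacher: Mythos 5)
Your proposal is correct and follows essentially the same route as the paper's proof: both reduce the claim to the payoff-independent bound $S|\Gamma_\lambda(t,S)|\leq V(h')\cdot\frac{1}{2\pi}\int_{\mathbb R}\bigl|\Phi^\lambda_{T-t}(u-i)\bigr|\,du$ --- you via exponential tilting of the density of $\log\mathcal E(X^\lambda)_{T-t}$ and two integrations by parts against $dh'$, the paper via the decomposition $h(S)=h(0)+h'(0)S+\int_{(0,\infty]}(S-K)^+\,\mu(dK)$ with $\mu=dh'$ and the Fourier representation of the call's second derivative, which is the same integration by parts in different packaging (note only that $x\,dh'(x)$ need not be a finite measure under the stated hypotheses, but this is harmless since, as you yourself stress, only the tilted variation $\int e^{-w}|d\mu(w)|=V(h')$ ever enters). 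The remaining analysis coincides with the paper's: for $\sigma>0$ the nonpositivity of the jump part of $\Re\psi_\lambda(u-i)$ gives the $1/\sqrt{T-t}$ bound, and for $\sigma=0$ your claimed estimate $-\Re\psi_\lambda(u-i)\gtrsim \min(u^2,\lambda^{\beta-2}|u|^{\beta})$ is exactly the paper's bound \eqref{boundlambda.eq} (proved there by the same $1-\cos$, log-distortion and scaling arguments you outline), after which splitting the Fourier integral at $|u|\sim\ell/\lambda$ produces the two stated terms.
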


\begin{proof}
Let 
$$
\text{Call}^K_\lambda(t,S) = \mathbb E[\left(S\mathcal E(X^\lambda)_{T-t}-K\right)^+].
$$
Under the assumptions of Theorem \ref{thm2}, 
$$
h(S) = h(0) + h'(0)S + \int_{(0,\infty]}(S-K)^+\mu(dK),
$$
where $\mu$ is a finite measure on $\mathbb R^+$, defined by
$\mu((a,b]) = h'(b) - h'(a)$. Since $\text{Call}^K_\lambda(t,S) \leq
S$, by Fubini's theorem,
$$
P_\lambda(t,S) = h(0) + h'(0)S + \int_{(0,\infty]}
\text{Call}^K_\lambda(t,S)\mu(dK). 
$$
Moreover, since $\frac{\partial \text{Call}^K_\lambda(t,S)}{\partial
  S}$ is bounded and $\mu$ is a finite measure, the dominated convergence theorem yields
$$
\frac{\partial P_\lambda(t,S)}{\partial S} = h'(0) + \int_{(0,\infty]}
\frac{\partial \text{Call}^K_\lambda(t,S)}{\partial
  S}\mu(dK). 
$$

Using the Fourier transform representation for call option price in
exponential Lévy models \cite{tankov.pplnmf}, we get the following identity:
\begin{align*}
\frac{\partial^2 \text{Call}^K_\lambda}{\partial S^2}(t,S) 
&= \frac{1}{2\pi} \int_\mathbb{R} K^{iu+1}S^{-iu-2}
\Phi_{T-t}^\lambda(-u)du\\
& =\frac{1}{2\pi} \int_\mathbb{R} K^{iu}S^{-iu-1}
\Phi_{T-t}^\lambda(-u-i)du
\end{align*}
where $\Phi_t^\lambda(u) = \mathbb{E}[e^{iu\log \mathcal{E}(X^\lambda)_t}] = e^{t\psi_\lambda(u)}$ and
$\psi_\lambda(u) = -\frac{\sigma^2}{2}(u^2+iu)+\int (e^{iu\log(1+z)} -1
-iuz)\nu_\lambda(dz)$.
Therefore,
$$
\left|\frac{\partial^2 \text{Call}^K_\lambda}{\partial S^2}(t,S)
\right| \leq \frac{1}{2\pi S} \int_{\mathbb R} \left|\Phi^\lambda_{T-t}(-u-i)\right|du,
$$
and the dominated convergence theorem yields
$$
S\left|\Gamma_\lambda(t,S) \right|\leq
\frac{C}{2\pi }  \int_{\mathbb R}
\left|\Phi^\lambda_{T-t}(-u-i)\right|du = \frac{C}{2\pi }  \int_{\mathbb R}
e^{(T-t)\Re \psi_\lambda(u-i)}du
$$
where $C = \int_{(0,\infty]} |d\mu|$ and $\Re\psi_\lambda(u-i) = -\frac{\sigma^2u^2}{2} + \int \{(1+x)(\cos(u\log(1+x))-1)\} \nu_\lambda(dx)$.

Let us study separately the cases $\sigma >0$ and $\sigma =0$. When $\sigma>0$, we directly get
\begin{equation*}
\Re\psi_\lambda(u-i) \leq  -\frac{\sigma^2u^2}{2}
\end{equation*}
which leads to
\begin{equation*}
S \left| \Gamma_\lambda(t,S)\right|
\leq \frac{C}{\sigma\sqrt{2\pi(T-t)}}.
\end{equation*}

When $\sigma=0$, using \eqref{boundlambda.eq}, we get,
\begin{align*}
S \left| \Gamma_\lambda(t,S) \right| &\leq \frac{C}{2\pi}\left( \int_{|u|< \ell/\lambda} e^{-c(T-t)u^2}du 
+ \int_{|u|\geq \ell/\lambda} e^{-\frac{c|u|^\beta}{\lambda^{2-\beta}}
  (T-t)}du\right)\\
&\leq \frac{C}{2\pi}\left( \int_{\mathbb R} e^{-c(T-t)u^2}du 
+ \int_{\mathbb R} e^{-\frac{c|u|^\beta}{\lambda^{2-\beta}}
  (T-t)}du\right)\\
& = \tilde C\left\{\frac{1}{\sqrt{T-t}} + \frac{\lambda^{\frac{2}{\beta}-1}}{(T-t)^{\frac{1}{\beta}}}\right\},
\end{align*}
where $\tilde C$ is a constant. 
\end{proof}

\begin{lem}
Assume that $\sigma =0$ and $\liminf\limits_{r\downarrow 0}
  \frac{\int_{[-r,r]} x^2 \nu(dx)}{r^{2-\beta}} > 0$ for some $\beta
  \in (0,2)$. Then, for every $\ell>0$ and $v\in \mathbb R$, there
  exist $c>0$ and $C<\infty$ such that for $u\in \mathbb R$
\begin{equation}
\Re \psi_\lambda(u+iv)
\leq  \left\lbrace \begin{array}{cc}
C -c u^2&|u\lambda |\leq \ell\\
C -c |u|^\beta \lambda^{\beta-2} &|u\lambda| \geq \ell.
\end{array}\right.\label{boundlambda.eq}
\end{equation}

\end{lem}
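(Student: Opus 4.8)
The plan is to first make $\Re\psi_\lambda(u+iv)$ fully explicit. Since $\sigma=0$ the diffusion term drops, and taking real parts in the definition of $\psi_\lambda$ (using $e^{i(u+iv)\log(1+z)}=(1+z)^{-v}e^{iu\log(1+z)}$ and $\Re(-i(u+iv)z)=vz$) gives
\[
\Re\psi_\lambda(u+iv)=\int_{\mathbb R}\left\{(1+z)^{-v}\cos\!\big(u\log(1+z)\big)-1+vz\right\}\nu_\lambda(dz).
\]
I would split this into the oscillatory part $I_1=\int(1+z)^{-v}\big[\cos(u\log(1+z))-1\big]\nu_\lambda(dz)$, which is nonpositive and carries all the decay in $u$, and a remainder $I_2=\int\big[(1+z)^{-v}-1+vz\big]\nu_\lambda(dz)$. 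Because $|\Delta X|\le\delta<1$, the measure $\nu_\lambda$ is supported in $[-\lambda\delta,\lambda\delta]\subset[-\delta,\delta]$, so $1+z\in[1-\delta,1+\delta]$ and $(1+z)^{-v}$ is bounded between two positive constants $c_{\min},c_{\max}$ depending only on $v$ and $\delta$. A second-order Taylor expansion of $y\mapsto(1+y)^{-v}$ gives $|(1+z)^{-v}-1+vz|\le Kz^2$; combined with the scaling $\int f(z)\nu_\lambda(dz)=\lambda^{-2}\int f(\lambda x)\nu(dx)$, which yields $\int z^2\nu_\lambda(dz)=\int x^2\nu(dx)=\bar\sigma^2$ independently of $\lambda$, this bounds $I_2$ by a finite constant $C$ uniformly in $u$ and $\lambda$. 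That constant is the additive $C$ of the statement.

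For $I_1$, since $\cos(u\log(1+z))-1\le0$ and $(1+z)^{-v}\ge c_{\min}$, I would bound $I_1\le c_{\min}\int[\cos(u\log(1+z))-1]\nu_\lambda(dz)=-c_{\min}\lambda^{-2}\int[1-\cos(u\log(1+\lambda x))]\nu(dx)$, reducing everything to a lower bound on $\int[1-\cos(u\log(1+\lambda x))]\nu(dx)$. On the support one has $\rho_1\lambda|x|\le|\log(1+\lambda x)|\le\rho_2\lambda|x|$ with $\rho_1=\log(1+\delta)/\delta$ and $\rho_2=-\log(1-\delta)/\delta$, so $\log(1+\lambda x)$ is comparable to $\lambda x$ throughout. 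Both regimes are then handled by truncating to $\{|x|\le R\}$ and applying $1-\cos\theta\ge\tfrac{2}{\pi^2}\theta^2$ for $|\theta|\le\pi$. Writing $\phi(r)=\int_{[-r,r]}x^2\nu(dx)$, in the regime $|u\lambda|\le\ell$ I take $R=\min(\delta,\pi/(\rho_2\ell))$ so that $|u\log(1+\lambda x)|\le\rho_2|u\lambda|\,|x|\le\pi$ on $\{|x|\le R\}$, giving $\int[1-\cos]\,\nu(dx)\ge\tfrac{2\rho_1^2}{\pi^2}u^2\lambda^2\phi(R)$ with $\phi(R)>0$ a fixed positive constant, hence $I_1\le-c\,u^2$. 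In the regime $|u\lambda|\ge\ell$ I take instead $R=\pi/(\rho_2|u\lambda|)$, which again forces $|u\log(1+\lambda x)|\le\pi$ on $\{|x|\le R\}$ and yields $\int[1-\cos]\,\nu(dx)\ge\tfrac{2\rho_1^2}{\pi^2}u^2\lambda^2\phi(R)$; now I invoke the hypothesis in the form $\phi(r)\ge c_0r^{2-\beta}$ for $r\le r_*$, so that when $R\le r_*$ one gets $\phi(R)\ge c_0(\pi/(\rho_2|u\lambda|))^{2-\beta}$, and since $u^2\lambda^2=|u\lambda|^2$ the powers combine to $|u\lambda|^\beta$, producing $\int[1-\cos]\,\nu(dx)\ge C'|u|^\beta\lambda^\beta$ and hence $I_1\le-c|u|^\beta\lambda^{\beta-2}$ after dividing by $\lambda^2$.

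The main obstacle I anticipate is the bookkeeping at the interface of the two regimes, specifically the subcase $R>r_*$ in the second regime (which occurs for $|u\lambda|$ just above $\ell$ when $\ell$ is small). There $\phi(R)\ge\phi(r_*)\ge c_0r_*^{2-\beta}$ is only bounded below by a constant, so to recover the $|u\lambda|^\beta$ scaling I would use that on the bounded range $|u\lambda|\in[\ell,\pi/(\rho_2r_*)]$ one has $|u\lambda|^2=|u\lambda|^\beta|u\lambda|^{2-\beta}\ge\ell^{2-\beta}|u\lambda|^\beta$, absorbing the $\ell$-dependent factor into a possibly smaller constant. Collecting the estimates, $\Re\psi_\lambda(u+iv)=I_1+I_2\le C-c\,u^2$ for $|u\lambda|\le\ell$ and $\le C-c|u|^\beta\lambda^{\beta-2}$ for $|u\lambda|\ge\ell$, with $c=c_{\min}\min(\,\cdot\,,\,\cdot\,)$ over the two regimes and $C$ the bound on $I_2$, which is precisely \eqref{boundlambda.eq}.
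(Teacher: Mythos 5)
Your proof is correct and follows essentially the same route as the paper's: the same splitting of $\Re\psi_\lambda(u+iv)$ into a nonpositive oscillatory part and a Taylor remainder bounded by a constant, the same inequality $1-\cos\theta\ge\frac{2}{\pi^2}\theta^2$ on $|\theta|\le\pi$, the same comparability of $\log(1+y)$ with $y$ on the (bounded) support, and the same use of the hypothesis via $\int_{[-r,r]}x^2\nu(dx)\ge c_0r^{2-\beta}$ for small $r$. The only difference is presentational: you fix an arbitrary $\ell$ at the outset and work out the interface bookkeeping (the case $R>r_*$) explicitly, whereas the paper first obtains the bound for one particular $\ell$ and then remarks that it extends to arbitrary $\ell$ by changing $c$ and $C$.
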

\begin{proof}
Observe first that
\begin{align*}
\Re &\psi_\lambda(u+iv) = \lambda^{-2}\int ((1+\lambda x)^{-v} \cos(u(1+\lambda x))
- 1 + v \lambda x)\nu(dx) \\
& = \lambda^{-2}\int (1+\lambda x)^{-v} (\cos(u(1+\lambda x))-1)
\nu(dx) + v(v+1)\int x^2 \nu(dx)\int_0^1 (1+\theta\lambda
x)^{-v-2} (1-\theta) d\theta. 
\end{align*}
Since $X$ has jumps bounded by $\delta$ and $\lambda \leq 1$, there
exists $C<\infty$ and $c>0$ (depending on $v$) such that for all
$x$ in the support of $\nu$, 
$$
|v(v+1)\int x^2 \nu(dx)\int_0^1 (1+\theta\lambda
x)^{-v-2} (1-\theta) d\theta |\leq C\quad \text{and}\quad |1+\lambda
x|^{-v}\geq c.
$$ 
Then, using $1-\cos(x)=2(\sin \frac{x}{2})^2 \geq 2(\frac{x}{\pi})^2$ for $|x|\leq \pi$, we get:
\begin{align*}
\Re \psi_\lambda(u+iv)\leq C -c \int_{|u\log(1+\lambda x)|\leq \pi} 
\frac{u^2 (\log(1+\lambda x))^2}{\lambda^2} \nu(dx),
\end{align*}
but since $(\log(1+x))^2 \geq x^2 (\log 2)^2$ for $|x|\leq 1$ we have,
for a different $c>0$, 
\begin{equation*}
\Re \psi_\lambda(u+iv) \leq C - c  \int_{|u\log(1+\lambda x)|\leq \pi} 
{ (ux)^2 } 1_{|\lambda x| \leq 1}\nu(dx).
\end{equation*}
Once again, by the bound on the jumps of $X$,  $|\log(1+x)| \leq \frac{|x|}{\log
  (1+\delta)}$ on the support of $\nu$ and we also have,
for a different $c>0$, 
\begin{equation*}
\Re \psi_\lambda(u+iv) \leq C-c
\int_{|x|\leq \frac{\pi}{\lambda|u|\log(1+\delta) }\wedge 1} u^2 x^2 \nu(dx)	.
\end{equation*}
Under the assumption $\liminf\limits_{r\downarrow 0} 
\frac{\int_{[-r,r]} x^2 \nu(dx)}{r^{2-\beta}}>~0$ for some $\beta \in
(0,2)$, there exist $r_0>0$ and $c_0>0$ such that for all $r<r_0$, 
$
\int_{[-r,r]} x^2 \nu(dx) \geq c_0 r^{2-\beta}.
$ This implies that one can find constants $\ell>0$ and yet another $c>0$ such
that 
\begin{equation*}
\Re \psi_\lambda(u+iv)
\leq  \left\lbrace \begin{array}{cc}
C-c u^2&|u\lambda |\leq \ell\\
C-c |u|^\beta \lambda^{\beta-2} &|u\lambda| \geq \ell
\end{array}\right.
\end{equation*}
Now, by changing $c$ and $C$ this inequality can be
shown to be true for arbitrary $\ell$.  
\end{proof}

\section{Proof of Lemma \ref{nonlinear.lm}}
Using the notation of the proof of Lemma \ref{residual.lm}, we have 
$$
\mathbb{E} \left[\left(\int_0^T \bar\vartheta^\lambda_t dS^\lambda_t
    -h(S^\lambda_T)\right)^2\right] = \mathbb E[(M^\lambda_T)^2] =
\mathbb E[\langle M^\lambda\rangle_T],
$$
where $\langle M^\lambda\rangle_T$ was computed in
\eqref{bracket.eq}. 
From the Fourier transform
formula for the call option price 
$$
\text{Call}^K_\lambda(t,S) = \frac{1}{2\pi} \int_{\mathbb R} \frac{K^{iu+1-R} S_t^{-iu+ R}\Phi^\lambda_{T-t}(-u-iR) }{(R-iu)(R-1-iu)}
du,
$$
we deduce that 
\begin{multline*}
\text{Call}^K_\lambda(t,S(1+z)) - \text{Call}^K_\lambda(t,S) - zS
\frac{\partial \text{Call}^K_\lambda(t,S)}{\partial S} \\= \frac{1}{2\pi} \int_{\mathbb R}
{K^{iu+1-R} S_t^{-iu+ R}\Phi^\lambda_{T-t}(-u-iR) }\frac{(1+z)^{-iu+R}
- 1 + (iu-R)z}{(R-iu)(R-1-iu)}du.
\end{multline*}
Since the fraction under the integral sign is analytic for $z>-1$, we
can choose $R=1$ in this formula, obtaining
\begin{multline*}
\text{Call}^K_\lambda(t,S(1+z)) - \text{Call}^K_\lambda(t,S) - zS
\frac{\partial \text{Call}^K_\lambda(t,S)}{\partial S}\\= \frac{1}{2\pi} \int_{\mathbb R}
{K^{iu} S_t^{-iu+ 1}\Phi^\lambda_{T-t}(-u-i) }\frac{(1+z)^{-iu+1}
- 1 + (iu-1)z}{iu(iu-1)}du.
\end{multline*}
and therefore,
$$
\Xi_t^\lambda(z) = \frac{1}{2\pi} \int_{(0,\infty)}\mu(dK) \int_{\mathbb R}
{K^{iu} S_t^{-iu+ 1}\Phi^\lambda_{T-t}(-u-i) }\frac{(1+z)^{-iu+1}
- 1 + (iu-1)z}{iu(iu-1)}du.
$$
By Fubini's theorem, the expression
under the time integral in \eqref{bracket.eq} {
equals  }
\begin{align*}
&\frac{1}{4\pi^2}\int_{(0,\infty)}\mu(dK) \int_{(0,\infty)}\mu(d\bar K)\int_{\mathbb R} dv\int_{\mathbb  R} du \,
K^{iu}\bar K^{iv}\\ &\qquad \qquad \times\frac{\Phi^\lambda_{T-t}(-u-i)\Phi^\lambda_{T-t}(-v-i)
  \Phi^\lambda_t(-u-v-2i)}{-uv(iu-1)(iv-1)}\left\{-\frac{a_\lambda(u)a_\lambda(v)}{\bar
  \sigma^2} + b_\lambda(u,v) \right\},
\end{align*}
where
\begin{align*}
&a_\lambda(u) = \int_{\mathbb R} \nu_\lambda(dz) z
\{(1+z)^{-iu+1}-1+(iu-1)z\} = \lambda^{-1} \int_{\mathbb R} \nu(dz) z
\{(1+\lambda z)^{-iu+1}-1+(iu-1)\lambda z\} \\&\text{and}\quad b_\lambda(u,v) =
\int_{\mathbb R} \nu_\lambda(dz)
\{(1+z)^{-iu+1}-1+(iu-1)z\}\{(1+z)^{-iv+1}-1+(iv-1)z\}\\
& \qquad \qquad \qquad = \lambda^{-2}\int_{\mathbb R} \nu(dz)
\{(1+\lambda z)^{-iu+1}-1+(iu-1)\lambda z\}\{(1+\lambda z)^{-iv+1}-1+(iv-1)\lambda
z\}
\end{align*}
From the explicit form of $\Phi^\lambda$,
\begin{align*}
\Phi^\lambda_t(u) = e^{t\psi_\lambda(u)},\qquad \psi_\lambda(u) &=
-\frac{\sigma^2}{2} (u^2+iu) + \lambda^{-2}\int_{\mathbb R}((1+\lambda
  z)^{iu}-1-iu\lambda z)\nu(dz)\\
& = -\frac{\sigma^2}{2} (u^2+iu) )- (u^2+iu)\int_{\mathbb
  R} z^2 \nu(dz)\int_0^1 ((1+\lambda
  z\theta)^{iu-2}(1-\theta) d\theta,
\end{align*}
we deduce, using the fact that $\nu$ has bounded support, that for
every $u\in \mathbb C$, 
$$
\psi^\lambda(u) \to \psi^0(u) = -\frac{u^2+iu}{2}\left(\sigma^2 +
  \int_{\mathbb R} z^2 \nu(dz)\right)\quad \text{as}\quad \lambda\to
0. 
$$
On the other hand, since 
$$
{(1+\lambda
  z)^{-iu+1}-1+(iu-1)\lambda z}={{iu (iu-1)}\lambda^2 z^2} \int_0^1 (1+\lambda\theta
 z)^{-iu-1} (1-\theta)d\theta,
$$
we get that 
$$
\lambda^{-1} a_\lambda(u)\to \frac{iu(iu-1)}{2}\int_{\mathbb R}z^3 \nu(dz) \quad
\text{and}\quad \lambda^{-2} b_\lambda(u,v)\to -\frac{uv(iu-1)(iv-1)}{4}\int_{\mathbb R}z^4 \nu(dz)
$$
as $\lambda \to 0$. 
Therefore, provided that we can find an integrable bound to apply the
dominated convergence theorem, $\lambda^{-2}\langle
M^\lambda\rangle_T$ converges to 
\begin{align*}
& \frac{1}{4}\left(m_4 -
  \frac{m_3^2}{\bar\sigma^2}\right)\frac{1}{4\pi^2}  \int_{(0,\infty)}\mu(dK) \int_{(0,\infty)}\mu(d\bar K)\int_0^T dt \int_{\mathbb R} dv\int_{\mathbb  R} du \,
K^{iu}\bar K^{iv}\\& \qquad \qquad \times\Phi^0_{T-t}(-u-i)\Phi^0_{T-t}(-v-i) \Phi^0_t(-u-v-2i)\\
& = \frac{1}{4}\left(m_4 - \frac{m_3^2}{\bar\sigma^2}\right) \mathbb E^{BS}\left[\int_0^T \left(S_t^2
    \frac{\partial P_{BS}(t,S_t)}{\partial S^2}\right)^2 dt\right]
\end{align*}
as $\lambda\to 0$. 

We first consider the case $\sigma>0$. 
Remark first that by the bound on the jumps of $X$, 
$$
\left|\int_0^1 (1+\lambda\theta
 z)^{-iu-1} (1-\theta)d\theta\right| \leq \frac{1}{2(1-\delta)},
$$
so that it remains to find an integrable (in $u$, $v$ and $t$) bound for 
$$
\Phi^\lambda_{T-t}(-u-i)\Phi^\lambda_{T-t}(-v-i) \Phi^\lambda_t(-u-v-2i) 
$$
However, in this case, 
\begin{align*}
&\left| \Phi^\lambda_{T-t}(-u-i)\Phi^\lambda_{T-t}(-v-i)
  \Phi^\lambda_t(-u-v) \right|\leq e^{-\frac{1}{2}(T-t)\sigma^2
  (u^2+v^2) -\frac{1}{2}t\sigma^2 (u+v)^2 + t \bar \sigma^2},
\end{align*}
which is integrable since 
$$
\int_{\mathbb R} du \int_{\mathbb R} dv e^{-\frac{1}{2}(T-t)\sigma^2
  (u^2+v^2) -\frac{1}{2}t\sigma^2 (u+v)^2}  = \int_{\mathbb  R} du\, e^{-\frac{1}{2}T\sigma^2
  u^2} \int_{\mathbb R} dv \, e^{ -\frac{1}{2}T \sigma^2
  v^2(1-\frac{t^2}{T^2}) } = \frac{2\pi}{\sigma^2 \sqrt{T^2-t^2}}. 
$$

Let us now consider the case $\sigma = 0$. We shall use the bound \eqref{boundlambda.eq}. 
In addition, 
\begin{align*}
|b_\lambda(u,v)|&\leq \lambda^{-2}\left(\int \nu(dz)|(1+\lambda
  z)^{-iu+1}-1+(iu-1)\lambda z|^2\right)^{\frac{1}{2}} \\ &\times \left(\int \nu(dz)|(1+\lambda
  z)^{-iv+1}-1+(iv-1)\lambda z|^2\right)^{\frac{1}{2}}
\end{align*}
and it is easy to show, using arguments similar to those used to prove
the bound \eqref{boundlambda.eq} that for some constant $C<\infty$, 
\begin{align*}
&\int \nu(dz)|(1+\lambda z)^{-iu+1}-1+(iu-1)\lambda z|^2 \leq C \lambda^4 u^2 (u^2+1) 1_{|\lambda u|
  \leq \ell} +C \lambda^2 u^2 1_{|\lambda u
  |>\ell},
\end{align*}
where the constant $\ell$ may be taken the same as in the bound
\eqref{boundlambda.eq}. Similarly, by the Cauchy-Schwarz inequality, 
\begin{align*}
|a_\lambda(u)a_\lambda(v)|&\leq \bar \sigma^2\lambda^{-2}\left(\int \nu(dz)|(1+\lambda
  z)^{-iu+1}-1+(iu-1)\lambda z|^2\right)^{\frac{1}{2}}\\ &\times \left(\int \nu(dz)|(1+\lambda
  z)^{-iv+1}-1+(iv-1)\lambda z|^2\right)^{\frac{1}{2}},
\end{align*}
so that to complete the proof it suffices to study the integral 
\begin{align}
\int_{\mathbb R} dv\int_{\mathbb  R} du
\,\left|\Phi^\lambda_{T-t}(-u-i)\Phi^\lambda_{T-t}(-v-i)
  \Phi^\lambda_t(-u-v-2i)\right|(\mathbf 1_{|\lambda u\leq l|} +
\frac{\mathbf 1_{\lambda u >l}}{\lambda\sqrt{1+u^2}}) ( \mathbf 1_{|\lambda v\leq l|} +
\frac{ \mathbf 1_{\lambda v >l}}{\lambda\sqrt{1+v^2}}).\label{studyint.eq}
\end{align}
We shall decompose it into four terms corresponding to integrals over
non-disjoint sets (whose union is $\mathbb R^2$)
$\{|u\lambda|\leq 2\ell, |v\lambda|\leq 2\ell\}$, $\{|u\lambda|> 2\ell,
|v\lambda|\leq \ell\}$, $\{|u\lambda|\leq \ell, |v\lambda|> 2\ell\}$ and
$\{|u\lambda|> \ell, |v\lambda|> \ell\}$, and show that on the first
set one can apply the dominated convergence theorem, and the
contribution of the three other sets to the limit is zero. On the first set, the integrand is
bounded as follows:
$$
\left|\Phi^\lambda_{T-t}(-u-i)\Phi^\lambda_{T-t}(-v-i)
  \Phi^\lambda_t(-u-v-2i)\right|\mathbf 1_{\{|u\lambda|\leq \ell,
  |v\lambda|\leq \ell\}} \leq e^{-c(T-t)
  (u^2+v^2) -ct (u+v)^2},
$$
which is integrable in $u,v$ and $t$. Hence, by the dominated
convergence theorem,
\begin{align*}
&\frac{\lambda^{-2}}{4\pi^2} \int_{(0,\infty)}\mu(dK) \int_{(0,\infty)}\mu(d\bar K)\int_0^T dt \int_{|v\lambda |\leq 2\ell}
dv\int_{|u\lambda|\leq 2\ell} du \,
K^{iu}\bar K^{iv}\\ &\qquad \qquad \times\frac{\Phi^\lambda_{T-t}(-u-i)\Phi^\lambda_{T-t}(-v-i)
  \Phi^\lambda_t(-u-v-2i)}{-uv(iu-1)(iv-1)}\left\{-\frac{a_\lambda(u)a_\lambda(v)}{\bar
  \sigma^2} + b_\lambda(u,v) \right\} \\ &\to \frac{1}{4}\left(m_4 - \frac{m_3^2}{\bar\sigma^2}\right) \mathbb E^{BS}\left[\int_0^T \left(S_t^2
    \frac{\partial P_{BS}(t,S_t)}{\partial S^2}\right)^2 dt\right]
\end{align*}
as $\lambda\to 0$. It remains to show that the other three sets give a
zero contribution to the limit. 

On the set $\{|u\lambda|> 2\ell,
|v\lambda|\leq \ell\}$, the integrand in \eqref{studyint.eq} is bounded by:
\begin{align*}
&\frac{\lambda^{-1}}{\sqrt{1+u^2}} e^{-c (T-t)\lambda^{\beta-2} |u|^\beta -
  c(T-t)v^2  - ct\lambda^{\beta-2}|u+v|^\beta }\leq \frac{\lambda^{-1}}{\sqrt{1+u^2}} e^{-c (T-t)\lambda^{\beta-2} |u|^\beta -
  c(T-t)v^2  - ct\lambda^{\beta-2}|v|^\beta }\\
&\leq \frac{\lambda^{-1}}{\sqrt{1+u^2}} e^{-c (T-t)\lambda^{\beta-2} |u|^\beta   - cT(\ell^{\beta -2}\wedge 1) |v|^2 }.
\end{align*}
On the other hand,
\begin{align*}
&\int_{|u\lambda|> 2\ell} \frac{du}{\lambda \sqrt{1+u^2}} e^{-c
  (T-t)\lambda^{\beta-2} |u|^\beta} \leq  2\int_{ 2\ell}^\infty
\frac{du}{\lambda u} e^{-c
  (T-t)\lambda^{-2} |u|^\beta} = \frac{1}{\sqrt{T-t}}f\left(\frac{T-t}{\lambda^2}\right),
\end{align*}
where
\begin{align*}
f(\theta) =2\sqrt{\theta} \int_{2\ell \theta^{1/\beta}}^\infty
\frac{du}{u} e^{-c|u|^\beta}
\end{align*} 
Note that $f$ is a bounded positive function and $f(\theta)\to 0$ as
$\theta \to \infty$. Therefore, by the dominated convergence theorem,
$$
\int_0^T dt \int_{|u\lambda|> 2\ell} du \int_{|v\lambda|\leq \ell}  \left|\Phi^\lambda_{T-t}(-u-i)\Phi^\lambda_{T-t}(-v-i)
  \Phi^\lambda_t(-u-v-2i)\right|\frac{1}{\lambda \sqrt{1+u^2}} \to 0
$$
as $\lambda \to 0$. 

The set $\{|u\lambda|\leq \ell,
|v\lambda|> 2\ell\}$ can be dealt with in the same manner. Finally, on
the set $\{|u\lambda|>\ell,
|v\lambda|> \ell\}$, the integrand in \eqref{studyint.eq} is bounded
by 
\begin{multline}
\frac{\lambda^{-2}}{\sqrt{(1+u^2)(1+v^2)}} \Big\{e^{-c (T-t)\lambda^{\beta-2} |u|^\beta -
  c(T-t) \lambda^{\beta-2} |v|^\beta -ct |u+v|^2}\mathbf 1_{\lambda|u+v|\leq\ell } \\+e^{-c (T-t)\lambda^{\beta-2} |u|^\beta -
  c(T-t) \lambda^{\beta-2} |v|^\beta  - ct \lambda^{\beta-2}|u+v|^\beta } \Big\}\label{largeuv}
\end{multline}
With a change of variable $u = \frac{x+y}{2}$, $v = \frac{x-y}{2}$,
using the convexity inequality
$$
\left|\frac{x+y}{2}\right|^\beta+\left|\frac{x-y}{2}\right|^\beta\geq
c_\beta(|x|^\beta + |y|^\beta),
$$ 
the integral of the first term above satisfies
\begin{align*}
&\int_{|u\lambda|\geq \ell}du \int_{|v\lambda|\geq \ell}dv \frac{\lambda^{-2}}{\sqrt{(1+u^2)(1+v^2)}} e^{-c (T-t)\lambda^{\beta-2} |u|^\beta -
  c(T-t) \lambda^{\beta-2} |v|^\beta -ct |u+v|^2}\mathbf
1_{\lambda|u+v|\leq\ell } \\
&\leq \int_{\lambda|x+y|>2\ell }dx \int_{\lambda|x-y|>2\ell}dy
\frac{\lambda^{-2}}{(1+|x+y|)(1+|x-y|)} e^{-c c_\beta (T-t)\lambda^{\beta-2} |x|^\beta -
  cc_\beta(T-t) \lambda^{\beta-2} |y|^\beta -ct |x|^2}\mathbf
1_{\lambda|x|\leq\ell } \\
&\leq \int_{\lambda|y|>\ell }dy \int_{\lambda|x|>\ell}dx
\frac{\lambda^{-2}}{(1+|x+y|)(1+|x-y|)} e^{-c(c_\beta\wedge 1)T |x|^2}
\leq \frac{C}{\lambda^2}\int_{|x|>\ell/\lambda}  dx \,e^{-c(c_\beta\wedge 1)T |x|^2},
\end{align*}
for some constant $C<\infty$, where we used Lemma 2 in \cite{broden.tankov.09}.
It is clear that this expression converges to $0$ as $\lambda \to 0$.

Finally, 
using the same change of variable as above, and once again, Lemma 2 in
\cite{broden.tankov.09}, the integral of the second term in \eqref{largeuv} satisfies,
\begin{align*}
&\int_{|u\lambda|\geq \ell}du\int_{|v\lambda|\geq \ell}dv
\frac{\lambda^{-2}}{\sqrt{(1+u^2)(1+v^2)}} e^{-c (T-t)\lambda^{\beta-2} |u|^\beta -
  c(T-t) \lambda^{\beta-2} |v|^\beta  - ct
  \lambda^{\beta-2}|u+v|^\beta }\\
& \leq   \int_{\lambda|x+y|>2\ell }dx \int_{\lambda|x-y|>2\ell}dy
\frac{\lambda^{-2}}{(1+|x+y|)(1+|x-y|)} e^{-c (c_\beta \wedge 1)T\lambda^{\beta-2} |x|^\beta }\\
&\leq \frac{C}{\lambda^2}\int_{\lambda|x|>\ell }dx \,e^{-c (c_\beta
  \wedge 1)T\lambda^{\beta-2} |x|^\beta } = \frac{C}{\lambda^3}\int_{|x|>\ell }dx \,e^{-c (c_\beta
  \wedge 1)T\lambda^{-2} |x|^\beta } = {C
  \lambda^{{2\over\beta} - 3}}\int_{|x|>\ell \lambda^{-{2\over \beta}} }dx \,e^{-c (c_\beta
  \wedge 1)T |x|^\beta }
\end{align*}
which clearly goes to zero as $\lambda \to 0$.

To perform the computation for the put option pay-off, recall that in
the Black-Scholes model, $S_t= S_0 e^{-\frac{\sigma^2}{2}t +\sigma
  W_t}$ and $\frac{\partial^2 \text{Put}^K(t,S)}{\partial S^2} =
\frac{\phi(d_1(t))}{S\sigma\sqrt{T-t}}$ with $\phi(x) =
\frac{e^{-\frac{x^2}{2}}}{\sqrt{2\pi}}$. Therefore,
\begin{align*}
\mathbb{E}^{BS} \left[ \int_0^T \left(S_t^2 \frac{\partial^2 \text{Put}^K(t,S)}{\partial S^2} \right)^2 dt\right]
&= \frac{K^2}{2\pi \sigma^2} \int_0^T \mathbb{E}^{BS}  
\left[ e^{-\left(\frac{\log (S_t/K)}{\sigma\sqrt{T-t}}-\frac{\sigma\sqrt{T-t}}{2} \right)^2} \right] {
\frac{dt}{T-t}}\\
&= \frac{K^2}{2\pi \sigma^2} \int_0^T \mathbb{E}^{BS}  
\left[ e^{-\left(\frac{\log (S_0/K)- \frac{\sigma^2 T}{2} + \sigma W_t}{\sigma\sqrt{T-t}}\right)^2} \right] {
\frac{dt}{T-t}}.
\end{align*}
It remains to perform the explicit integration with the Gaussian density to get
the result.
\section{Proof of Lemma \ref{linear.lm}}
In this proof, 
we denote $X^{BS}_t = \bar \sigma W_t$, where $W$ is a standard Brownian motion
independent from $X^\lambda$. 
For $\lambda>0$, define 
$$f_\lambda(t,x)=
\mathbb{E}[h(x\mathcal{E}(X^\lambda)_t\mathcal{E}(X^{BS})_{T-t})],$$
and for $\varepsilon>0$, let 
\begin{align*}
&h^\varepsilon(x) =
\mathbb{E}[h(x\mathcal{E}(X^{BS})_\varepsilon)],\qquad
P^\varepsilon_{BS}(t,x) = \mathbb E[h^\varepsilon(x \mathcal
E(X^{BS})_{T-t})] = h^{\varepsilon + T-t}(x),\\
 & f_\lambda^\varepsilon(t,x)=
\mathbb{E}[h^\varepsilon(x\mathcal{E}(X^\lambda)_t\mathcal{E}(X^{BS})_{T-t})].
\end{align*}
These functions are well defined because $X^\lambda$ has bounded jumps
and therefore all moments of $\mathcal E(X^\lambda)_t$ are finite. 
Without loss of generality we shall also take $S_0=1$ below.

From It\={o}  formula, using item \ref{itemBS} of Lemma \ref{lemcv}, 
\begin{align*}
P^\varepsilon_{BS}(T,S_T^\lambda) &= P^\varepsilon_{BS}(0,1)
+ \int_0^T \frac{\partial P^\varepsilon_{BS}}{\partial t} dt + \int_0^T \frac{\partial P^\varepsilon_{BS}}{\partial S} S_t^\lambda \sigma dW_t 
+ \int_0^T \frac{1}{2}\sigma^2(S_t^\lambda)^2 \frac{\partial^2 P^\varepsilon_{BS}}{\partial S^2}dt\\
&+ \sum \{P^\varepsilon_{BS}(t,S_{t-}^\lambda(1+z)) - P^\varepsilon_{BS}(t,S_{t-}^\lambda) 
-zS_t^\lambda\frac{\partial P^\varepsilon_{BS}}{\partial S}(t,S_{t}^\lambda)\}  J^\lambda(dsdz).
\end{align*}
Taking the expectation and using the Black-Scholes equation and
Fubini's theorem justified by item \ref{itemfl} of Lemma \ref{lemcv}, we get:
\begin{align*}
&\mathbb{E} [h^\varepsilon(S_T^\lambda)] - P^\varepsilon_{BS}(0,1)
\\ &= \mathbb{E}\int_0^T \int_{\mathbb{R}} 
\left\lbrace P^\varepsilon_{BS}(t,S_t^\lambda(1+z))-P^\varepsilon_{BS}(t,S_t^\lambda) -zS_t^\lambda\frac{\partial P^\varepsilon_{BS}}{\partial S}
-\frac{1}{2}z^2(S_t^\lambda)^2\frac{\partial^2
  P^\varepsilon_{BS}}{\partial S^2} \right\rbrace \nu_\lambda(dz)dt\\
& = \int_0^T \int_{\mathbb{R}} 
\left\lbrace f^\varepsilon_\lambda(t,1+z)-
  f^\varepsilon_\lambda(t,1)-z (\partial_x f_\lambda^\varepsilon)(t,1) -
  \frac{z^2}{2} (\partial^2_x f_\lambda^\varepsilon)(t,1) \right\rbrace
\nu_\lambda(dz)dt\\
& = \int_0^T \int_{\mathbb{R}} \left\{ \frac{z^3}{6} (\partial^3_x
  f_\lambda^\varepsilon)(t,1)  + \int_0^1 \frac{z^4}{6} (1-\theta)^3 (\partial^4_x f_\lambda^\varepsilon)(t,1+z\theta) 
d\theta \right\}\nu_\lambda(dz)dt
\end{align*}

Define $\tilde{f}_\lambda^\varepsilon(t,x) = (\partial^3_x
  f_\lambda^\varepsilon)(t,x)=
\mathbb{E}[\tilde{h}^\varepsilon(x\mathcal{E}(X^\lambda)_t\mathcal{E}(X^{BS})_{T-t})]$,
where $\tilde{h}^\varepsilon(x) = x^3 (\partial_x^3 h^\varepsilon)
(x)$. Then, using again a Taylor-Lagrange expansion, we get  for all $t\in[0,T]$
\begin{align*}
\tilde{f}_\lambda^\varepsilon(t,1)&= \tilde{f}_\lambda^\varepsilon(0,1) +\int_0^t \int_\mathbb{R} \lbrace \tilde{f}_\lambda^\varepsilon(s,1+z)-\tilde{f}_\lambda^\varepsilon(s,1)-z (\partial_x \tilde{f}_\lambda^\varepsilon)(s,1) 
- \frac{z^2}{2} (\partial^2_x \tilde{f}_\lambda^\varepsilon)(s,1) \rbrace \nu_\lambda^\varepsilon(dz) ds\\
&=\frac{\partial^3 P^\varepsilon_{BS}}{\partial S^3}(0,1)
+ \int_0^t \int_\mathbb{R} \int_0^1 \frac{\lambda z^3}{2} (1-\theta)^2 (\partial_x^3 \tilde{f}_\lambda^\varepsilon)(s,1+ \lambda z \theta)  d\theta \nu(dz) ds.
\end{align*}
Substituting this representation into the above formula, we obtain
\begin{align*}
\mathbb{E} [h^\varepsilon(S_T^\lambda)]= P^\varepsilon_{BS}(0,1) &+ \frac{\lambda m_3T}{6} \frac{\partial^3 P^\varepsilon_{BS}}{\partial S^3}(0,1)
+ \lambda^2 \int_0^T  dt \int_\mathbb{R} \frac{z^4}{6} \nu(dz) \int_0^1 (1-\theta)^3  (\partial_x^4 f_\lambda^\varepsilon)(t,1+\lambda z\theta)d\theta \\
&+\frac{\lambda^2 m_3}{6} \int_0^T  dt \int_0^t ds \int_\mathbb{R}\frac{z^3}{2}  \nu(dz) \int_0^1
(1-\theta)^2 (\partial_x^3 \tilde{f}_\lambda^\varepsilon)(s,1+\lambda z \theta)d\theta .
\end{align*}
Note that $\forall (s,x)\in [0,T]\times \mathbb{R}_+$
\[(\partial_x^3 \tilde{f}_\lambda^\varepsilon)(s,x) 
= 6  (\partial_x^3 f_\lambda^\varepsilon)(s,x)
+ 18x  (\partial_x^4 f_\lambda^\varepsilon)(s,x)
+ 9x^2  (\partial_x^5 f_\lambda^\varepsilon)(s,x)
+x^3 (\partial_x^6 f_\lambda^\varepsilon)(s,x).\]
Now we use item \ref{itemeps} of Lemma \ref{lemcv} to make $\varepsilon$ go to zero,
obtaining
\begin{align*}
\mathbb{E} [h(S_T^\lambda)] = P_{BS}(0,1) &+ \frac{\lambda m_3T}{6} \frac{\partial^3 P^\varepsilon_{BS}}{\partial S^3}(0,1)
+ \lambda^2 \int_0^T  dt \int_\mathbb{R} \frac{z^4}{6} \nu(dz) \int_0^1 (1-\theta)^3  (\partial_x^4 f_\lambda)(t,1+\lambda z\theta)d\theta \\
&+\frac{\lambda^2 m_3}{6} \int_0^T  dt \int_0^t ds \int_\mathbb{R}\frac{z^3}{2}  \nu(dz) \int_0^1
(1-\theta)^2 (\partial_x^3 \tilde{f}_\lambda)(s,1+\lambda z \theta)d\theta.
\end{align*}
To finish the proof, we use the dominated convergence theorem
(justified by items \ref{itemfl} and \ref{itemlambda} of Lemma \ref{lemcv}) to show that 
\begin{align*}
\int_0^T  dt \int_\mathbb{R} \frac{z^4}{6} \nu(dz) \int_0^1
(1-\theta)^3  (\partial_x^4 f_\lambda)(t,1+\lambda z\theta)d\theta& \to \int_0^T  dt \int_\mathbb{R} \frac{z^4}{6} \nu(dz) \int_0^1
(1-\theta)^3  (\partial_x^4 f_0)(t,1)d\theta\\
& =\frac{m_4 T}{24} (\partial_x^4 f_0)(0,1)  = \frac{m_4
  T}{24}  { \frac{\partial^4 P_{BS}}{\partial S^4}(0,1)}
\end{align*}
and similarly
\begin{align*}
&\int_0^T  dt \int_0^t ds \int_\mathbb{R}\frac{z^3}{2}  \nu(dz) \int_0^1
(1-\theta)^2 (\partial_x^3 \tilde{f}_\lambda)(s,1+\lambda z
\theta)d\theta \to \frac{m_3 T^2}{12} (\partial_x^3 \tilde f_0)(0,1)  \\
& = \frac{m_3 T^2}{12}\left\{6 P^{(3)}_{BS}(1) + 18 
  P^{(4)}_{BS}(1) + 9 P^{(5)}_{BS}(1) +  P^{(6)}_{BS}(1)\right\}.
\end{align*}

\begin{lem} 
\label{lemcv} 
Let the assumptions of Lemma \ref{linear.lm} hold true. 
Then, for all $0\leq k\leq 6$
\begin{enumerate}
\item $P^\varepsilon_{BS} \in C^{1,2}([0,T]\times (0,\infty))$. \label{itemBS}
\item $(\partial_x^k h^\varepsilon)(x) $ exists, is continuous and has
  polynomial growth in $x$ for all
  $\varepsilon>0$.
\item \label{itemfl} $(\partial_x^k f^\varepsilon_\lambda)(t,x) $ exists, is
  continuous in $x$ and satisfies
$$
|(\partial_x^k f^\varepsilon_\lambda)(t,x)|\leq C(1+ |x|^n)
$$
for some $n\geq 0$ and a constant $C$ which does not depend on $t$,
$\varepsilon$ or $\lambda$. 
\item For all $t,x$, $(\partial_x^k f_\lambda)(t,x) \to (\partial_x^k h^T)(x)$ as $\lambda \to 0$.\label{itemlambda}
\item For all $t$, $x$, $\lambda$, $(\partial_x^k f_\lambda^\varepsilon)(t,x) \to (\partial_x^k f_\lambda)(t,x)$ as $\varepsilon \to 0$.\label{itemeps}
\end{enumerate}
\end{lem}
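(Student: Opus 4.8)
The plan is to reduce all five items to a single uniform bound on the smoothing density and then read off regularity and convergence from the Fourier inversion formula. The starting point is to merge the two independent Brownian factors: since $X^{BS}$ and the extra Brownian motion hidden in $h^\varepsilon$ are independent of each other and of $X^\lambda$, one has $f^\varepsilon_\lambda(t,x)=\mathbb{E}[h(x\,\mathcal E(X^\lambda)_t\,\mathcal E(X^{BS})_{T-t+\varepsilon})]$, so that the total Gaussian smoothing is of size $\delta:=T-t+\varepsilon\ge 0$. Writing $\rho_{t,\delta,\lambda}$ for the density of $\log\mathcal E(X^\lambda)_t+\log\mathcal E(X^{BS})_\delta$, whose characteristic function is $\chi_{t,\delta,\lambda}(u)=\exp\{t\psi_\lambda(u)-\tfrac12\bar\sigma^2\delta(u^2+iu)\}$ with $\psi_\lambda$ the exponent appearing in the proof of Lemma \ref{gamma.lm}, I would record the basic representation $f^\varepsilon_\lambda(t,x)=\int_{\mathbb R}h(xe^y)\rho_{t,\delta,\lambda}(y)\,dy$, together with $h^\varepsilon(x)=\int h(xe^y)\rho^{BS}_\varepsilon(y)\,dy$ and $P^\varepsilon_{BS}(t,x)=h^{\,\varepsilon+T-t}(x)$.

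The heart of the proof is the following master estimate: for every $j\le 6$ and every $v\in\mathbb R$ there is a constant $C_{j,v}$, independent of $t\in[0,T]$, of $\delta\ge 0$ and of $\lambda\in(0,1]$, such that $\int_{\mathbb R}|u|^j\,|\chi_{t,\delta,\lambda}(u+iv)|\,du\le C_{j,v}$. To prove it I would split $[0,T]$ according to whether $t\le T/2$ or $t\ge T/2$. On $\{t\le T/2\}$ the Gaussian factor alone gives $|\chi(u+iv)|\le C_v\,e^{-\frac14\bar\sigma^2 T u^2}$, which is uniformly integrable against any power of $u$. On $\{t\ge T/2\}$ the Gaussian factor may degenerate (at $t=T$, $\delta\to0$), so I would instead bound $t\,\Re\psi_\lambda(u+iv)\le\tfrac T2\Re\psi_\lambda(u+iv)$ and invoke the decay of $\Re\psi_\lambda$: the estimate $\Re\psi_\lambda(u+iv)\le C-c\,u^2$ when $\sigma>0$ (directly from the $-\sigma^2u^2/2$ term, as in the proof of Lemma \ref{gamma.lm}) and the bound \eqref{boundlambda.eq} when $\sigma=0$. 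In the latter case the decay is $-c|u|^\beta\lambda^{\beta-2}$, which is worst at $\lambda=1$ (giving $e^{-c|u|^\beta}$, still integrable against $|u|^j$) and only improves as $\lambda\downarrow0$; crucially the constants in \eqref{boundlambda.eq} do not depend on $\lambda$. This $t$-dependent splitting, in which the Brownian smoothing covers small $t$ and the intrinsic smoothing of $X^\lambda$ covers $t$ near $T$, is exactly what delivers uniformity in $\varepsilon$ (since $\delta\ge0$ can only help), and is the step I expect to be the main obstacle.

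With the master estimate in hand the remaining items become routine. For items 1--3 I would differentiate $f^\varepsilon_\lambda(t,x)=\int h(xe^y)\rho_{t,\delta,\lambda}(y)\,dy$ under the integral, transferring the $x$-derivatives onto $\rho$; the resulting derivatives $\rho^{(j)}$ are represented by Fourier inversion of $(iu)^j\chi$, and shifting the inversion contour to $u+iv$ with $v$ larger than the growth exponent of $h$ makes $\rho^{(j)}(y)$ decay like $e^{-vy}$ as $y\to+\infty$ (the shift is legitimate because $\mathcal E(X^\lambda)_t$ has all exponential moments, $X^\lambda$ having bounded jumps, and because \eqref{boundlambda.eq} already controls the shifted argument). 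This exponential decay beats both the chain-rule factor $x^{-j}$ and the polynomial growth of $h$, yielding $|\partial_x^k f^\varepsilon_\lambda(t,x)|\le C(1+|x|^n)$ with $C,n$ independent of $t,\varepsilon,\lambda$, and in particular boundedness as $x\downarrow0$. Item 2 is the special case with no $X^\lambda$ factor and $\delta=\varepsilon$ (pure Gaussian smoothing), and item 1 follows since $P^\varepsilon_{BS}(t,\cdot)=h^{\varepsilon+T-t}$ is smooth in $x$ with smoothing parameter $\varepsilon+T-t\ge\varepsilon>0$ and smooth in $t$ through this parameter, so that $P^\varepsilon_{BS}\in C^{1,2}$ and satisfies the Black--Scholes equation used in the proof of Lemma \ref{linear.lm}.

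Finally, items 4 and 5 are pure dominated-convergence arguments in the Fourier variable, dominated by the master estimate. As $\varepsilon\downarrow0$ the factor $e^{-\frac12\bar\sigma^2\varepsilon(u^2+iu)}\to1$ pointwise, giving $\partial_x^k f^\varepsilon_\lambda\to\partial_x^k f_\lambda$. As $\lambda\downarrow0$ one uses the pointwise convergence $\psi_\lambda(u)\to\psi_0(u)=-\frac12\bar\sigma^2(u^2+iu)$ established in the proof of Lemma \ref{nonlinear.lm}, so that $\chi_{t,T-t,\lambda}(u)\to\exp\{-\frac12\bar\sigma^2 T(u^2+iu)\}$, the characteristic function of $\log\mathcal E(X^{BS})_T$; dominated convergence in $u$ then yields $\partial_x^k f_\lambda(t,x)\to\partial_x^k h^T(x)$, which is the convergence needed to pass to the limit under the outer integrals of Lemma \ref{linear.lm}.
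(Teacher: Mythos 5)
Your proposal is correct and follows essentially the same route as the paper: Fourier inversion of the density of $\log\left(\mathcal E(X^\lambda)_t\,\mathcal E(X^{BS})_{T-t+\varepsilon}\right)$, a bound on the shifted characteristic function that is uniform in $(t,\varepsilon,\lambda)$ because the Gaussian factor controls $t$ away from $T$ while \eqref{boundlambda.eq} (or the diffusion term when $\sigma>0$) controls $t$ near $T$, contour shifts to absorb the polynomial growth of $h$, and dominated convergence in the Fourier variable for items 4 and 5. The only difference is in the implementation of the weighted estimate: you bound $\int |u|^j|\chi_{t,\delta,\lambda}(u+iv)|\,du$ directly, obtaining pointwise exponential decay of $\partial_x^k p_\lambda^\varepsilon$, whereas the paper bounds $\int e^{p|x|}|\partial_x^k p_\lambda^\varepsilon(t,x)|\,dx$ via Jensen's inequality and Plancherel's theorem applied to the same shifted characteristic functions -- the underlying estimate and its use are identical.
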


\begin{proof}
Under our assumptions, the random variable
$$
\log(\mathcal E(X^\lambda)_t \mathcal E(X^{BS})_{T-t+\varepsilon})
$$
admits a density $p_\lambda^\varepsilon(t,x)$ which can be recovered
via Fourier inversion:
$$
p_\lambda^\varepsilon(t,x) = \frac{1}{2\pi}\int_{\mathbb R} e^{-iux}
\Phi^{BS}_{T-t+\varepsilon}(u) \Phi^\lambda_t(u)du. 
$$
By the bound \eqref{boundlambda.eq} and the explicit form of
$\Phi^{BS}$, we conclude that the derivatives of
$p_\lambda^\varepsilon(t,x)$ with respect to $x$ of any order are
continuous and given by 
$$
\partial^k_x p_\lambda^\varepsilon(t,x) = \frac{1}{2\pi}\int_{\mathbb
  R} (-iu)^k e^{-iux}
\Phi^{BS}_{T-t+\varepsilon}(u) \Phi^\lambda_t(u)du.
$$
By Jensen's inequality and Plancherel's theorem, for any $p\geq 0$, 
\begin{align*}
&\int_{\mathbb R}|\partial^k_x p_\lambda^\varepsilon(t,x)| e^{p|x|} dx
= \int_{\mathbb R}|\partial^k_x p_\lambda^\varepsilon(t,x)
e^{(p+2)|x|}|e^{-|x|} dx \leq   \left(\int_{\mathbb R}|\partial^k_x p_\lambda^\varepsilon(t,x)
|^2 e^{(2p+2)|x|}\right)^{\frac{1}{2}} \\ &\leq \left(\int_{\mathbb R}|\partial^k_x p_\lambda^\varepsilon(t,x)
|^2 e^{(2p+2)x}dx\right)^{\frac{1}{2}}+\left(\int_{\mathbb R}|\partial^k_x p_\lambda^\varepsilon(t,x)
|^2 e^{-(2p+2)x}dx\right)^{\frac{1}{2}}\\
& = \left(\frac{1}{2\pi}\int_{\mathbb R}|
v-i(p+1)|^{2k} |\Phi^{BS}_{T-t+\varepsilon}(v-i(p+1))
\Phi^\lambda_t(v-i(p+1))|^2 dv\right)^{\frac{1}{2}}\\ 
& + \left(\frac{1}{2\pi}\int_{\mathbb R}|
v+i(p+1)|^{2k} |\Phi^{BS}_{T-t+\varepsilon}(v+i(p+1))
\Phi^\lambda_t(v+i(p+1))|^2 dv\right)^{\frac{1}{2}}<\infty. 
\end{align*}
Consider for example, the second term. Using the bound
\eqref{boundlambda.eq}, it satisfies, for some constant $C<\infty$,
\begin{align}
&\int_{\mathbb R}|
v+i(p+1)|^{2k} |\Phi^{BS}_{T-t+\varepsilon}(v+i(p+1))
\Phi^\lambda_t(v+i(p+1))|^2 dv\notag\\ &\leq C\int_{\mathbb R}(1+|v|^{2k})
e^{-(T-t+\varepsilon)\bar \sigma^2 v^2 - ct |v|^2 \mathbf
  1_{|v\lambda|\leq \ell} - ct |v|^\beta \lambda^{\beta-2}
  \mathbf 1_{|v\lambda|>\ell}} dv\notag\\
&\leq C\int_{\mathbb R}(1+|v|^{2k})
e^{-(T-t)\bar \sigma^2 v^2}(e^{- ct |v|^2} + e^{- ct |v|^\beta}) dv,\label{intbound}
\end{align}
which is easily seen to be bounded uniformly on $t$. Therefore, 
$\int_{\mathbb R}|\partial^k_x p_\lambda^\varepsilon(t,x)| e^{p|x|}
dx$ is bounded uniformly on $t$, $\varepsilon$ and $\lambda$. 

This means that the function $f^\varepsilon_\lambda$ is given by
$$
f^\varepsilon_\lambda(t,x) = \int_{\mathbb R} dz\, h(xe^z)
p^\varepsilon_\lambda (t,z).
$$
Instead of the function $f^\varepsilon_\lambda$ we shall, for notational
convenience, study the function $\bar f^\varepsilon_\lambda(t,x) =
f^\varepsilon_\lambda(t,e^x)$, which is therefore given by
$$
\bar f^\varepsilon_\lambda(t,x) = \int_{\mathbb R} dz\, h(e^{z+x})
p^\varepsilon_\lambda (t,z) = \int_{\mathbb R} dz\, h(e^{z})
p^\varepsilon_\lambda (t,z-x) 
$$
By dominated convergence, using the above estimate, we then get that 
$$
\partial^k_x \bar f^\varepsilon_\lambda(t,x) = (-1)^k\int_{\mathbb R} dz\,
h(e^z) \partial^k_x  p^\varepsilon_\lambda(t,z-x) = (-1)^k\int_{\mathbb R} dz\,
h(e^{x+z}) \partial^k_x  p^\varepsilon_\lambda(t,z) 
$$
exists, is continuous and has exponential growth in $x$, which means
that $\partial^k_x  f^\varepsilon_\lambda(t,x)$ has polynomial
growth uniformly on  {  $t$, $\varepsilon$ and $\lambda$}. This finishes the
proof of item \ref{itemfl}. 

To study the convergence in $\lambda$, remark that from the polynomial
growth of $h$, $|h(e^x)| \leq
e^{p|x|}$. Then, proceeding similarly to the above, we have
\begin{align*}
&|\partial^k_x \bar f_\lambda(t,x) - \partial^k_x \bar f_0(t,x)|\leq
C \int _{\mathbb R}e^{p|z|} |\partial^k_x  p_\lambda(t,z)
- \partial^k_x  p_0(t,z) |\\
&\leq C \left(\int_{\mathbb R}(1+|
v|^{2k}) |\Phi^{BS}_{T-t}(v-i(p+1))
(\Phi^\lambda_t(v-i(p+1))-\Phi^{BS}_t(v-i(p+1))|^2 dv\right)^{\frac{1}{2}}\\ 
& + C \left(\int_{\mathbb R}(1+|
v|^{2k}) |\Phi^{BS}_{T-t}(v+i(p+1))
(\Phi^\lambda_t(v+i(p+1))-\Phi^{BS}_t(v+i(p+1)))|^2 dv\right)^{\frac{1}{2}}.
\end{align*}
Consider for example the second term. It satisfies
\begin{align*}
&\int_{\mathbb R}(1+|
v|^{2k}) |\Phi^{BS}_{T-t}(v+i(p+1))
(\Phi^\lambda_t(v+i(p+1))-\Phi^{BS}_t(v+i(p+1)))|^2 dv \\
&\leq C \int_{\mathbb R}(1+|
v|^{2k}) e^{-(T-t)\bar \sigma^2 v ^2 }|\Phi^\lambda_t(v+i(p+1))-\Phi^{BS}_t(v+i(p+1)))|^2 dv
\end{align*}
Since $\Phi^\lambda_t(v+i(p+1))\to \Phi^{BS}_t(v+i(p+1))$ for all $v$
as $\lambda \to 0$, and an integrable bound can be found similarly to
\eqref{intbound}, we conclude using the dominated convergence theorem
that the above expression converges to $0$ as $\lambda \to 0$. This
finishes the proof of item \ref{itemlambda}. Other items can be proved
in a similar manner. 

\end{proof}


\section{A general formula for high-order Black-Scholes greeks}\label{greeks.app}

The risk neutral price process dynamics of the stock in the 
Black-Scholes model with zero interest rate and volatility $\sigma>0$ reads
\[dS_t = \sigma S_t dW_t, 
\]
where $W$ denotes a standard Brownian motion. In this model, the pricing function of a European call option with strike $K>0$ and maturity $T>0$ given by $P :  [0,T]\times\mathbb{R}_+ \to \mathbb{R}_+$ satisfies
\[P(t,s)=\mathbb{E}[(S_T-K)_{{+}}|S_t=s]= s\Phi(\delta_1(t,s)) - K\Phi(\delta_2(t,s)),
\]
where $\Phi$ and $\varphi$ are, respectively, the cumulative distribution function
and the density of the standard normal distribution, and the
coefficients $\delta_1$ and $\delta_2$ are defined by 
$$
\delta_1(t,s)=\frac{\log \frac{s}{K}+\frac{\sigma^2}{2}(T-t)}{\sigma\sqrt{T-t}},\qquad
\delta_2(t,s) = \delta_1(t,s) - \sigma \sqrt{T-t}.
$$
 We set for $n\in \mathbb{N}$,
\[d_n(t,s)= s^n\frac{\partial^n P}{\partial s^n}(t,s).
\]
The first two cash greeks can be computed by direct differentiation:
\begin{align*}
&d_1(t,s) =s\Phi (\delta_1(t,s))\\
&d_2(t,s)= s^2\frac{\varphi(\delta_1(t,s))}{s\sigma\sqrt{T-t}}.
\end{align*}
For higher order derivatives {of European call/put option prices}, the following recurrence relation holds
for all $ n\geq 0$:
\[d_{3+n}(t,s)= \sum_{k=0}^n C_k^n D_{n-k}(t,s) d_{2+k}(t,s) \]
where $C_k^n = \binom{n}{k} = \frac{n!}{k!(n-k)!}$ are the binomial
coefficients,
$$D_k(t,s)=(-1)^{k+1}k! \left[ \delta(t,s) - \frac{1}{\sigma^2
    (T-t)}\sum_{p=1}^k \frac{1}{p} \right]\quad \text{and}\quad \delta(t,s) = \frac{\delta_1(t,s)}{\sigma \sqrt{T-t}} +1.$$ 



This recurrence relation leads to the following {formulae} for the cash
greeks up to order $6$:
\begin{align*}
&d_3(t,s)= -d_2(t,s)\delta(t,s),\\
&d_4(t,s)= d_2(t,s)\left(\delta(t,s)-\frac{1}{\sigma^2 \tau} \right) -d_3(t,s)\delta(t,s),\\
&d_5(t,s)=-d_2(t,s)\left(2\delta(t,s) - \frac{3}{\sigma^2\tau}\right) + 2d_3(t,s) \left( \delta(t,s)-\frac{1}{\sigma^2\tau} \right)-d_4(t,s)\delta(t,s),\\
&d_6(t,s)= d_2(t,s)\left(6\delta(t,s)-\frac{11}{\sigma^2\tau}\right)
-3d_3(t,s)\left(2\delta(t,s)-\frac{3}{\sigma^2\tau}\right) \\ &\qquad \quad+3d_4(t,s)\left(\delta(t,s)-\frac{1}{\sigma^2\tau}\right) -d_5(t,s)\delta(t,s).
\end{align*}

\end{document}